\definecolor{light-gray}{gray}{0.95}
\definecolor{light-grayII}{gray}{0.85}
\def\d{\text{d}}
\def\bd {\boldsymbol}
\newcommand{\C}{\mathbb{C}}
\newcommand{\CP}{\mathbb{CP}}
\newcommand{\R}{\mathbb{R}}
\newcommand{\cI}{\mathcal{I}}
\newcommand{\Z}{\mathbb{Z}}
\newcommand{\dbar}{\bar\partial}
\newcommand{\e}{\mathrm{e}}
\newcommand{\cM}{\mathcal{M}}
\newcommand{\cO}{\mathcal{O}}
\newcommand{\rd}{\, \mathrm{d}}
\newcommand{\Pf}{\, \mathrm{Pf}}
\newcommand{\pf}{\text{Pf} \,}
\newcommand{\be}{\begin{equation}\label}
\newcommand{\ee}{\end{equation}}
\newcommand{\bea}{\begin{eqnarray}\label}
\newcommand{\eea}{\end{eqnarray}}
\newtheorem{thm}{Theorem}
\newtheorem{lemma}{Lemma}[section]
\begin{document}

\preprint{CERN-PH-TH-2015-267, DAMTP-2015-76 \\}

\title{One-loop amplitudes on the Riemann sphere}

\author{Yvonne Geyer$^1$, Lionel Mason$^1$, Ricardo Monteiro$^2$, Piotr Tourkine$^3$ \vspace{.4cm}
\\ \small{$^1$Mathematical Institute, University of Oxford, Woodstock Road, Oxford OX2 6GG, UK
\\ \vspace{.1cm}
$^2$CERN, Theory Group, Geneva, Switzerland
\\ \vspace{.1cm}
$^3$DAMTP, University of Cambridge, Wilberforce Road, Cambridge CB3 0WA, UK}}

\abstract{
The scattering equations provide a powerful framework for the study of
scattering amplitudes in a variety of 
theories. Their derivation from ambitwistor string theory led to
proposals for formulae at one loop on a torus for 10 dimensional
supergravity, and we recently showed how these can be reduced to the
Riemann sphere and checked in simple cases.  We also proposed
analogous formulae for other theories including maximal super-Yang-Mills theory and supergravity in
other dimensions at one loop. We give further details of these results and extend them in two directions. Firstly,
we propose new formulae for the one-loop integrands of Yang-Mills
theory and gravity in the absence of supersymmetry. These follow from
the identification of the states running in the loop as expressed in
the ambitwistor-string correlator. Secondly, we   give a systematic
proof of the non-supersymmetric formulae using the worldsheet  factorisation properties of the nodal
Riemann sphere underlying the scattering equations at one loop. Our formulae have the same decomposition under the recently introduced Q-cuts as one-loop integrands and hence give the correct amplitudes.
}

\maketitle

\section{Introduction} 
\paragraph{Background.}
From its inception, string theory has provided remarkable conceptual
simplification to the computation of scattering amplitudes, see for
example~\cite{Bern:1991aq} or more recently \cite{Mafra:2014gja} and
references therein.  
%
Twistor string theories
\cite{Witten:2003nn,Berkovits:2004hg,Skinner:2013xp}, provide models
that give rise to formulae
\cite{Roiban:2004yf,Cachazo:2012pz,Cachazo:2012kg,Cachazo:2012da} that
are worldsheet reformulations of conventional field theory
amplitudes. Not only do they benefit from the conceptual and technical
simplifications of string based ideas, but  they also take advantage of
the geometry of twistor space to exhibit properties of amplitudes that are not apparent from a space-time perspective.  These are now understood as examples of
ambitwistor-string theories \cite{Mason:2013sva,Casali:2015vta} that
underlie the formulae by Cachazo, He and Yuan (CHY) for massless
scattering amplitudes in a wide variety of theories
\cite{Cachazo:2013iea,Cachazo:2013iaa,Cachazo:2013hca,Cachazo:2013gna,Cachazo:2014nsa,Cachazo:2014xea}.
These formulae are often expressed in terms of moduli integrals, but
are essentially algebraic as there are as many delta functions as
integration variables, and the integrals localise to a sum of residues
supported at solutions to the \emph{scattering equations}.

Given $n$ null momenta $k_i$, the scattering equations determine $n$
points $\sigma_i$ on a Riemann sphere up to M\"obius
transformations. Introduced first by Fairlie and Roberts
\cite{Fairlie:1972zz,robertsphd,Fairlie:2008dg} to construct classical minimal
surfaces as string solutions, they also determine the saddle-point of high-energy string scattering
 \cite{Gross:1987ar}. 
More recently, they were found to  underpin the remarkable formulae for tree-level scattering
amplitudes in gauge theory and gravity that arise from twistor-string
theories \cite{Witten:2004cp} and the more recent CHY formulae 
\cite{Cachazo:2013hca}.  The derivation of these formulae from ambitwistor string theories \cite{Mason:2013sva} led to
proposals for formulae for loop amplitudes on higher-genus Riemann surfaces by
Adamo, Casali and Skinner (ACS) \cite{Adamo:2013tsa}, following the
standard string paradigm. They extended the CHY formulae for type II
supergravity in 10 dimensions to 1-loop in terms of scattering
equations on an elliptic curve (and, in principle, to $g$-loops on
curves of genus $g$). 
The only check that could be done at the time was factorisation at the
boundary of the moduli space.
It remained an open question as to whether these formulae 
compute amplitudes correctly due to the difficulties of solving these new  
scattering equations on the torus.

The ACS 1-loop proposal was investigated further by Casali and one of
us \cite{Casali:2014hfa}. 
To obtain the correct
pole structure of the field theory integrand, they were led to change the form of the
scattering equations as discussed below, 
and it was argued that the formulae were reproducing the known
integrands of four-points supergravity amplitudes at a triple cut.
A conjecture for ``scalar $n$-gon integrands'' was proposed, for an
expression on the elliptic curve that should give rise to loop
integrands based on permutations of polygon scalar integrals. This
supported the fact that the formulas could be valid for arbitrary loop
momentum, but the question remained open.

In~\cite{Geyer:2015bja}, we demonstrated how the formalism of the scattering equations gives rational expressions for the integrands of scattering amplitudes, making the loop-level problem essentially as simple as the tree-level one. We started by making further alterations to the
scattering equations on the torus so as to obtain a globally well-defined loop momentum.  
 We  then showed that formulae on the torus, such as the ACS and $n$-gon conjectures, reduce to ones on the
Riemann sphere.  This followed from a contour integral argument in the $\tau$-plane for the modular parameter
 of the torus (analogous to the use of the residue
theorem at tree level in \cite{Dolan:2013isa}).
On the Riemann sphere, the Jacobi theta functions reduce to elementary
rational functions.
We review our procedure below, and recall how these formulae now
involve off-shell momenta at a pair of points corresponding to the
loop momenta.  The newly inserted points are subject to off-shell
versions of the scattering equations.
These formulae on the sphere were furthermore generalised to provide conjectures for other amplitudes for which no first-principle
(i.e.\ ambitwistor string) derivation exists.
We proposed formulae for super Yang-Mills
theory at 1-loop, that were checked explicitly at four points and
numerically at five and six points.  The analogous formulae for
biadjoint scalar theories at 1-loop on the Riemann sphere were
subsequently studied in \cite{He:2015yua}, where they were also
verified at low point order
.

\paragraph{Summary of this paper.}

In this paper, we first review the ideas of \cite{Geyer:2015bja},
giving full details that were omitted through lack of space there and
some improvements. 
In \S\ref{sec:review}, we give a different formulation of the scattering equations on the torus following remarks from ACS that the one given in \cite{Geyer:2015bja} might not factorise correctly.  We then review how ambitwistor string amplitude formulae on the torus can be reduced to the Riemann sphere. The gravitational formulae are based on 1-loop extensions of the CHY Pfaffians.  These are obtained from the limit on the Riemann sphere of the worldsheet correlator of vertex operators on the torus described by ACS as a sum over spin structures (although their factorisation limit on the Riemann sphere misses some terms).  If one of the 1-loop Pfaffians is replaced by a 1-loop extension of the Parke-Taylor factor, super-Yang-Mills amplitudes are obtained.  
If both 1-loop Pfaffians are replaced by 1-loop Parke-Taylor factors, it was shown by \cite{He:2015yua} (see also \cite{Baadsgaard:2015hia}) that certain subtleties arise as additional degenerate solutions of the scattering equations contribute, and diagrams with bubbles on the external legs need to be considered.

The first main set of new results in this paper are presented in
\S\ref{sec:non-supersymm-theor}, where we provide a detailed study of the
individual contributions of the Neveu-Schwarz and Ramond sectors to the
one-loop amplitudes.
The Ramond sector running
through the loop corresponds to one Pfaffian, together with the
contribution from the odd spin structure which we can ignore in our
analysis by restricting the external kinematics to 7 or fewer
dimensions.\footnote{In string theory, the
  $\epsilon_{10}\epsilon_{10}$ term is actually an
  $\epsilon_{8}\epsilon_{8}$ one, see for instance
  \cite{Peeters:2000qj}. It is not clear how this situation transposes
to the ambitwistor string.}
Two further Pfaffians terms combine to give the contribution of the NS
sector running through the loop.  We express these two terms as a
reduced Pfaffian of CHY type for a larger $(n+2)\times (n+2)$ matrix
of co-rank two for the NS sector.  In this matrix we are able to see
the number of NS polarisation states running in the loop and we can
adjust this to give different theories in different dimensions (we
comment on dimensional reduction in \cref{sec:dimens-reduct}).
  If we drop the Ramond sector terms, we 
 obtain gauge and gravity amplitudes at one loop  that are
 non-supersymmetric. 
The resulting formulae have been subjected to various checks at
 low point order in this section and proved systematically in the subsequent one.
 
  A subtlety that arises here follows from an analysis of \cite{He:2015yua} in which it is argued that a degenerate class of solutions to the scattering equations might contribute
 nontrivially for non-supersymmetric theories and that of \cite{Baadsgaard:2015hia} who point out that on these degenerate solutions there is a risk of divergence, and some regularization might be required.   For our proposed integrands we show (in the subsequent section) that no regularization is required at the divergent solutions.  Nevertheless, we propose that these degenerate solutions should not be included as we see in our proof in the subsequent section that they do not contribute to the Q-cuts, and so are not needed in the final formula.  It seems most likely that they correspond to degenerate contributions that will vanish under dimensional regularization and are thrown away in the derivation of Q-cuts.

The second set of new results discussed in \S\ref{sec:factorization} give a full proof at one loop for the $n$-gon conjecture, and for the non-supersymmetric gauge, gravity and bi-adjoint scalar amplitudes.  The basic strategy is to study factorisation of the Riemann sphere.   The only poles that can arise in the formula, apart from the explicit $1/\ell^2$, where $\ell$ is the loop momentum, arise from factorisation of the Riemann sphere, i.e., the bubbling off of an additional sphere.  We can use this to identify all the poles involving the loop momentum and the corresponding residues.  We can also use factorisation ideas to identify the fall-off as $\ell\rightarrow \infty$.  This immediately gives  the poles and residues in the case of the $n$-gon conjecture.  For  gauge and gravity amplitudes, we also need to study the Pfaffians that arise (the Parke-Taylor factors in the Yang-Mills cases are rather easier to understand in this context). The poles and residues that we find give perfect agreement with the Q-cut representation of the amplitude, as obtained recently in \cite{Baadsgaard:2015twa}, and this completes a proof of our formulae; the Q-cut procedure applied to our formula will yield the correct Q-cut representation.\footnote{We remark that  the results of \cite{Cachazo:2015aol} (published after the first version of our paper) show that our formulae contain also spurious poles, which we do not discuss here. They show however, that these  terms will not contribute  to the Q-cut nor to the final  scattering amplitude, because they vanish in dimensional regularization.  Their paper serves to fill in a gap in our proof.}
  We are restricted to a proof for the non-supersymmetric theories because we do not have  formulae for tree amplitudes with two Ramond sector particles.

In \S\ref{sec:discussion}, we summarise and discuss further aspects and developments of these ideas.

\section{Review}\label{sec:review}

\subsection{The scattering equations on a torus}
\label{sec:scatt-equat-torus}

In this section, we define the scattering equations on a torus.  These are motivated by the definitions given in \cite{Adamo:2013tsa, Casali:2014hfa}, but the definition has been changed so that they are holomorphic and single valued on the torus with a well defined loop momentum.   

We use the complex coordinate $z$ on the elliptic curve $\Sigma_q =\C/\{\Z\oplus \Z \tau\}$ where $q=\e^{2\pi i \tau}$.
The scattering equations are equations for $n$ points $z_i\in \Sigma_q$ that depend on a choice of $n$ momenta $k_i \in \R^{1,d-1}$, $i=1,\ldots n$.  To define them we first construct a meromorphic 1-form $P(z,z_i|q)d z$ on $\Sigma_q$, with values in $\C^d$, that satisfies
\begin{equation}
\dbar P=2\pi i \sum_i k_i \bar \delta(z-z_i) dz\, , 
\label{eq:dbarP}
\end{equation}
where we define the complex double delta function by
\begin{equation}
  \label{eq:deltabar-def}
\bar \delta (f(z)):= \dbar \frac{1}{2\pi i f(z)} =\delta(\Re f)\delta(\Im f) d\overline{f(z)}\, .
\end{equation}
Introducing $\ell\in\R^{1,d-1}$ to parametrise the zero modes of $P$, and
setting $\;z_{ij}=z_i-z_j$, our choice of solution of
eq.~\eqref{eq:dbarP} for $P(z,z_i|q)$ is
\begin{equation}
\label{P-def}
P(z,z_i) =2\pi i\,\ell d z + \sum_i k_i \Bigg(
\frac{\theta'_1 (z-z_i)}{\theta_1 (z-z_i)} + \frac{\theta'_1
  (z_i-z_0)}{\theta_1 (z_i-z_0)} + \frac{\theta'_1 (z_0-z)}{\theta_1
  (z_0-z)}
  \Bigg)dz\, .
\end{equation}
Here the prime denotes $\partial/\partial z$,  $z_0$ is some choice of reference point, and $\theta_1=\theta_{11}$ where
standard theta functions are defined by
\begin{equation}
\theta_{ab}(z|\tau)=
\sum_{n\in\Z}
q^{(n-a/2)^2/2}
 e^{ 2i\pi (z-b/2)(n-a/2)}\, .\label{eq:theta-def}
\end{equation}
Since $\theta_1(z)\underset{z\to0}{\sim} z$, there are poles at $z=z_i$, $i=1,\ldots , n$ but momentum conservation implies that the coefficient of $\theta'_1(z_0-z)/\theta_1(z_0-z)$ is in fact zero, so $P$ is holomorphic at $z_0$.  We include the last term to make the double periodicity manifest.    
Theta functions are trivially periodic under $z\rightarrow z+1$, but
under $z\rightarrow z+\tau$ we have 
\begin{equation}
\frac{ \theta'(z+\tau)}{\theta(z+\tau)}
=\frac{\theta'(z)}{\theta(z)}-2\pi i\, .\label{eq:theta-periodicity}
\end{equation}
It is easy to see that our  expression for $P$ is doubly periodic in $z$ as a consequence of momentum conservation, but it is also doubly periodic in the $z_i$ as a consequence of the extra last term involving the reference point in \eqref{P-def}.  
Using this, we define the scattering equations to be
\begin{equation}
\mathrm{Res}_{z_i} P^2(z)=2k_i\cdot P(z_i)=0\, ,  \qquad
P^2(z_0')=0\,. \label{eq:SE}
\end{equation}
where $z'_0$ is another choice of reference point. On the support of the other scattering equations, $P^2(z_0')$ is global and holomorphic in $z'_0$ and hence independent of  this $z_0'$. Because the sum of residues of $P^2$ vanishes, the first scattering
equation follows from those at $i=2,\ldots, n$. Translation invariance
of the framework implies that we must fix the location of $z_1$ by
hand. On the support of the equations at $z_i$, $P^2(z_0')$ is global
and holomorphic, hence constant in $z_0'$, depending only on
$\tau$. Therefore, the final equation $P^2(z_0')=0$ is independent of
$z_0'$ and serves to determine $\tau$.

Some remarks are in order here. Since our $P$ is meromorphic and
doubly periodic both in $z$ \emph{and} the $z_i$, so are the
scattering equations.  It differs from the previous versions in the
literature in the choice of an additive `constant' term in $\ell$ that
depends on the $z_i$ and $k_i$.  The ACS version is not holomorphic in
the $z_i$; this leads to non-holomorphic scattering equations and it
was argued in \cite{Casali:2014hfa} that they do not give the correct
$1/\ell^{2}$ pole structure. A holomorphic version was proposed there
for which factorisation was checked, which is also the version used in
\cite{Ohmori:2015sha}.  However, that version is not doubly periodic
so the scattering equations are not well defined on the elliptic curve
for fixed constant $\ell$; there are different numbers of solutions on
the different fundamental domains of the lattice as well as those
related by $SL(2,\Z)$ as observed numerically
in~\cite{Casali:2014hfa}.\footnote{This fact leads to a well-known
  apparent ambiguity in the definition of the loop momentum in all
  first quantized theories (worldline, strings \cite{D'Hoker:1988ta}).
  This ambiguity drops out of the physical observables after
  integration of the loop momentum and does not alter the modular
  properties of the string amplitudes.
However, the case of the first quantized ambitwistor string is undoubtedly more
subtle because of the presence of the scattering equations and the fact that we must integrate only over a real contour in the loop momentum variable. Therefore
we must proceed by making two assumptions.  Firstly, we must cure the ambiguity in the loop momentum in
the integrand by defining  $P$ by 
\eqref{P-def}. Secondly, we want to define the integration cycle of
the theory (in the sense of \cite{Ohmori:2015sha,Witten:2012bh}) as
including only the solutions to the scattering equations in the
fundamental domain, as described below.}
The version in \cite{Geyer:2015bja} is
holomorphic and doubly periodic, but concerns were raised about
factorisation by Adamo, Casali \& Skinner, who suggested this approach.

With this version of the scattering equations, the ACS proposal for the 1-loop integrand of type-II supergravity amplitudes takes the form
\begin{equation}
\cM^{(1)}
_{\mathrm{SG}}
=\int \cI_q 
\,d^d \ell \,  d\tau \, \bar\delta (P^2(z_0'))\prod_{i=2}^n
\bar\delta(k_i\cdot P(z_i)) dz_i^2  \,.\label{elliptic-amp}
\end{equation}
Here we have written $\bar\delta(k_i\cdot P(z_i)) dz_i^2$ to give an expression that in total transforms as a 1-form.  This is because $P$ is a 1-form but $\bar \delta(f)$ has negative weight in $f$, so that we need two $dz_i$ factors to yield a 1-form.  In the ACS proposal, it is assumed that we are in the critical case of type II supergravity with $d=10$.  In this case, $\cI_q=\cI(k_i,\epsilon_i, z_i|q)$ is a function also of $\epsilon_i$,  the gravitational polarisation data, and is  the expression obtained as a sum over spin structures of the worldsheet correlator of vertex operators. It consists of certain Pfaffians and theta constants that arise as partition functions that are described later, and in more detail in \cite{Adamo:2013tsa}.   In this special case, this formula is doubly periodic in the $z_i$ and modular invariant, i.e., invariant under\footnote{The invariance under $\tau\rightarrow \tau+1$ is trivial, and  under $\tau\rightarrow -1/\tau$, we have $\ell\rightarrow \tau\ell$, $dz\rightarrow dz/\tau$, and the transformation of $\cI_q$ can be deduced from conventional string theory since the worldsheet correlator is essentially the square of the holomorphic part of the worldsheet correlator there. For the counting in the $n$-gon case, observe that $\bar\delta(P^2(z_0))$ transforms also as $\tau^{-2}$, as it implicitly has a factor of $dz_0^2$.\label{fn:mod}}   $\tau\rightarrow \tau+1, -1/\tau$.

In \cite{Casali:2014hfa}, it was shown that, when $n=4$, as in conventional string theory, $\cI_q$ is independent of $z_i$ and $q$, and factors out of the integral.  The nontrivial remaining integral is the $n=4$ version of the more general integral
\begin{equation}
\cM^{(1)}_{n\mathrm{-gon}}=\int \,d^{d} \ell \,  d\tau \, \bar\delta
(P^2(z_0'))\prod_{i=2}^n \bar\delta(k_i\cdot P(z_i)) d z_i^2
\,,\label{eq:ngon}
\end{equation}
where the integral can be checked to be modular invariant for
dimension $d=2n+2$ (see the modular weight of $\ell$ in
footnote~\ref{fn:mod}).  In \cite{Casali:2014hfa}, this was
conjectured to be equivalent to a sum over permutations of $n$-gons
and, if so at $n=4$, this would confirm the 4-particle supergravity conjecture at
1-loop.

In both formulae, leaving aside the integration over the loop momentum variable $\ell$, there are as many delta functions as integration variables and these restrict the integral to a sum over a discrete set of solutions to the scattering equations.  Each term in that sum consists of the integrand evaluated at the corresponding solution divided by a Jacobian.  

\subsection{From a torus to a Riemann sphere}
Here we use a residue theorem (or integration by parts in our
notation) to reduce the formula on the elliptic curve to one on the
nodal Riemann sphere at $q=0$. 
Our argument relies on the intuitive fact that the scattering equation imposed
by $\bar\delta (P^2(z_0))$ has a separate status from the others,
serving to fix $\tau$, and can be analysed on the $\tau-$plane alone. We can use the residue theorem to convert it into an equation
enforcing $q=0$.  Such `global residue theorems' have already been
applied to tree-level CHY formulae by \cite{Dolan:2013isa} to relate
the scalar CHY formulae to their Feynman diagrams.  
We apply the same strategy here, and we will be left with scattering
equations that have off-shell momenta associated to $\ell$, and a
formula for the 1-loop integrand based on off-shell scattering
equations on the Riemann sphere (in fact a forward limit of those of
\cite{Naculich:2014naa}).

\begin{figure}[ht]
 \begin{tikzpicture} [scale=3]
  \filldraw [fill=light-gray, draw=white] (0.5,0.866) arc [radius=1, start angle=60, end angle= 120] -- (-0.5,2.7) -- (0.5,2.7) -- (0.5,0.866);
  \draw [black] (1.1,0) -- (-1.1,0);
  \draw [black] (0,0) -- (0,2);
  \draw [black,dashed] (0,2) -- (0,2.5);
  \node at (0.5,-0.15) {$\frac{1}{2}$};
  \node at (-0.5,-0.15) {-$\frac{1}{2}$};
  \node at (1.195,2.63) {$\tau$};
  \draw (1.1,2.7) -- (1.1,2.55) -- (1.25,2.55);
  \draw [gray] (0,0) arc [radius=1, start angle=0, end angle= 90];
  \draw [black] (1,0) arc [radius=1, start angle=0, end angle= 180];
  \draw [gray] (1,1) arc [radius=1, start angle=90, end angle= 180];
  \draw [gray] (0.5,0) -- (0.5,0.866);
  \draw [gray] (-0.5,0) -- (-0.5,0.866);
  \draw (0.5,0.866) -- (0.5,2);
  \draw (-0.5,0.866) -- (-0.5,2);
  \draw [dashed] (0.5,2) -- (0.5,2.5);
  \draw [dashed] (-0.5,2) -- (-0.5,2.5);
  \draw (0.5,2.5) -- (0.5,2.7);
  \draw (-0.5,2.5) -- (-0.5,2.7);
  \draw [red,thick,->] (0.48,2.7) -- (0,2.7);
  \draw [red,thick] (0,2.7) -- (-0.48,2.7);
  \draw [red,thick] (0.48,2.5) -- (0.48,2.7);
  \draw [red,thick,<-] (-0.48,2.5) -- (-0.48,2.7);
  \draw [red,thick,dashed] (0.48,2) -- (0.48,2.5);
  \draw [red,thick,dashed] (-0.48,2) -- (-0.48,2.5);
  \draw [red,thick,->] (0.48,1.2) -- (0.48,2);
  \draw [red,thick,->] (0.48,0.9) -- (0.48,1.2);
  \draw [red,thick,<-] (-0.48,1.5) -- (-0.48,2);
  \draw [red,thick] (-0.48,0.9) -- (-0.48,1.5);
  \draw [red,thick] (0.48,0.9) arc [radius=0.97, start angle=60, end angle= 119];
  \draw (-0.025,1.125) -- (0.025,1.075);
  \draw (-0.025,1.075) -- (0.025,1.125);
  \draw (0.225,1.225) -- (0.175,1.175);
  \draw (0.225,1.175) -- (0.175,1.225);
  \draw (-0.225,1.225) -- (-0.175,1.175);
  \draw (-0.225,1.175) -- (-0.175,1.225);
  \draw (-0.025,1.455) -- (0.025,1.405);
  \draw (-0.025,1.405) -- (0.025,1.455);
  \draw (-0.385,1.735) -- (-0.435,1.685);
  \draw (-0.385,1.685) -- (-0.435,1.735);
  \draw (0.385,1.735) -- (0.435,1.685);
  \draw (0.385,1.685) -- (0.435,1.735);
  \draw (0.325,1.805) -- (0.375,1.755);
  \draw (0.325,1.755) -- (0.375,1.805);
  \draw (-0.325,1.805) -- (-0.375,1.755);
  \draw (-0.325,1.755) -- (-0.375,1.805);
  \draw (0.205,2.425) -- (0.255,2.375);
  \draw (0.255,2.425) -- (0.205,2.375);
  \draw (-0.205,2.425) -- (-0.255,2.375);
  \draw (-0.205,2.375) -- (-0.255,2.425);
  \draw [gray] (0,0) arc [radius=0.333, start angle=0, end angle= 180];
  \draw [gray] (-0.334,0) arc [radius=0.333, start angle=0, end angle= 180];
  \draw [gray] (0.666,0) arc [radius=0.333, start angle=0, end angle= 180];
  \draw [gray] (1,0) arc [radius=0.333, start angle=0, end angle= 180];
  \draw [gray] (0,0) arc [radius=0.196, start angle=0, end angle= 180];
  \draw [gray] (-0.608,0) arc [radius=0.196, start angle=0, end angle= 180];
  \draw [gray] (0.392,0) arc [radius=0.196, start angle=0, end angle= 180];
  \draw [gray] (1,0) arc [radius=0.196, start angle=0, end angle= 180];
  \draw [gray] (0,0) arc [radius=0.142, start angle=0, end angle= 180];
  \draw [gray] (-0.716,0) arc [radius=0.142, start angle=0, end angle= 180];
  \draw [gray] (0.284,0) arc [radius=0.142, start angle=0, end angle= 180];
  \draw [gray] (1,0) arc [radius=0.142, start angle=0, end angle= 180];
  \draw [gray] (0,0) arc [radius=0.108, start angle=0, end angle= 180];
  \draw [gray] (-0.784,0) arc [radius=0.108, start angle=0, end angle= 180];
  \draw [gray] (0.216,0) arc [radius=0.108, start angle=0, end angle= 180];
  \draw [gray] (1,0) arc [radius=0.108, start angle=0, end angle= 180];
  \draw [gray] (0.5,0) arc [radius=0.122, start angle=0, end angle= 180];
  \draw [gray] (-0.5,0) arc [radius=0.122, start angle=0, end angle= 180];
  \draw [gray] (0.744,0) arc [radius=0.122, start angle=0, end angle= 180];
  \draw [gray] (-0.256,0) arc [radius=0.122, start angle=0, end angle= 180];
  \draw [gray] (0.5,0) arc [radius=0.060, start angle=0, end angle= 180];
  \draw [gray] (-0.5,0) arc [radius=0.060, start angle=0, end angle= 180];
  \draw [gray] (0.620,0) arc [radius=0.060, start angle=0, end angle= 180];
  \draw [gray] (-0.380,0) arc [radius=0.060, start angle=0, end angle= 180];
  \draw [gray] (0.666,0) arc [radius=0.039, start angle=0, end angle= 180];
  \draw [gray] (0.412,0) arc [radius=0.039, start angle=0, end angle= 180];
  \draw [gray] (-0.588,0) arc [radius=0.039, start angle=0, end angle= 180];
  \draw [gray] (-0.334,0) arc [radius=0.039, start angle=0, end angle= 180];
  \draw [gray] (0.334,0) arc [radius=0.062, start angle=0, end angle= 180];
  \draw [gray] (0.790,0) arc [radius=0.062, start angle=0, end angle= 180];
  \draw [gray] (-0.666,0) arc [radius=0.062, start angle=0, end angle= 180];
  \draw [gray] (-0.210,0) arc [radius=0.062, start angle=0, end angle= 180];
  \draw (3.05,1) circle [x radius=0.4, y radius=0.2];
  \draw (2.85,1.045) arc [radius=0.4, start angle=240, end angle=300];
  \draw (3.2,1.015) arc [radius=0.4, start angle=70, end angle=110];
  \draw [fill] (2.95,1.1) circle [radius=.3pt];
  \draw [fill] (2.8,0.95) circle [radius=.3pt];
  \draw [fill] (3.35,0.98) circle [radius=.3pt];
  \draw [fill] (3.3,1.01) circle [radius=.3pt];
  \draw (3.05,1.75) circle [x radius=0.4, y radius=0.2];
  \draw (3.05,1.92) to [out=180, in=60] (2.9,1.85) to [out=250, in=180] (3.05,1.72) to [out=0, in=290] (3.2,1.85) to [out=120, in=0] (3.05,1.92);
  \draw [fill] (2.78,1.82) circle [radius=.3pt];
  \draw [fill] (2.9,1.63) circle [radius=.3pt];
  \draw [fill] (3.35,1.72) circle [radius=.3pt];
  \draw [fill] (3.3,1.76) circle [radius=.3pt];
  \draw (2.5,2.5) circle [x radius=0.4, y radius=0.2];
  \draw (2.5,2.7) to [out=240, in=60] (2.35,2.6) to [out=250, in=180] (2.5,2.47) to [out=0, in=290] (2.65,2.6) to [out=120, in=300] (2.5,2.7);
  \draw [fill] (2.23,2.57) circle [radius=.3pt];
  \draw [fill] (2.35,2.38) circle [radius=.3pt];
  \draw [fill] (2.8,2.48) circle [radius=.3pt];
  \draw [fill] (2.75,2.51) circle [radius=.3pt];
  \draw (3.6,2.5) circle [x radius=0.4, y radius=0.2];
  \draw [dashed,red] (3.53,2.65) to [out=140, in=180] (3.6,2.9) to [out=0, in=40] (3.67,2.65);
  \draw [fill] (3.33,2.57) circle [radius=.3pt];
  \draw [fill] (3.45,2.41) circle [radius=.3pt];
  \draw [fill] (3.8,2.43) circle [radius=.3pt];
  \draw [fill] (3.85,2.55) circle [radius=.3pt];
  \draw [fill,red] (3.53,2.65) circle [radius=.3pt];
  \draw [fill,red] (3.67,2.65) circle [radius=.3pt];
  \node at (3.05,2.5) {$\leftrightarrow$};
 \end{tikzpicture}
 
\caption{Residue theorem in the fundamental domain.}
\label{fund-dom}
\end{figure}  

In order to obtain a formula for the amplitude on the Riemann sphere,
we need that $\cI_q:=\cI(\ldots|q)$ be holomorphic as a function of
$q$ on the fundamental domain
$D_\tau=\{|\tau|\geq 1, \Re \tau \in[-1/2,1/2]\}$ for the modular
group.  In the case of the $n$-gon conjecture below, we will put
$\cI_q=1$, and this will be obvious.
For supergravity, however, $\cI_q$ is a product of two 1-loop analogues of CHY Pfaffians that in particular have many contributions of the form $1/ \theta_1(z_i-z_j)$, which provide potential poles when $z_i\rightarrow z_j$, and it is conceivable that as $q$ varies, these might lead to poles in $q$. However,  such poles are suppressed by the scattering equations for generic choices of the momenta.   As $z_i\to z_j$ for $i,j \in I$ and $I$ some subset of $1, \ldots , n$, $P$ is well approximated by its counterpart on the Riemann sphere near the concentration point, and it is easily seen that such factorisation of the $z_i$ can only occur if the corresponding partial sum of the momenta for $i\in I$ becomes null.  See \S\ref{sec:factorization} for a detailed discussion of the argument.  Thus, if the momenta are in general position, we cannot have $z_i\to z_j$ on the support of $k_i\cdot P(z_i)=0$, and so our $\cI_q$ will have no poles.  

 It was shown in
\cite{Adamo:2013tsa} that the holomorphicity of the supergravity
integrand at $q=0$ is a consequence of the GSO projection. For other
values of $q$, the possible poles in the integrand can only occur when
$z_i\rightarrow z_j$, but the standard factorisation argument
\cite{Dolan:2013isa} applies here also to imply that this can only
happen when the momenta are factorising and hence non-generic. The main
argument is then
\begin{eqnarray}
\cM^{(1)}_{SG}&=&\int \cI_q\,d^d \ell \,  \frac{d q}{ q} \, \dbar \left(\frac{1}{2\pi i P^2(z_0')}\right)\prod_{i=2}^n \bar\delta(k_i\cdot P(z_i)) d z_i \nonumber \\
&=& -\int \cI_q\,d^d \ell \,  \dbar\left(\frac{d q}{2\pi i q} \right)\,  \frac{1}{ P^2(z_0')}\prod_{i=2}^n \bar\delta(k_i\cdot P(z_i)) d z_i \nonumber \\
&=& -\int \cI_0\,d^d \ell \,    \frac{1}{  P^2(z_0')}\prod_{i=2}^n \bar\delta(k_i\cdot P(z_i)) d z_i \,
\Big|_{q=0}\, .
\end{eqnarray}
In the first line, we put $d\tau=dq/(2\pi i q)$ and inserted the definition of $\bar\delta (P^2(z_0'))$.  In the second line, we integrated by parts in the domain $D_\tau$, yielding a delta function supported at $q=0$ that is then integrated out.  The boundary terms cancel because of the modular invariance.  This is equivalent to a contour integral argument in the fundamental domain $D_\tau$, as in figure \ref{fund-dom}. The sum of the residues at the poles of $1/P^2(z_0' |q)$ simply gives the contribution from the residue at the top, $q=0$, since the contributions from the sides and the unit circle cancel by modular invariance.  

The fundamental domain for $z$ maps,
\begin{equation}
\sigma= \e^{2\pi i (z-\tau/2)}\,,
\end{equation}
 to $\{\e^{-\pi\Im \tau} \leq|\sigma|\leq\e^{\pi\Im \tau}\}$, with the
 identification $\sigma\sim q \sigma$. As  $q\rightarrow0$, we obtain
 $\sigma\in\CP^1$ with $0,\infty$  identified, giving a double point
 corresponding to the pinching of $\Sigma_q$ at a non-separating
 degeneration as illustrated in figure \ref{fund-dom}. We have
 $dz=\frac{d\sigma}{2\pi i\sigma}$ and, at $q=0$,
\begin{equation}
 \frac{\theta'_1 (z-z_i)}{\theta_1 (z-z_i)}dz=\frac{\pi}{\tan \pi
   (z-z_i)} \,dz =-\frac{d\sigma}{2\,\sigma} +
 \frac{d\sigma}{\sigma-\sigma_i}\,.
 \label{eq:thetaq0}
\end{equation}

Using momentum conservation we obtain 
\begin{equation}
\label{Psphere}
P(z)=P(\sigma)= \ell \,\frac{d\sigma}{\sigma} +\sum_{i=1}^n \frac{k_i\, d\sigma}{\sigma-\sigma_i}\, ,
\end{equation}
where here we have translated $\ell$ by\footnote{This was the extra term that we included in  $\ell$ to make $P$ doubly periodic at constant $\ell$.  In this limit on the Riemann sphere, it no longer plays a useful role and corresponds to taking $\sigma_0=\infty$. } $\sum_{i} k_i \cot \pi
z_{i0 }$. 

If we now consider the function $P^2(\sigma)$, we find that it has double poles at $0$, $\infty$ (along with the usual simple poles at $\sigma_i$).  Defining 
$$
S=P^2-\ell^2 \rd \sigma^2/\sigma^2,
$$
we find $S$ now has only simple poles. The vanishing of the residues of $S$ gives our {\em off-shell} scattering equations 
\be{SE2n}
0=\mathrm{Res}_{\sigma_i}S =k_i\cdot P(\sigma_i) = \frac{k_i\cdot \ell}{\sigma_i} + \sum_{j\neq i}\frac{k_i\cdot k_j}{\sigma_i-\sigma_j}\, , 
\ee
at $\sigma_i$.  The sum of the residues of $\sigma_\alpha\sigma_\beta S$ must vanish with $\sigma_\alpha=(1,\sigma)$ in affine coordinates, so that the equations for $i=2,\ldots , n$ imply the vanishing of the residues of $S$ at $\sigma_1$, $0$ and $\infty$. Thus any $n-1$ of these equations imply all $n+2$, hence $S$ is holomorphic and, having negative weight, vanishes, so that $P^2=\ell^2\rd \sigma^2/\sigma^2$.

With this, the 1-loop formula becomes
\begin{equation}\label{1-loop}
\cM^{(1)}_{SG}= -\int \cI_0\,d^d \ell \, \frac{1}{\ell^2}\prod_{i=2}^n \bar\delta(k_i\cdot P(\sigma_i))\frac{d \sigma_i}{\sigma_i^2}\, ,
\end{equation}
where we have used the identity $\bar\delta(\lambda f)=\lambda^{-1}\bar\delta(f)$ to give $\bar{\delta}(k_i\cdot P(z_i)) dz_i=
\bar{\delta}(k_i\cdot P(\sigma_i))d\sigma_i/\sigma_i^2$. The formula \eqref{1-loop} is our new proposal for the supergravity loop integrand, with $\cI_0$ the $q=0$ limit of the ACS correlator.

For the simpler `$n$-gon' conjecture presented in \cite{Casali:2014hfa}, we now take $\cI_q=1$. For both this and supergravity, modular invariance is no longer an issue on the Riemann sphere, and the new formulae make sense in any dimension.  However, the link to a formula on the elliptic curve will only be valid in the critical dimension.

The integration by parts sets $q$ to zero, and this is also the regime in
full string theory where one extracts the field theory or
$\alpha'\to0$ limit of the amplitudes. The difference here is that
this limit is obtained by application of the residue theorem, so we
are not throwing away any terms, whereas in string theory we
would be projecting out the contribution of massive states running in the
loop by doing so.
At the moment it is unclear if the similarity between the method we
use here and string theory is just a consequence of the fact that both
strings are physical and hence factorise properly at the boundary of
the moduli space, or if this goes deeper.
In any case, the similarity between the $\alpha'\to0$ limit and our
IBP will allow us to reuse some standard technology from string
theory.\footnote{Several restrictions apply; there is no fully fledged
  well defined Heterotic ambitwistor string (see
  appendix~\ref{sec:motiv-form-ambitw}), there are no winding modes
  which can become massless at self-dual radii in compactifications to
  enhance abelian to non-abelian gauge groups, see
  appendix~\ref{sec:dimens-reduct}, and it is yet unknown how to
  include the contribution of non-perturbative states of
  supergravity.}

\subsection{The \texorpdfstring{$n$}{n}-gon conjecture, partial fractions and shifts}\label{sec:n-gon}

The question arises as to how the $\ell$ appearing in \eqref{1-loop} relates to the loop momentum flowing in any given propagator.  We will see that the answer requires a new way of expressing 1-loop amplitudes. The expression \eqref{1-loop} is a representation of the one-loop contribution to the scattering amplitude of a theory specified by $\cI_0$.  In this subsection, we consider the choice where $\cI_0=1$, which was conjectured in \cite{Casali:2014hfa} to give rise to a permutation sum of polygons.  
When $n=4$, the $n$-gon conjecture implies the supergravity conjecture \cite{Casali:2014hfa}.

For $n=4$, the off-shell scattering equations can be solved exactly with two solutions\footnote{This problem is identical to that arising in factorisation as studied in \cite{Casali:2014hfa} except that now $\ell$ is off-shell.  It was  conjectured there to have $(n-1)!-2(n-2)!$ solutions giving 2 at $n=4$.} given explicitly in \S\ref{4pt-soln}.   
After substituting into \eqref{1-loop}, this yields
\begin{equation}\label{4pt-result}
\hat{\cM}^{(1)}_4 = \frac{1}{\ell^2} \sum_{\sigma \in S_4} 
\frac{1}{2\ell\cdot k_{\sigma_1}(2\ell\cdot (k_{\sigma_1}+ k_{\sigma_2})+2k_{\sigma_1}\cdot k_{\sigma_2})(-2\ell\cdot k_{\sigma_4})}\,,
\end{equation}
where we defined the loop integrand as $\hat{\cM}^{(1)}$,
\begin{equation}\label{4pt-result-int}
{\cM}^{(1)} = \int d^D\ell \;\hat{\cM}^{(1)}.
\end{equation}
This result is not obviously equivalent to the permutation sum of the boxes
\begin{equation}
  \label{eq:Ibox}
I^{1234} = \frac{1}{\ell^2(\ell+k_3)^2(\ell+k_3+k_4)^2(\ell-k_2)^2}\,,
\end{equation}
as the only manifest propagator in $\cM^{(1)}_4$ is the pre-factor ${1}/{\ell^2}$, and all the other denominator factors are linear in $\ell$. 
However,  the partial fraction identity
\begin{equation}
  \label{eq:Di-partfrac}
  \frac1{\prod_{i=1}^{n}D_{i}} =
  \sum_{i=1}^{n}\frac{1}{D_{i}\prod_{j\neq i}(D_{j}-D_{i})}
\end{equation}
can be applied to a contribution such as \eqref{eq:Ibox} (this
identity is easily proven by induction).   The right-hand-side of this identity is a sum of terms with a single factor of the type $D_i=(\ell+K)^2$, and several factors of the type $D_{j}-D_{i}=2\ell\cdot K +\mathcal{O}(\ell^0)$. We then perform a shift in the loop momentum for each term such that the corresponding $D_i$ is simply $\ell^2$. Applying this procedure to the permutation sum, we precisely obtain $\hat{\cM}_4^{(1)}$.

We are now in a position to address the $n$-gon conjecture of \cite{Casali:2014hfa}. It states that $\cI$=1 corresponds to a permutation-symmetric sum of $n$-gons, which can be written as
\begin{align}
\hat{\cM}^{(1)}_{n\text{-gon}}= \frac1{\ell^2} \sum_{\rho\in S_n}
 \frac{1}{{\prod_{i=1}^{n-1} }\bigg(2\ell\cdot\sum_{j=1}^i k_{\rho_i} +\left(\sum_{j=1}^i
k_{\rho_i}\right)^2\bigg) } .
\label{formulangon}
\end{align}
In our previous work \cite{Geyer:2015bja}, we verified this equality analytically at four points, using the explicit solutions to the scattering equations in Appendix~\ref{4pt-soln}, and numerically at five points.   We will see later in \S\ref{factorize} that this can be proved by factorisation arguments.

The $n=2$ and $3$ examples are also instructive.  The bubble (2-gon) example gives\footnote{Henceforth, we use a capital symbol $K$ to distinguish a possibly massive momentum.}
\begin{align}
  \frac1{\ell^2(\ell+K)^2} & =
  \frac{1}{\ell^2(2\ell\cdot K+K^2)} + \frac{1}{(\ell+K)^2(-2\ell\cdot K-K^2)} \nonumber \\
& \stackrel{\textrm{shift}}{\longrightarrow} \frac1{\ell^2}
\left( \frac1{2\ell\cdot K+K^2} + \frac1{-2\ell\cdot K+K^2} \right),\label{bubble}
\end{align}
where a shift $\ell \to \ell -K$ was applied to the second term. If $K$ is null, the bubble vanishes, which is also the result of dimensional regularisation. The triangle (3-gon) with massless corners, $k_1^2=k_2^2=k_3^2=0$, also vanishes:
\begin{align}
  \frac1{\ell^2(\ell+k_1)^2(\ell-k_3)^2} \stackrel{\textrm{shift}}{\longrightarrow} -\frac{ \ell\cdot(k_1+k_2+k_3)}{4\,\ell^2(\ell\cdot k_1)(\ell\cdot k_2)(\ell\cdot k_3)} =0.\label{triangle}
\end{align}

 The partial fraction identity \eqref{eq:Di-partfrac} can also be
 application of the residue theorem to the following integral
 \begin{equation}
   \label{eq:Di-partfracz}
   \frac{1}{2\pi i}\oint_{|z|=\epsilon}\frac1{z\,\prod_{i=1}^{n}(D_{i}-z)}\,.
 \end{equation}
Typical integrands for theories like gauge theory or gravity depend on the loop momentum also in the numerators, and not simply through propagator factors in the denominators. The loop momentum in the numerators should also be shifted.
 For more general amplitudes, this can be achieved with a
 shift in the loop momentum together with a contour integral argument, and this has been explored and considerably
 generalised in  \cite{Baadsgaard:2015twa} and reviewed in \S\ref{sec:factorization}.

\subsection{Supersymmetric theories}
\label{sec:supersymm-theor-1}

Supergravity and Yang-Mills one-loop amplitudes were also expressed in \cite{Geyer:2015bja}
on the Riemann sphere using different choices for $\cI_q$ in \eqref{1-loop}.
Here we present further details of these calculations. While the former are readily derived from
the type II RNS ambitwistor string, the Yang-Mills one was simply
conjectured (see some motivational comments in
\cref{sec:motiv-form-ambitw}). We show that these
integrands pass several non-trivial consistency checks, and later show that they factorise correctly in \S\ref{sec:factorization}.

\subsubsection{Supergravity}
\label{sec:supergravity}

Let us start by recalling the form of genus-one graviton amplitudes in
ambitwistor string, as derived by ACS in ref.~\cite{Adamo:2013tsa}.
As in the usual RNS string, the worldsheet correlator incorporates a
GSO projection to remove the unwanted states. The integrand $\cI_q$ is the worldsheet correlator of the vertex operators
 resulting from Wick contractions. The
main difference from a conventional string integrand is the absence
of $XX$ contractions. This forbids in particular the appearance of an
exponential factor of the form
$\exp(\sum k_{i}k_{j} \langle X(z_{i})X(z_{j})\rangle)$ since these holomorphic plane
waves have trivial OPE's.
The $\cI_q$ of the ACS proposal is a sum over spin structures on the
torus.  The odd-odd spin structure gives a fermionic 10-dimensional
zero-mode integral that leads to a $10$-dimensional Levi-Civita
$\epsilon$ symbol.  This will vanish if all the polarisation data and momenta are taken in less than $10$-dimensions and for simplicity we will assume that this is the case in the following\footnote{In doing this, we miss the term that leads to the Green-Schwarz anomaly \cite{Mafra:2012kh}.} and focus only on the even spin structures labelled by $\bd{\alpha}=2,3,4$ (with $\bd{\alpha}=1$ the odd one).  With this, the ACS proposal for the amplitude explicitly reads as \eqref{elliptic-amp} with 
\begin{equation}
\cI_q:=\frac14\sum_{\bd{\alpha};\bd{\beta}=2,3,4}
	    (-1)^{\bd{ \alpha}+ \bd{\beta}} Z_{ \bd{\alpha}; \bd{\beta}}(\tau)
	\ {\rm Pf}(M_{ \bd{\alpha}})\,{\rm Pf}(\widetilde{M}_{ \bd{\beta}})
\label{e:graveven}
\end{equation}
(the $1/4$ comes from the two GSO projections).  Above, in
eq.~\eqref{elliptic-amp}, we referred to ${\cI_q}$ as the worldsheet
CFT correlator that includes partition functions.  The vertex
operators are naturally a product of two factors that we refer to as
``left'' and ``right'', and since left and right parts essentially
decouple, the full correlator decomposes also as a product as follows:
\begin{equation}
  \mathcal{I}_q = \mathcal{I}^L_q\,\mathcal{I}^R_q, \qquad \text{with} \qquad \mathcal{I}^L_q=\frac12\left(Z_2\,\pf(M_2) -Z_3\,\pf(M_3) +Z_4\,\pf(M_4)\right) .
\label{defIsugra}
\end{equation}  
with a similar but tilde'd definition for $\mathcal{I}^{R}$.
The matrix $M_{\bd\alpha}$ is a generalisation of the CHY matrix, and
comes from a straightforward application of Wick's theorem to the left parts of the vertex operators in the spin structure $\bd{\alpha}$. It
is defined as
\begin{equation}
M_\alpha = \left( \begin{matrix}
A \;\;& -C^T \\
C  \;\;&  B
\end{matrix} \right),
\end{equation}
where
\begin{equation}
A_{ij} = k_i\cdot k_j \,S_\alpha(z_{ij}|\tau), \quad B_{ij} =
\epsilon_i\cdot \epsilon_j \,S_\alpha(z_{ij}|\tau), \quad C_{ij} =
\epsilon_i\cdot k_j \,S_\alpha(z_{ij}|\tau) \quad \text{for} \quad
i\neq j,\label{eq:Malpha}
\end{equation}
and
\begin{equation}
A_{ii} = 0, \quad B_{ii} = 0, \quad C_{ii} = -\epsilon_i\cdot P(z_i),
\end{equation}
where $P(z_i)$ was given in \eqref{P-def}.
The torus free fermion propagators, or Szeg\H{o} kernels, are defined by
\begin{equation}
  S_{\bd \alpha}(z_{ij}|\tau) = \frac{\partial_{z}\theta_{1}(0|\tau)}{\theta_{1}(z_{ij}|\tau)}\frac{\theta_{\bd \alpha}(z_{ij}|\tau)}{\theta_{\bd \alpha}(0|\tau)}\sqrt{\d z_i}\sqrt{\d z_j}
\label{eq:szego-even}
\end{equation}
in even spin-structures $\bd{\alpha}=2,3,4$, while $S_{1}$ is given by
\begin{equation}
  \label{eq:szego-odd}
  S_{1}(z|\tau)=\frac{\partial_{z}\theta_{1}(z|\tau)}{\theta_{1}(z|\tau)}\,.
\end{equation}
Here $\bd{\alpha}:=(a,b)=(0,0),(0,1),(1,0)$ are the even characteristics and $(1,1)$ is the odd one. In the notation $\bd{\alpha}=1,2,3,4$ used above, these correspond to $\bd{\alpha}=3,4,2$ and $\bd{\alpha}=1$, respectively.

The tilded matrix $\tilde M_{\bd\alpha}$ is defined as $M_{\bd\alpha}$
is, with different polarisation vectors $\tilde\epsilon$, such that
the polarisation tensors
$\epsilon_i^{\mu\nu}=\epsilon_i^\mu \tilde\epsilon_i^\nu$
correspond to the NS-NS states of supergravity,
graviton, the dilaton and the B-field. In terms of $\epsilon^{\mu\nu}$, the dilaton corresponds to the trace part, the $B$-field to the skew part, and the graviton to the traceless symmetric part.

The $Z_{\bd{\alpha; \beta}}$ are the CFT partition functions in the
${\bd{\alpha}; \bd{\beta}}$ spin structures. In terms of the ambitwistor string theory, they have a factor of $1/\eta(\tau)^{16}$ from the $(P,X)$ system and $\theta_{\bd\gamma}(0|\tau)^4/\eta(\tau)^4$ from each of the $\psi_r$, $r=1,2$,  fermion systems.  The power 16 is
twice the number of transverse directions of $10d$ Minkowski
space, while the fourth power is one-half as appropriate for the spin
1/2 fermions.\footnote{Alternatively the partition functions of the various
  $(b,c)$ and $(\beta,\gamma)$ ghosts account for the reduction to the
physical transverse modes.}
The theta functions have been defined in \eqref{eq:theta-def}, while
the Dedekind eta function is defined by
\begin{equation}
  \label{eq:dedekind-def}
  \eta(\tau)=q^{1/24}\prod_{n=1}^{\infty}{(1-q^{n})}\,.
\end{equation}

With this, the $Z_{\bd \gamma}$ of \eqref{defIsugra} 
are given by
\begin{align}
  \label{eq:pt-funs}
Z_{\bd \gamma} =
  \frac{\theta_{\bd{\gamma}}(0|\tau)^4}{\eta(\tau)^{12}}\quad ({\bd
  \gamma}={\bd \alpha, \bd\beta}) \, .
\end{align}

Applying our contour argument to go from the torus to the nodal
Riemann sphere, we are interested in the limit $q\to0$. The partition
functions do possess $1/\sqrt{q}$ poles which extract higher order
terms in the Szeg\H{o} kernels. Hence we need the following
$q$-expansions:
\begin{equation}
Z_2(\tau) = 16+O(q^{2}),\quad
Z_3(\tau)= \frac1{\sqrt{q}}+8+O(q),\quad
Z_4(\tau) = \frac1{\sqrt{q}}-8+O(q).
\label{eq:Z-q-exp}
\end{equation}
and
\begin{equation}
\begin{aligned}
S_1(z_{ij}|\tau) &\to \frac{1}{2}\,\frac{1}{\sigma_ i-\sigma_ j} \left(\sqrt{\frac{\sigma_ i}{\sigma_ j}}+ \sqrt{\frac{\sigma_ j}{\sigma_ i}}\right)  \sqrt{d\sigma_ i} \sqrt{d\sigma_ j}  ,
  \\ 
S_2(z_{ij}|\tau) &\to \frac{1}{2}\,\frac{1}{\sigma_ i-\sigma_ j} \left(\sqrt{\frac{\sigma_ i}{\sigma_ j}}+ \sqrt{\frac{\sigma_ j}{\sigma_ i}}\right)  \sqrt{d\sigma_ i} \sqrt{d\sigma_ j} ,
  \\ 
S_3(z_{ij}|\tau) &\to \left(\frac{1}{\sigma_ i-\sigma_ j}+\sqrt{q} \,\frac{\sigma_ i-\sigma_ j}{{\sigma_ i}{\sigma_ j}}\right) \sqrt{d\sigma_ i} \sqrt{d\sigma_ j} ,
  \\ 
S_4(z_{ij}|\tau) &\to \left(\frac{1}{\sigma_ i-\sigma_ j}-\sqrt{q} \,\frac{\sigma_ i-\sigma_ j}{{\sigma_ i}{\sigma_ j}}\right) \sqrt{d\sigma_ i} \sqrt{d\sigma_ j} .
\end{aligned}\label{eq:szego-limit}
\end{equation}
in terms of the coordinates $\sigma=e^{2\pi i (z-\tau/2)}$.
The limit of $P(z_i)$ required for the components $C_{ii}$ was already
given in \eqref{Psphere}. 
The $q=0$ residue of \eqref{defIsugra} is then given by 
\begin{equation}
  \begin{aligned}
  \label{eq:sugra-sphere}
   \mathcal{I}^{L}=\frac{1}{2\sqrt{q}} \left( \pf(M_{3})\big|_{q^{0}} -
    \pf(M_{4})\big|_{q^{0}} \right)+
\frac12\left(\pf(M_{3}) \big|_{\sqrt{q}}+\pf(M_{4}) \big|_{\sqrt{q}}\right)+\\
  4\left(\pf(M_{3}) \big|_{q^0}+\pf(M_{4}) \big|_{q^0}-2\pf(M_{2} )\big|_{q^0}\right)+O(\sqrt{q})
  \end{aligned}
\end{equation}
where the symbol $(\cdot)|_{q^r}$ with $r=0,1/2$ means that we extract
the coefficient of $q^r$ in the Taylor expansion around
$q=0$.\footnote{In the original ACS paper, the $O(\sqrt{q})$ were not
    included in the analysis of the factorisation channel.}

Some simplifications occur at this stage. Firstly, it is easy to see from eq.~\eqref{eq:szego-limit} that 
\begin{equation}
  \pf(M_{3})\big|_{q^{0}} = \pf(M_{4})\big|_{q^{0}}
  \label{eq:m3m4q0}    
  \end{equation}
  which reflects the projection of the ambitwistor string ``NS--tachyon''
  (we come back on this later). Then, we also have that
\begin{equation}
  \pf(M_{3})\big|_{\sqrt{q}} = -\pf(M_{4})\big|_{\sqrt{q}}
  \label{eq:m3m4q12}    
  \end{equation}
Using the two previous identities, we finally lend on eq.~(11)
presented in our previous work \cite{Geyer:2015bja}, which we
reproduce here;
\begin{equation}
  \label{eq:IL0-sugra}
  \mathcal{I}_{0}^{L} = \pf(M_{3}) \big|_{\sqrt{q}}+
  8\left(\pf(M_{3}) \big|_{q^0}-\pf(M_{2} )\big|_{q^0}\right)\,.
\end{equation}

The structure of this object may appear to be quite complicated with
regard to the extreme simplicity of one-loop maximal supergravity
integrands. It is actually a lot simpler than it looks, thanks to the
use of standard stringy theta function identities~\cite{Casali:2014hfa}.
The simplest identities involve products of up to three Szeg\H{o} kernels,
\begin{equation}
\sum_{\alpha=2,3,4} (-1)^\alpha Z_\alpha \prod_{r=1}^m S_\alpha (w_{(r)}|\tau) = 0, \quad 
\text{for} \quad m=0,1,2,3,
\end{equation}
where the $w_{(r)}$ can be arbitrary. At $n=0$, this is the well known
Jacobi's identity
$\theta_{2}(0,\tau)^{4}-\theta_{3}(0,\tau)^{4}+\theta_{4}(0,\tau)^{4}=0$. For
$m>3$, the analogous identities are valid only for
\begin{equation}
\label{sumztheta}
w_{(1)}+\ldots+w_{(m)}=0.
\end{equation}
Let us consider the case $m=4$. In our application, the condition on
the $w_{(r)}$ is naturally achieved by the set
$(z_{ij},z_{jk},z_{kl},z_{li})$, and the corresponding identity is
\begin{equation}
  \sum_{\alpha=2,3,4} (-1)^\alpha Z_\alpha \prod_{r=1}^4 S_\alpha
  (w_{(r)}|\tau)  = (2\pi)^4 \,,
\end{equation}
where we have ellipsed the global form degree $dz_i dz_j dz_k dz_l$.
Applied to \eqref{defIsugra}, these identities tell us that
$\mathcal{I}^L$ is a constant for four-point scattering
\cite{Casali:2014hfa}. This follows from the structure of the
Pfaffians, or equivalently from the structure of the vertex
operators. As in string theory, only the terms with 8 $\psi$'s or more
contribute. At $n$ points, each term in $\Pf(M_\alpha)$ is a product
of $m$ Szeg\H{o} kernels of type $\alpha$ and $m-n$ factors
$C_{ii}$. The Szeg\H{o} kernels of type $\alpha$ appear with arguments
which precisely satisfy the condition \eqref{sumztheta}. At four
points, the sum over spin structures ensures that no $C_{ii}$
contributes, as $m<4$ for those terms, whereas the $m=4$ identity
implies that $\mathcal{I}^L$ is a constant. For $n>4$, the sum over
spin structures ensures that there are no terms with more than $n-4$
factors of the type $C_{ii}$. The classical reference
is~\cite{Tsuchiya:1988va}, while~\cite{Broedel:2014vla} provide an
all-$n$ form for them.
Since the loop momentum enters explicitly in $\mathcal{I}^L$ only
through $C_{ii}$, this means that $\mathcal{I}^L$ is a polynomial of
order $n-4$ in the loop momentum, which is always contracted with a
polarisation vector. This discussion holds for any value of $\tau$. In
the limit $q\to 0$ ($\tau\to i\infty$), the Riemann identities become
algebraic identities, and can be easily checked at low multiplicity.

\subsubsection{Super-Yang-Mills theory}
\label{sec:super-yang-mills}
The supergravity amplitude was derived in \cite{Geyer:2015bja} from the genus-one ambitwistor string expression of \cite{Adamo:2013tsa}, as described above. However, a Yang-Mills analogue of the latter on the torus is not known, despite the progress in formulating an ambitwistor string version of gauge theory at tree level \cite{Mason:2013sva,Geyer:2014fka,Casali:2015vta,Adamo:2015gia}. Nevertheless, a proposal for super Yang-Mills amplitudes was given in \cite{Geyer:2015bja}, using the tree-level case and the relation between gauge theory and gravity.

At tree level, CHY \cite{Cachazo:2013hca} found that the expression for the gauge theory amplitude is obtained from the supergravity one by substituting one Pfaffian by a Parke-Taylor factor. The fact that gauge theory has only one Pfaffian, depending on a set of polarisation vectors ($\epsilon_i^\mu$), while gravity has two Pfaffians, each depending on a different set of polarisation vectors ($\epsilon_i^\mu$ and $\tilde\epsilon_i^\mu$), is a clear manifestation of gravity as a `square' of gauge theory, in agreement with the Kawai-Lewellen-Tye relations \cite{Kawai:1985xq} and with the Bern-Carrasco-Johansson (BCJ) double copy \cite{Bern:2008qj,Bern:2010ue}. At loop-level, the BCJ double copy is known to hold at one-loop in a variety of cases, including certain classes of amplitudes at any multiplicity \cite{Mafra:2012kh,Boels:2013bi,Bjerrum-Bohr:2013iza,Mafra:2014gja,He:2015wgf}, so it is natural to propose that one-loop formulae based on the scattering equations will also exhibit this property. The proposal of \cite{Geyer:2015bja} is that the super Yang-Mills amplitude is determined by\footnote{This gives the planar (single-trace) contribution to the amplitude. At one loop, the non-planar (double-trace) contribution is determined by the planar part for any gauge theory involving only particles in the adjoint representation of $SU(N_c)$ \cite{Bern:1994zx}.}
\begin{equation}
\mathcal{I}^{SYM}= \mathcal{I}^L_0\,\mathcal{I}^{PT},
\label{defIsym}
\end{equation}
where $\mathcal{I}^L_0$ is defined in \eqref{eq:IL0-sugra}. The one-loop analogue of the Parke-Taylor factor was conjectured to be
\begin{equation}
\label{eq:PTloop-def}
\mathcal{I}^{PT}=\sum_{i=1}^n \frac{\sigma_{\ell^+\,\ell^-}}{\sigma_{\ell^+\,i}\sigma_{i+1\, i}\sigma_{i+2\, i+1}\ldots \sigma_{i+n\,\,\ell^-}}\, ,
\end{equation}
where $\sigma_{\ell^+}$ and $\sigma_{\ell^-}$ represent the pair of insertion points of the loop momentum, and where we identify the labels $i\sim i+n$. With our choice of coordinates on the Riemann sphere we have fixed $\sigma_{\ell^+}=0$ and $\sigma_{\ell^-}=\infty$, so that
\begin{equation}
\mathcal{I}^{PT}= - \sum_{i=1}^n \frac1{\sigma_{i}\sigma_{i+1\, i}\sigma_{i+2\, i+1}\ldots \sigma_{i+n-1\,i+n}}\, .
\end{equation}

In Appendix~\ref{sec:motiv-form-ambitw}, we present a motivation for our conjecture based on the
heterotic ambitwistor models.

\subsubsection{Checks}
\label{sec:checks}

The conjectures above were verified explicitly in \cite{Geyer:2015bja}
at low multiplicity. In a later section, we will provide a proof for
these conjectures at any multiplicity, based on the factorisation
properties of the formulae. In this section, however, we will simply
give more details of the checks reported in
\cite{Geyer:2015bja}. These were performed in four dimensions, where
there exist simple known expressions for $N=4$ super Yang-Mills theory
and $N=8$ supergravity. These expressions should match our $D=10$
formulae when we restrict the external data to four dimensions, as we
argue in \cref{sec:dimens-reduct}. We make use of the spinor-helicity
formalism; see e.g.~\cite{Elvang:2013cua} for a review. The
polarisation vectors for positive and negative helicities are
represented as
\begin{equation}
\epsilon_i^{(+)}= \frac{|\eta\rangle [i|}{\langle i \eta\rangle} \ , \qquad 
\epsilon_i^{(-)}= \frac{|i\rangle [\eta|}{[ i \eta ]} \ ,
\label{treenum5}
\end{equation}
where $\eta=|\eta\rangle [\eta|$ is a reference vector. The four-point checks were performed analytically, using the solutions to the scattering equations presented in Appendix~\ref{4pt-soln}, whereas the five-point checks were performed numerically.

For the theories at hand, due to supersymmetry, the only external helicity configurations which lead to a non-vanishing amplitude have at least two particles of each helicity. We verified that our formulae for both super Yang-Mills theory and supergravity vanish if that condition is not satisfied.

For $n=4$, non-vanishing amplitudes must have two particles of each helicity type. Let us label the negative-helicity particles as $r$ and $s$. The loop integrands for these super Yang-Mills and supergravity amplitudes are well known \cite{Green:1982sw,Bern:2010tq}. After the application of our shift procedure, they are given by
\begin{align}
\hat{\cM}^{(1)}_{4,\,\text{SYM}}=  \frac1{\ell^2} \sum_{\rho\in \text{cyc}(1234)}
 \frac{N_4}{{\prod_{i=1}^{3} }\left(2\ell\cdot\sum_{j=1}^i k_{\rho_i} +\Big(\sum_{j=1}^i
k_{\rho_i}\Big)^2\right) }
\end{align}
and
\begin{align}
\hat{\cM}^{(1)}_{4,\,\text{SG}}=  \frac1{\ell^2} \sum_{\rho\in S_4}
 \frac{N_4^2}{{\prod_{i=1}^{3} }\left(2\ell\cdot\sum_{j=1}^i k_{\rho_i} +\Big(\sum_{j=1}^i
k_{\rho_i}\Big)^2\right) },
\end{align}
where we sum over cyclic permutations for gauge theory and over all permutations for gravity. The numerator 
\begin{equation}
N_4= \langle rs \rangle^4 \,\frac{[12][34]}{\langle 12 \rangle\langle 34 \rangle}
\end{equation}
is given by a permutation-invariant kinematic function, times the factor $\langle rs \rangle^4$ involving the negative-helicity particles. The fact that this numerator appears squared in gravity with respect to gauge theory is the simplest one-loop example of the BCJ double copy.\footnote{In the supergravity case, we could have distinguished the choice of $r,s$ in $\epsilon_i^\mu$ and $\tilde r,\tilde s$ in $\tilde \epsilon_i^\mu$, with the obvious outcome of substituting $\langle rs \rangle^8$ by $\langle rs \rangle^4\langle \tilde r\tilde s \rangle^4$ in the final result.} We verified that these expressions match our formulae. The amplitude for supergravity follows from the $n$-gon conjecture \eqref{formulangon}. This is due to the fact that, at four points, the quantities $\mathcal{I}^L$ and $\mathcal{I}^R$ are constant \cite{Casali:2014hfa}, as discussed above, each coinciding with the numerator $N_4$.

For $n=5$, we will consider the case of two negative-helicity particles (for two positive helicities, we should simply exchange the chirality of the spinors in the formulae). The complete integrands involve both pentagon and box integrals. In their shifted form, they are given by
\begin{align}
\hat{\cM}^{(1)}_{5,\,\text{SYM}}=\frac{1}{\ell^2}& \sum_{\rho\in \text{cyc}(12345)}
\frac{1}{\prod_{i=1}^4 \big(2\ell\cdot\sum_{j=1}^i k_{\rho_i} +(\sum_{j=1}^i k_{\rho_i})^2\big)} \times
\nonumber \\
& \times \left( N^{\text{pent}}_{\rho_1\rho_2\rho_3\rho_4\rho_5} \,+ 
\frac{1}{2} \sum_{i=1}^4 N^{\text{box}}_{[\rho_i,\rho_{i+1}]} \,\frac{2\ell\cdot\sum_{j=1}^i k_{\rho_i} +(\sum_{j=1}^i k_{\rho_i})^2}{k_{\rho_i}\cdot k_{\rho_{i+1}}} \,   \right)
\label{5ptsym}
\end{align}
and
\begin{align}
\hat{\cM}^{(1)}_{5,\,\text{SG}}=\frac{1}{\ell^2}& \sum_{\rho\in S_5}
\frac{1}{\prod_{i=1}^4 \big(2\ell\cdot\sum_{j=1}^i k_{\rho_i} +(\sum_{j=1}^i k_{\rho_i})^2\big)} \times
\nonumber \\
& \times \left( (N^{\text{pent}}_{\rho_1\rho_2\rho_3\rho_4\rho_5})^2 \,+ 
\frac{1}{4} \sum_{i=1}^4 (N^{\text{box}}_{[\rho_i,\rho_{i+1}]})^2 \,\frac{2\ell\cdot\sum_{j=1}^i k_{\rho_i} +(\sum_{j=1}^i k_{\rho_i})^2}{k_{\rho_i}\cdot k_{\rho_{i+1}}} \,   \right).
\label{5ptsugra}
\end{align}
A valid choice for the pentagon and box numerators was present in \cite{Carrasco:2011mn},
\begin{equation}
N^{\text{pent}}_{12345} = \langle rs \rangle^4 \,
\frac{[12][23][34][45][51]}{[12]\langle 23\rangle [34]\langle 41\rangle
-\langle 12\rangle [23] \langle 34\rangle [41]}
\end{equation}
and
\begin{equation}
N^{\text{box}}_{[1,2]} = N^{\text{box}}_{[1,2]345} = N^{\text{pent}}_{12345}-N^{\text{pent}}_{21345}.
\end{equation}
The numerator $N^{\text{box}}_{[1,2]}$ corresponds to a box with one massive corner, $K=k_1+k_2$, and is independent of the ordering of 3,4,5. We verified that our expressions match these integrands. There are other choices of numerators leading to the same integrands, such as the choice in \cite{He:2015wgf}, which extends to any multiplicity (in the MHV sector, i.e. two negative helicities), and arises as the dimensional reduction of the superstring-derived numerators of \cite{Mafra:2014gja}. In that case, the pentagon numerators depend on the loop momentum, but \eqref{5ptsym} and \eqref{5ptsugra} take the same form, because the relevant shifts are of the type
\begin{equation}
N^{\text{pent}}_{12345}(\ell-k_1)=N^{\text{pent}}_{23451}(\ell).
\end{equation}
Here, we define the loop momentum as flowing between the first and last leg of the numerator, and this behaviour with respect to shifts follows from cyclic symmetry. It is therefore trivial to translate between the shifted representation of the integrand and the standard one.

\section{Non-supersymmetric theories}
\label{sec:non-supersymm-theor}

In this section, we describe new formulae for Yang-Mills theory and
gravity amplitudes in the absence of supersymmetry. The main tool in
arriving at these formulae is the detailed analysis of the sum over
spin structures (or GSO sum), which was part of the formulae for supergravity and
super Yang-Mills theory presented in \cite{Geyer:2015bja} and reviewed
above in \eqref{eq:IL0-sugra}.

On the torus, these GSO sectors correspond to the various states
propagating in the loop. Once taken down to the sphere, we will see how
they provide amplitudes with $n$ external on-shell gravitons (or
gluons) and additional NS-NS, R-NS, NS-R or R-R additional
off-shell states (resp. NS or R), running in the loop.  
In particular, we are able to see that the $M_2$ contribution in $\eqref{eq:IL0-sugra}$ corresponds to the Ramond sector.  Furthermore the $M_3$ contributions naturally combine as a reduced Pfaffian of an $(n+2)\times (n+2)$ matrix in which the number of NS states running in the loop can be chosen at will.  

Taken individually, these one-loop amplitudes are non-supersymmetric. Using
these building blocks, one can engineer various types of
amplitudes.
For gravity, we discuss both NS-NS gravity (graviton, dilaton,
B-field) and pure Einstein gravity (graviton only).
We later show that our formulae match the known 4-point one-loop amplitudes in Yang-Mills theory and gravity, in a certain helicity sector.

A subtlety that arises however is that a class of degenerate solutions to the scattering equation becomes nontrivial (and in fact potentially divergent) for these non-supersymmetric amplitudes, as described by \cite{Baadsgaard:2015hia,
He:2015yua} for the biadjoint scalar theories.  So we first rephrase the scattering equations in a more $SL(2,\C)$ invariant manner to give a less degenerate formulation of these solutions.  In the next section, we will see that the contribution of these degenerate solutions is finite for our proposed formulae, and can furthermore be discarded without changing the integrated amplitude.

\subsection{General form of the one-loop scattering equations}
\label{sec:general-form-one}
Before proceeding, we rewrite our previous expressions in
order to use their different building blocks for non-supersymmetric
theories. The reason for this, as pointed out in \cite{He:2015yua}, is
that the one-loop scattering equations on the sphere possess, in their
general form, more solutions than are obvious from
\eqref{SE2n}.  We used part of the $SL(2,\mathbb{C})$
freedom on the Riemann sphere to fix the positions of the
loop-momentum insertions at $\sigma_{\ell^+}=0$ and
$\sigma_{\ell^-}=\infty$ as was natural from the
degeneration of the torus into a nodal Riemann sphere. However there are  extra solutions to the scattering
equations for which $\sigma_{\ell^+}=\sigma_{\ell^-}$ with the remaining
$\sigma_i$ then satisfy the tree-level scattering equations (these solutions do arise in the previous gauge fixing with $\sigma_1=1$, in which all the $\sigma_i= 1$ also, but this gauge is much more awkward to deal with for these solutions).  We will see that these
extra solutions do not contribute to the formulae for maximal
supergravity and super Yang-Mills theory given in \cite{Geyer:2015bja}
and reviewed above, but do contribute for generic theories, e.g. the
biadjoint scalar theory. As discussed in \cite{He:2015yua}, the total
number of solutions contributing is $(n-1)!-(n-2)!$, of which
$(n-1)!-2(n-2)!$ are the `regular' solutions considered in
\eqref{SE2n}, and $(n-2)!$ are the `singular' solutions for which
$\sigma_{\ell^+}=\sigma_{\ell^-}$.

Hereafter, we will write the one-loop formulae based on the general scattering equations as
\be{1-loopgeneral}
\cM^{(1)}= -\int d^d \ell \, \frac{1}{\ell^2} \int 
\frac{d\sigma_{\ell^+} d\sigma_{\ell^-} d^n\sigma}{\text{vol}\,SL(2,\mathbb{C})} \;\; \hat{\cI}
\;\bar\delta(\text{Res}_{\sigma_{\ell^+}} S) \bar\delta(\text{Res}_{\sigma_{\ell^-}} S) \prod_i{}' \;\bar\delta(\mathrm{Res}_{\sigma_i}S), 
\ee
where we should not fix the positions of both $\sigma_{\ell^+}$ and $\sigma_{\ell^-}$ in choosing the $SL(2,\mathbb{C})$ gauge, to avoid losing the `singular' solutions. Since
\begin{equation}
 P=   \left( \frac{\ell}{\sigma-\sigma_{\ell^+}} - \frac{\ell}{\sigma-\sigma_{\ell^-}} + \sum_{i=1}^n 
\frac{k_i}{\sigma-\sigma_i} \right) d\sigma
\end{equation}
and
\begin{equation}
S= P^2-  \left( \frac{\ell}{\sigma-\sigma_{\ell^+}} - \frac{\ell}{\sigma-\sigma_{\ell^-}}\right)^2 d\sigma^2,
\end{equation}
the scattering equations take the form
\begin{align}
\mathrm{Res}_{\sigma_i}S = k_i\cdot P(\sigma_i) = \frac{k_i\cdot
  \ell}{\sigma_i-\sigma_{\ell^+}} -\frac{k_i\cdot
  \ell}{\sigma_i-\sigma_{\ell^-}} + \sum_{j\neq i}\frac{k_i\cdot
  k_j}{\sigma_i-\sigma_j} =0\, ,  \\
\mathrm{Res}_{\sigma_{\ell^-}}S = -\sum_{i}\frac{\ell\cdot k_j}{\sigma_{\ell^-}-\sigma_i} =0\, ,\\
\mathrm{Res}_{\sigma_{\ell^+}}S = \sum_{i}\frac{\ell\cdot k_j}{\sigma_{\ell^+}-\sigma_i} =0\, .
\end{align}
In the formula \eqref{1-loopgeneral}, the prime on the product denotes the fact that only $n-3$ of these equations should be enforced (with the those at $\sigma_{\ell^\pm}$ on an equal footing now with the others).  The three remaining equations are a consequence of the three relations between the equations following from the identities described following \eqref{SE2n}.

The interesting part of formula \eqref{1-loopgeneral} is the quantity $\cI$ specifying the theory. We introduced
\begin{equation}
\hat{\cI} = \frac1{(\sigma_{\ell^+\,\ell^-})^4} \, \cI\,,
\label{ihatdef}
\end{equation}
so that $\hat{\cI}$ has the same $SL(2,\mathbb{C})$ weight in $\{\sigma_{\ell^+},\sigma_{\ell^-},\sigma_i\}$, as required by the integration, whereas $\cI$ has zero weight in $\{\sigma_{\ell^+},\sigma_{\ell^-}\}$. The $n$-gon formula now corresponds to
\begin{equation}
\cI^{n-\text{gon}} = \prod_{i=1}^n \left( \frac{\sigma_{\ell^+\,\ell^-}}{\sigma_{i\,\ell^+}\,\sigma_{i\,\ell^-}} \right)^2\,.
\end{equation}
The relation to the $n$-gon representation in \cite{He:2015yua} follows from the identity
\begin{equation}
\sum_{\alpha\in S_n} \, \frac{\sigma_{\ell^+\,\ell^-}}{ \sigma_{\ell^+\,\alpha(1)}\,
\sigma_{\alpha(1)\,\alpha(2)} \ldots \sigma_{\alpha(n-1)\,\alpha(n)}\, \sigma_{\alpha(n)\,-\ell}} =
\prod_{i=1}^n  \frac{\sigma_{\ell^+\,\ell^-}}{\sigma_{\ell^+\,i}\,\sigma_{i\,\ell^-}}.
\end{equation}
This follows by  induction and  partial fractions.

For supergravity and for super Yang-Mills theory, we have
\begin{equation}
\cI^{SG} = \cI^L_0\, \cI^R_0 \qquad \text{and} \qquad \cI^{SYM} =  \cI^L_0 \,\cI^{PT} \,,
\end{equation}
where $\cI^{PT}$ was defined in \eqref{eq:PTloop-def}.
The quantities $\cI^L_0$ and $\cI^R_0$ are defined as in \eqref{eq:IL0-sugra}, but the Szeg\H{o} kernels in the matrices $M_\alpha$ are now expressed as
\begin{equation}
\begin{aligned}
S_2(z_{ij}|\tau) &\to \frac{1}{2}\,\frac{1}{\sigma_{i\,j}} \left(\sqrt{\frac{\sigma_{i\,\ell^+}\,\sigma_{j\,\ell^-}}{\sigma_{j\,\ell^+}\,\sigma_{i\,\ell^-}}}+ \sqrt{\frac{\sigma_{j\,\ell^+}\,\sigma_{i\,\ell^-}}{\sigma_{i\,\ell^+}\,\sigma_{j\,\ell^-}}} \right)  \sqrt{d\sigma_ i} \sqrt{d\sigma_ j} ,
  \\ 
S_3(z_{ij}|\tau) &\to \frac{1}{\sigma_{i\,j}} \left(1 +\sqrt{q} \;\frac{(\sigma_{i\,j}\,\sigma_{\ell^+\,\ell^-})^2}{\sigma_{i\,\ell^+}\,\sigma_{i\,\ell^-}\,\sigma_{j\,\ell^+}\,\sigma_{j\,\ell^-}}\right) \sqrt{d\sigma_ i} \sqrt{d\sigma_ j} ,
\end{aligned}\label{eq:szegolimitsl2c}
\end{equation}
in the limit $q\to 0$.

Regarding the `singular' solutions to the scattering equations, it is
clear that they do not contribute in the $n$-gon case, since
$\hat{\cI}^{n-\text{gon}}\to 0$ for $\sigma_{\ell^+}\to
\sigma_{\ell^-}$. However, they do contribute in the case of the
non-supersymmetric Yang-Mills and gravity formulae to be presented
below, and some care is needed in their evaluation, due to the factor
$(\sigma_{\ell^+\,\ell^-})^{-4}$ in \eqref{ihatdef}. It is easy to see
that $\cI^{PT}\sim
\mathcal{O}\left((\sigma_{\ell^+\,\ell^-})^2\right)$. We will show in \S\ref{sec:powercounting} that
\begin{align}
\pf(M_2)|_{q^0}, \, \pf(M_3)|_{q^0} & \sim \mathcal{O}\left((\sigma_{\ell^+\,\ell^-})^2\right), \nonumber\\
\left(\pf(M_2)-\pf(M_3)\right)|_{q^0}, \, \pf(M_3)|_{\sqrt{q}} & \sim \mathcal{O}\left((\sigma_{\ell^+\,\ell^-})^3\right).
\end{align}
This is irrespective of the context there of taking the limit of large $\ell$, or of considering the `singular' solutions.
The contributions from these solutions to our formulae are therefore finite, as expected, and they vanish in the case of $\hat{\cI}^{SG}$ and $\hat{\cI}^{SYM}$.
Furthermore, we will see that the degenerate solutions do not contribute to the Q-cuts and hence to the integrated amplitudes, and so can be discarded.
It would, however, be useful to have an explicit formula for the limit.

\subsection{Contributions of GSO sectors and the NS Pfaffian}
\label{sec:analys-indiv-gso}
We now turn to the individual contributions of each GSO sector to the
supergravity amplitudes. This analysis is based on standard
string theory, the reader is referred to standard string textbooks such as that by Polchinski, or \cite{Adamo:2013tsa} for further details.

We work in dimension $d$ for $ d\leq10$ by
dimensional reduction from $d=10$. Since there are no winding modes, taking the radii of
compactification to zero is enough to decouple the Kaluza-Klein
modes, see appendix~\ref{sec:dimens-reduct} for further comments.
We consider first the ``left'' and ``right'' sectors
independently.\footnote{In string theory, this is justified by the
  chiral splitting of the worldsheet correlator whose dramatic consequences include  the KLT
  relations~\cite{Kawai:1985xq,BjerrumBohr:2010hn}. In the ambitwistor string this follows from KLT orthogonality
  \cite{Cachazo:2013gna}.} These
consists of $\mathcal{N}=1$ sYM multiplets in $d=10$, and their
dimensional reduction is well known~\cite{Brink:1976bc}. The 10
dimensional vector $A^{(10)}_{\mu}$ splits into a $d$-dimensional
vector and $10-d$ scalars -- we mention the case of fermions
below.

The important point for the present analysis is that the partition
functions $Z_{a,b}$ as defined in eq.~\eqref{eq:pt-funs} are those of
particular sectors of the theory. Precisely, $a=0$ and $a=1$
correspond to the NS and R sectors, while $b=0,1$ correspond to the
periodicity of the boundary conditions. Thus
\begin{center}
  \begin{tabular}[h]{rcl}
  $Z_{3},Z_{4}$& $\longleftrightarrow$ & NS sector,\\
  $Z_{1},Z_{2}$ &  $\longleftrightarrow$ &R sector.
  \end{tabular}
\end{center}
Here we will ignore the odd spin structure $Z_1$ as it only contributes when the kinematics are fully in $d=10$.  This is similarly the case for correlators, whose chiral residues at $q=0$
(i.e. $\mathcal{I}^{L}$ and $\mathcal{I}^{R}$) we gave in
eq.~(\ref{eq:sugra-sphere}).  So we define
\begin{align}
  \mathcal{I}_{NS} &=\pf(M_{3})\big|_{\sqrt{q}}+8\,
                                    \pf(M_{3})\big|_{q^{0}}\label{eq:chiral-NS}\,,\\ 
\mathcal{I}_{R\phantom{S}}&=8\pf(M_{2})\big|_{q^{0}}\label{eq:chiral-R}\,.
\end{align}

In $10$ dimensions, these correspond to chiral integrands for one vector
and one Majorana--Weyl fermion.
When we reduce  to $d<10$ dimensions, the problem that one
faces is how to decide which parts of the integrand
(\ref{eq:chiral-NS}) correspond to the $10-d$ scalars and which part
corresponds to the vector.
Following in particular the string theory analysis of
\cite{Tourkine:2012vx}, it is easy to identify first the scalar
contribution by reading off the (vanishing) coefficient
$\frac12(\pf(M_{3})\big|_{q^{0}}-\pf(M_{4})\big|_{q^{0}})$ of the
$1/\sqrt{q}$ pole in eq.~(\ref{eq:sugra-sphere}).  Ignoring
the minus sign of the GSO projection, it corresponds to the (vanishing)
propagation of the unphysical scalar state
$\delta(\gamma_{1}) \delta(\gamma_{2})c\tilde c \exp(i k\cdot X)$.
With this we identify the scalar integrand as
$\pf(M_{3})\big|_{q^{0}}$ (recalling eq.~(\ref{eq:m3m4q0})) and we can deduce  
\begin{subequations}
  \begin{align}
    \mathcal{I}^{L}_{\mathrm{scal}} &=  \pf(M_{3})\big|_{q^{0}}\label{eq:chiral-scalar}\\
    \mathcal{I}^{L}_{\mathrm{vect}} &= \pf(M_{3})\big|_{\sqrt{q}}+ (d-2)\,
                                      \pf(M_{3})\big|_{q^{0}}\label{eq:chiral-vector}\\
    \mathcal{I}^{L}_{\mathrm{ferm}} &=
                                      -c_{d}\,\pf(M_{2})\big|_{q^{0}}\, .\label{eq:chiral-fermion}
  \end{align}
\end{subequations}
The fermion integrand eq.~(\ref{eq:chiral-fermion}) comes
with a constant $c_{d}$ that follows from dimensional reduction of the 10d
Majorana-Weyl spinor, which produces an 8d Weyl spinor, four 6d
simplectic-Weyl spinors, and four 4d Majorana spinors. From
eq.~(\ref{eq:sugra-sphere}) we read off $c_{10}=8$, therefore we have
$c_{8}=8$, $c_{6}=2$, $c_{4}=2$. 

We can therefore obtain the reduced gravitational states in the loop by taking the tensor
product of the two sectors
\begin{subequations}
\label{eq:chiral-int-grav}
  \begin{align}
    \label{eq:grav-scalar}
    ( \pf(M_{3})\big|_{\sqrt{q}}+ (d-2)\,\pf(M_{3})\big|_{q^{0}})^{2}&=
    \mathcal{I}_{\mathrm{NS-NS-grav}}\\
    \label{eq:scalar}
    (\pf(M_{3})\big|_{q^{0}})^{2}&=\mathcal{I}_\mathrm{scalar}\\
    \label{eq:vect}
    (\pf(M_{3})\big|_{q^{0}})( \pf(M_{3})\big|_{\sqrt{q}}+ (d-2)\,\pf(M_{3})\big|_{q^{0}})&=\mathcal{I}_\mathrm{vector}\,,
  \end{align}
\end{subequations}
from the NS--NS sector. Here, by NS--NS gravity 
in eq.~(\ref{eq:grav-scalar}) we mean Einstein gravity coupled to a B-field and a dilaton.
The squares are to be understood as incorporating a replacement of
the $\epsilon$'s by $\tilde \epsilon$'s in the second factor.

In the R--NS and NS--R sectors, we have
\begin{subequations}
  \begin{align}
    \label{eq:fermion}
    -c_{d}\,\pf(M_{2})\big|_{q^{0}}\pf(M_{3})\big|_{q^{0}}&= \mathcal{I}_\mathrm{fermion}\,,\\
    \label{eq:gravitino}
    -c_{d}\,\pf(M_{2})\big|_{q^{0}}\left( \pf(M_{3})\big|_{\sqrt{q}}+ (d-2)\,
    \pf(M_{3})\big|_{q^{0}}\right)&= \mathcal{I}_\mathrm{gravitino}\,.
  \end{align}
\end{subequations}
The R--R states in $d=10$ simply involve the square
\begin{equation}
  \label{eq:RR}
  \left(8\pf(M_{2})\big|_{q^{0}}\right)^{2}=\mathcal{I}_{\mathrm{RR}}\,,
\end{equation}
and it would be interesting to investigate this sector further.

With these interpretations of how different fields in the loops correspond to different  ingredients of the 1-loop correlator, we can make the following proposals. 

\paragraph{Pure YM and gravity amplitudes.} Firstly, by adjusting the building
blocks in \eqref{eq:chiral-int-grav} in an appropriate way, we conjecture that a four-dimensional
one-loop pure gravity amplitude can be written as follows;
\begin{equation}
  \label{eq:pure-gravity}
  \mathcal{I}_{\mathrm{pure-grav}}^{(d=4)}=( \pf(M_{3})\big|_{\sqrt{q}}+ 2\,\pf(M_{3})\big|_{q^{0}})^{2}-2\,(\pf(M_{3})\big|_{q^{0}})^{2}
\end{equation}
where the subtraction removes the two scalar degrees of freedom of the
dilaton and B-field
\footnote{The single degree of freedom of the
  B-field in four dimensions is the axion.}. This subtraction is analogous to the
prescription of~\cite{Johansson:2014zca}, where scalars with fermionic
statistics were introduced to implement the BCJ double copy in
loop-level amplitudes of pure gravity theories.

Using the prescription reviewed in sec.~\ref{sec:super-yang-mills}, we
can also build four-dimensional pure YM amplitudes, by simply
multiplying the vector integrand of eq.~\eqref{eq:chiral-vector}
with the Parke-Taylor factor (\ref{eq:PTloop-def}),
\begin{equation}
  \label{eq:pure-YM}
  \mathcal{I}_{\mathrm{pure-YM}}^{(d=4)}=
  ( \pf(M_{3})\big|_{\sqrt{q}}+ 2\,\pf(M_{3})\big|_{q^{0}})\,\mathcal{I}^{PT}.
\end{equation}

We will perform checks on these amplitudes in the next
subsection and give a general proof in the next section. 
Note that although these standard string ideas are suggestive of the above proposals, they do not constitute a proof, so it is important to produce an independent proof.\footnote{Amongst other issues, a point that is missing is
  that the abelian gauge groups do not get enhanced at self-dual radii
  of compactification as there are no winding modes that could become massless.}

\paragraph{Pfaffian structure of the new amplitudes.}

A feature of the previous formulae is that they
provide information on the structure of tree-level amplitudes.
The finite residue that we extract at $q=0$ coincides with the residue
at the factorisation channel $q\simeq \ell^{2}\to0$. The only
difference between our expression and a ``single cut'' is the presence
of $1/\ell^{2}$ and the full $d$-dimensional integral
$\int d^{d}\ell$.
Therefore, we have  a variety of tree-level amplitudes with
$n+2$ (on-shell) particles, in a forward limit configuration where
$k_{n+1}=-k_{n+2}=\ell$ are off-shell, but traced over their polarization states.

One may therefore expect that  the integrands of the pure gravity and
Yang-Mills amplitudes \eqref{eq:pure-gravity} and (\ref{eq:pure-YM})
can be reformulated to look more like CHY Pfaffians. 

For Yang-Mills, this can be done as follows: the full supergravity integrands
$\hat{\cI}^{L,R}_0=\frac{1}{\sigma_{\ell^+\,\ell^-}}\cI^{L,R}_0$ can be expressed more compactly in terms of a single NS sector matrix $M_{NS}$,  defined explicitly below, as
\begin{equation}\label{eq:defInt_MNS}
 \hat{\cI}^{L,R}_0=\sum_r\pf'(M_{\text{NS}}^r)-\frac{c_d}{\sigma_{\ell^+\,\ell^-}^2}\pf(M_2)\,,
\end{equation}
where
$\pf'(M_{\text{NS}}^r)\equiv\frac{-1}{\sigma_{\ell^+\,\ell^-}}\pf({M_{\text{NS}}^r}_{(\ell^+\,\ell^-)})$, and the brackets $(\ell^+\,\ell^-)$ indicate that the rows and columns associated to $\ell^+$ and $\ell^-$ have been removed. In particular, this implies that
\begin{equation}
  \sum_r \pf'(M_{\text{NS}})=\pf(M_3)\,\big|_{\sqrt{q}}+(d-2)\pf(M_3)\,\big|_{q^0}\,.
\label{eq:PfNs-def}
\end{equation}
The matrix $M_{\text{NS}}^r$ is defined by
\begin{equation}\label{eq:defMNS}
 M_{\text{NS}}^r=\begin{pmatrix}A & -C^T\\ C & B \end{pmatrix}\,,
\end{equation}
and more specifically
\begin{equation}  \label{eq:defMNS_details}
\begingroup
\renewcommand*{\arraystretch}{1.4}
 M_{\text{NS}}^r=\left( \begin{array}{cccc:cccc}
    0 & 0 & \frac{\ell\cdot k_i}{\sigma_{\ell^+i}} & \frac{\ell\cdot k_j}{\sigma_{\ell^+j}} & -\epsilon^r\cdot P(\sigma_{\ell^+}) & \frac{\ell\cdot\epsilon^r}{\sigma_{\ell^+\ell^-}} & \frac{\ell\cdot\epsilon_i}{\sigma_{\ell^+i}} & \frac{\ell\cdot\epsilon_j}{\sigma_{\ell^+j}}\\
    & 0 & -\frac{\ell\cdot k_i}{\sigma_{\ell^-i}} & -\frac{\ell\cdot k_j}{\sigma_{\ell^-j}} & -\frac{\ell\cdot\epsilon^r}{\sigma_{\ell^-\ell^+}} & -\epsilon^r\cdot P(\sigma_{\ell^-}) & -\frac{\ell\cdot\epsilon_i}{\sigma_{\ell^-i}} & -\frac{\ell\cdot\epsilon_j}{\sigma_{\ell^-j}}\\
    &   & 0 & \frac{k_i\cdot k_j}{\sigma_{ij}} & \frac{k_i\cdot\epsilon^r}{\sigma_{\ell^+i}} & \frac{k_i\cdot\epsilon^r}{\sigma_{\ell^-i}} & -\epsilon_i\cdot P(\sigma_{i}) & \frac{k_i\cdot\epsilon_j}{\sigma_{ij}}\\
    &   &   & 0 & \frac{\epsilon^r\cdot k_j}{\sigma_{\ell^+j}} & \frac{k_j\cdot\epsilon^r}{\sigma_{\ell^-j}} & \frac{k_j\cdot\epsilon_i}{\sigma_{ji}} & -\epsilon_j\cdot P(\sigma_j)\\
    \hdashline
    & & & & 0 & \frac{d-2}{\sigma_{\ell^+\ell^-}} & \frac{ \epsilon^r\cdot\epsilon_i}{\sigma_{\ell^+i}} & \frac{ \epsilon^r\cdot\epsilon_j}{\sigma_{\ell^+j}}\\
    & & & &   & 0 & -\frac{\epsilon^r\cdot\epsilon_i}{\sigma_{\ell^-i}} & \frac{\epsilon^r\cdot\epsilon_j}{\sigma_{\ell^-j}}\\
    & & & &   &   & 0 & \frac{\epsilon_i\cdot \epsilon_j}{\sigma_{ij}}\\
    & & & &   &   &   & 0\\
  \end{array} \right)\,.
  \endgroup
\end{equation}
The sum runs over a basis of polarisation vectors $\epsilon^r$, and
$d$ denotes the space-time dimension. Note in particular that the reduced
Pfaffian is well-defined since this matrix has indeed co-rank
two. Similar to the structure at tree-level, the vectors $(1,\dots,1,0,\dots,0)$
and $(\sigma_{\ell^+},\sigma_{\ell^-},\sigma_1,\dots,\sigma_n,0,\dots,0)$ span the
kernel of the matrix $M$ on the support of the scattering
equations.

The proof of eq.~\eqref{eq:PfNs-def}, relies on standard properties of
Pfaffians, and the interested reader is referred to \cref{sec:MNS}. In
this form, the NS contribution to the integrand is very suggestive of
a worldsheet CFT correlator, and indeed it is not hard to see that
this Pfaffian arises form an off-shell sphere correlator with two
points whose polarizations should be replaced by a photon propagator
in a physical gauge.

The gravity case uses also $M_{NS}$, and is treated in more
details later in sec.~\ref{sec:non-supersymm-theor-1}, when we discuss
the factorisation properties of these pure Yang-Mills and gravity
amplitudes. Basically, we simply decompose the difference of squares
in eq.~\eqref{eq:pure-gravity} as a product.

To conclude this discussion, we note that the fermion contribution of
eq.~\eqref{eq:fermion} for a two-fermion-$n$-graviton integrand seems to
arise naturally as a factorised product of Pfaffians.
Although amplitudes with fermions have been computed in
\cite{Adamo:2013tsa}, no Pfaffian-like form for higher-point
amplitudes is known, partly because of the non-polynomial nature of
the spin-field OPE's that prevents the naive re-summing of the correlators. The $n$-point amplitude is known,
however~\cite{Mafra:2011nv}; using the procedure
of~\cite{Gomez:2013wza} it is possible to cherry-pick the
2-fermion-$n$-boson component of the string amplitude
(using~\cite{Mafra:2015vca} for instance). It would be interesting to
see if a Pfaffian arises in doing so.
It is possible that eq.~\eqref{eq:fermion} is different from the
generic -- i.e.~non-forward -- amplitude due to terms vanishing with
$\ell^{2}$. Nevertheless, this hints at some unexpected simplicity.

\subsection{Checks on all-plus amplitudes}
\label{sec:gso-projection}

We have presented proposals for the integrands of four-dimensional $n$-particle amplitudes in non-supersymmetric gauge theory and gravity. In the gravity case, we distinguished between the cases of pure gravity and the theory consisting of the NS-NS sector of supergravity, namely the theory with a graviton, a dilaton and a B-field. While we focused on four dimensions for the sake of being explicit, it is clear that analogous constructions can be made of theories with different matter couplings in various dimensions, including different degrees of supersymmetry if we also introduce fermions.

We checked our conjectures against know expressions for the simplest class of non-supersymmetric four-dimensional amplitudes. These are the amplitudes for which all external legs have the same helicity, which we will choose to be positive. The supersymmetric Ward identities \cite{Grisaru:1979re} lead to the following relations for these non-supersymmetric amplitudes:
\begin{equation}
\begin{aligned}
\cM^{(1)}_{\text{pure-YM}}(\text{all-plus}) =&\, 2\, \cM^{(1,\,\text{scalar})}_{\text{pure-YM}}(\text{all-plus}) \\
\cM^{(1)}_{\text{NS-NS-grav}}(\text{all-plus}) =& \,2\, \cM^{(1)}_{\text{pure-grav}}(\text{all-plus}) =4\, \cM^{(1,\,\text{scalar})}_{\text{pure-grav}}(\text{all-plus}) \ .
\label{allplusscalars}
\end{aligned}
\end{equation}
The superscript on the right-hand side indicates an amplitude where
only one real minimally-coupled scalar is running in the loop. For
gauge theory and for pure gravity, the two helicity states running in
the loop are effectively equivalent to two real scalars, hence the
factor of two, whereas for NS-NS gravity there are two extra states
(dilaton and axion), leading to four real scalars. We checked at four and five points that
\begin{equation}
\pf(M_{3})\big|_{\sqrt{q}}(\text{all-plus})=0 \ .
\label{derallpluszero}
\end{equation}
From this simple fact, it is easy to see that our conjectured expressions satisfy the relations \eqref{allplusscalars}. We believe this to hold for any multiplicity. These statements also apply to amplitudes with one helicity distinct from all others (say one minus, rest plus), which also satisfy the relations \eqref{allplusscalars}.

We have explicitly checked that our conjectures for pure gauge theory and gravity match the (shifted) integrands for all-plus amplitudes in the case of $n=4$. For concreteness, we will write down the integrands explicitly. The Feynman rules for the all-plus amplitudes take a particularly simple form in light-cone gauge, because such amplitudes correspond to the self-dual sector of the theory \cite{Chalmers:1996rq,Cangemi:1996rx}. The rules for the vertices and external factors in all-plus amplitudes in gauge theory can be taken to be \cite{Boels:2013bi}
\begin{equation}
(i^+,j^+,k^-)=X_{i,j} \, f^{a_i a_j a_k}, \qquad e_i^{(+)}=\frac{1}{\langle \eta i\rangle^2},
\label{extstates}
\end{equation}
whereas in gravity they are
\begin{equation}
(i^{++},j^{++},k^{--})=X_{i,j}^2, \qquad e_i^{(++)}=\frac{1}{\langle \eta i\rangle^4} \ .
\end{equation}
We are again making use of the spinor helicity formalism, and taking $\eta=|\eta\rangle [\eta|$ to be a reference vector. Gauge invariance implies that the amplitudes are independent of the choice of $\eta$. The object $X_{i,j}$ is defined with respect to the spinors $|\hat{i}]=K_i |\eta\rangle$, which can be defined for any (generically off-shell) momentum $K_i$ using the reference spinor $|\eta\rangle$,
\begin{equation}
X_{i,j}=-[\hat{i}\hat{j}] = - X_{j,i}\ .
\end{equation}
The direct ``square'' relation between the rules in gauge theory and in gravity makes the BCJ double copy manifest for these amplitudes \cite{Monteiro:2011pc}.

Using the diagrammatic rules above, we can write the (shifted) integrand for the single-trace contribution to gluon scattering as
\begin{align}
\hat{\cM}^{(1,\,\text{scalar})}_{\text{pure-YM}}(1^+2^+3^+4^+) =  \frac1{ \ell^2 \prod_{i=1}^n \langle \eta i\rangle^2}  \sum_{\rho\in \text{cyc}(1234)} \left(I^{\text{box-YM}}_{\rho_1\rho_2\rho_3\rho_4} +I^{\text{tri-YM}}_{[\rho_1,\rho_2]\rho_3\rho_4} + \frac1{2}\,I^{\text{bub-YM}}_{[\rho_1,\rho_2][\rho_3,\rho_4]} \right)  \ ,
\end{align}
with
\begin{align}
I^{\text{box-YM}}_{1234} =&  \frac{X_{\ell,1}X_{\ell+1,2}X_{\ell-4,3}X_{\ell,4}}{(2\ell\cdot k_1)(2\ell\cdot (k_1+k_2)+2k_1\cdot k_2)(-2\ell\cdot k_4)} 
\ , \nonumber \\
I^{\text{tri-YM}}_{[1,2]34} =& \frac{X_{1,2}}{2k_1\cdot k_2} \Bigg(
 \frac{X_{\ell,1+2}X_{\ell,3}X_{\ell+3,4}}{(2\ell\cdot k_3)(2\ell\cdot (k_3+k_4)+2k_3\cdot k_4)}
\nonumber \\
& \quad + \frac{X_{\ell,1+2}X_{\ell-4,3}X_{\ell,4}}{(-2\ell\cdot (k_3+k_4)+2k_3\cdot k_4)(-2\ell\cdot k_4)}   + \frac{X_{\ell+4,1+2}X_{\ell,3}X_{\ell,4}}{(-2\ell\cdot k_3)(2\ell\cdot k_4)}
 \Bigg) \ ,\nonumber \\
I^{\text{bub-YM}}_{[1,2][3,4]} =& \frac{X_{1,2}X_{3,4}X_{\ell,1+2}X_{\ell,3+4}}{(2k_1\cdot k_2)^2} 
\Bigg( \frac1{2\ell\cdot (k_1+k_2)+2k_1\cdot k_2} + \frac1{-2\ell\cdot (k_1+k_2)+2k_1\cdot k_2}  \Bigg)\nonumber\ .
\end{align}
Notice that there is no contribution from external-leg bubbles. As discussed in \cite{He:2015yua} in the context of the biadjoint scalar theory, such contribution must be proportional to the tree-level amplitude, which vanishes for the all-plus helicity sector. We should mention that the `singular' solutions of the scattering equations, for which $\sigma_{\ell^+}=\sigma_{\ell^-}$, give a directly vanishing contribution to the all-plus loop integrand.

For the scattering of gravitons, we have
\begin{align}
\hat{\cM}^{(1,\,\text{scalar})}_{\text{pure-grav}}(1^+2^+3^+4^+)=  \frac1{ \ell^2 \prod_{i=1}^n \langle \eta i\rangle^4} \sum_{\rho\in S_4}
\left(I^{\text{box-grav}}_{\rho_1\rho_2\rho_3\rho_4} + \frac1{2}\, I^{\text{tri-grav}}_{[\rho_1,\rho_2]\rho_3\rho_4} + \frac1{4}\,I^{\text{bub-grav}}_{[\rho_1,\rho_2][\rho_3,\rho_4]} \right) \ ,
\end{align}
where $I^{\text{box-grav}}$, $I^{\text{tri-grav}}$ and $I^{\text{bub-grav}}$ are respectively obtained from $I^{\text{box-YM}}$, $I^{\text{tri-YM}}$ and $I^{\text{bub-YM}}$ through the substitution $X_{.,.}\to (X_{.,.})^2$ in the numerators.

\section{Proof for non-supersymmetric amplitudes at one-loop}
\label{sec:factorization}

We now give a full proof of the formulae for one-loop amplitudes derived above for non-supersymmetric theories, i.e., the $n$-gons, biadjoint scalar theory, Yang-Mills and gravity.\footnote{In particular, the proof holds for both the NS sector (including the B-field and the dilaton for gravity) and the pure theories.}
There are three main ingredients in our proof.  The first is to identify the poles in our formulae arising from factorisation or bubbling of the Riemann sphere, which allows us to determine the location of the poles and their residues. Since the $1/\ell^2$ is already apparent, this analysis of factorisation will lead to the identification of the residue at two poles. The second is the theory of `Q-cuts' introduced in \cite{Baadsgaard:2015twa} that expresses  a general one-loop amplitude in terms of tree amplitudes that is perfectly adapted to the factorisation of our formulae (this is perhaps not completely surprising as their construction was motivated by our formulae).  The third is the discussion of \cite{Cachazo:2015aol} on the spurious poles of our formulae. The terms in the integrand that possess these poles  scale homogeneously with the loop momentum, and are therefore discarded in dimensional regularization. These poles are also explicitly discarded in the  `Q-cut' procedure and so will not contribute to the Q-cut decomposition of our formulae.We will therefore not discuss these spurious poles in any detail, and will concentrate instead on the physical poles that generate  the `Q-cuts'.

It is standard that an amplitude must factorise in the sense that if a partial sum of the external momenta  $k_I=\sum_{i\in I}k_i$, where $I\subset \{1,\ldots, n\}$, becomes null, then there will be a pole corresponding to a propagator that has $k_I$ flowing through it.  Furthermore, the residue is the product of two tree amplitudes for the theory in question with external legs consisting of $\pm k_I$ and the elements of $I$ or its complement $\bar I$.   A well known property of the scattering equations \cite{Cachazo:2013hca} is that factorisation of the momenta corresponds precisely to factorisation of the Riemann surface, i.e., the concentration at a point of the points corresponding to the partial sum. This concentration point  can then be blown up to give a bubbled-off Riemann sphere connected to the original at the concentration point, see  below (or \cite{Dolan:2013isa}).

Our scattering equations at 1-loop will give worldsheet factorisation channels that lead to poles associated to loop momenta, but these are not immediately recognizable as loop propagators; they instead correspond to poles of the form of those in the sum of \eqref{formulangon}.  These however can be understood as naturally arising in the `Q-cuts' of \cite{Baadsgaard:2015twa}.  These are a systematic extension of the contour integral argument that leads to the partial fractions expansion of \eqref{eq:Di-partfrac} applicable to any 1-loop integrand.  They follow from a two-step process.  The first follows the contour integral argument of \eqref{eq:Di-partfracz}.  Consider a one-loop integrand 
\begin{equation}
\cM(\ell,k_i,\epsilon_i)=\frac{N(\ell,\ldots)}{D_{I_1}\ldots D_{I_m}}\,,
\end{equation}
for the theory under consideration, where $N$ is a polynomial numerator, and $D_I=(\ell+k_I)^2$ a propagator.  We shift the loop momentum $\ell\rightarrow \ell + \eta$ where $\eta$ is in some higher dimension than the physical momenta and polarization vectors, so that the only shift that occurs in the invariants is $\ell^2\rightarrow \ell^2+z$, with all other inner products remaining unchanged.  One then runs the contour integral argument that expresses the amplitude as the residue at $z=0$ of $\cM(\ell+\eta,k_i,\epsilon_i)/z$ in terms of  the sum of the other residues of this expression.  Such residues arise at shifted propagators $1/(D_I+z)$ with poles at $z=-D_I$. One then shifts $\ell\rightarrow \ell-k_I$ in each of these new residues so that $z$ becomes $\ell^2$.  This gives a representation of a 1-loop amplitude as a sum of terms of the form
\begin{equation}
\frac{1}{\ell^2}\left[
\frac{\tilde 
N(\ell)}{(2\ell\cdot k_{I_1} +k_{I_1}^2)
\ldots (2\ell\cdot k_{I_m}+ k_{I_m}^2)
}
\right]\,,
\end{equation}
giving a generalization of the partial fraction formulae of \eqref{eq:Di-partfrac}.  

In order  to interpret constituents of this expression as tree amplitudes, \cite{Baadsgaard:2015twa} considers a further contour integral argument with integrand
\begin{equation}
\frac{\cM(\alpha \ell)}{\alpha-1}\,,
\end{equation}
where $\cM(\ell)$ is now the expression with shifted $\ell$s obtained above.  The residue at $\alpha=1$ returns the original $\cM(\ell)$.  The residues  at zero and infinity can be discarded as they vanish in dimensional regularization.
It can then be argued \cite{Baadsgaard:2015twa} that the finite residues finally yield the `Q-cut' decomposition
\begin{equation}
\label{Qcuts}
 \cM\Big|_{\text{Q-cut}}\equiv\sum_{I}\cM_I^{(0)}(\dots,\tilde \ell_I, \tilde \ell_I+k_I)\frac{1}{\ell^2 \,(2\ell\cdot k_I+k_I^2)}\cM_{\bar I}^{(0)}(-\tilde \ell_I,-\tilde \ell_I-k_I,\dots)\,,
\end{equation}
where $\tilde \ell_I=\alpha(\ell +\eta)$, with $\alpha=-k_I^2/2\ell\cdot k_I$, $\eta^2=-\ell^2$, $\eta\cdot\ell=\eta\cdot k_i=0$, and $\cM_I^0$ and $\cM_{\bar I}^0$ are now tree amplitudes.

\begin{figure}[ht]
\begin{center}
 \begin{tikzpicture}[scale=3]
 \draw (1.75,1.2) to [out=30, in=180] (2.125,1.35) to [out=0, in=150] (2.5,1.2);
 \draw (1.75,0.8) to [out=330, in=180] (2.125,0.65) to [out=0, in=210] (2.5,0.8);
 \draw (1.75,1.2) -- (1.375,1.4);
 \draw (1.75,0.8) -- (1.375,0.6);
 \draw (2.5,1.2) -- (2.875,1.4);
 \draw (2.5,0.8) -- (2.875,0.6);
 \draw (1.75,1.07) -- (1.3,1.14);
 \draw (1.75,0.93) -- (1.3,0.86);
 \draw (2.5,1.07) -- (2.95,1.14);
 \draw (2.5,0.93) -- (2.95,0.86);
 \draw [fill, light-grayII] (1.75,1) circle [x radius=0.2, y radius=0.4];
 \draw (1.75,1) circle [x radius=0.2, y radius=0.4];
 \draw [fill, light-grayII] (2.5,1) circle [x radius=0.2, y radius=0.4];
 \draw (2.5,1) circle [x radius=0.2, y radius=0.4];
 \node at (0.72,1) {$\displaystyle \mathcal{M}^{(1)}\Big|_{\text{Q-cut}}= \,\sum_I$};
 \node at (2.125,1.5) {$\scriptstyle\ell$};
 \node at (2.125,0.5) {$\scriptstyle \ell+K_I$};
 \node at (1.75,1) {$\scriptstyle I$};
 \node at (2.5,1) {$\scriptstyle \overline{I}$};
 \node at (3.7,1) {$\displaystyle =\sum_I\frac{\cM_I^{(0)} \cM^{(0)}_{\bar I}}{\ell^2(2\ell\cdot K_I +K_I^2)}$};
\end{tikzpicture}
\end{center}
\caption{Representation of the amplitude as a sum over Q-cuts.}
\label{fig:q-cut}
\end{figure}  

We will see in this section that factorisation of the worldsheet in our formulae gives precisely these poles and residues. It is not exactly equivalent to the Q-cuts, because we find $\alpha=1$ in our case {\it off the pole}, rather than $\alpha=-k_I^2/2\ell\cdot k_I$, as in \eqref{Qcuts}. However, on any pole of \eqref{Qcuts}, our formulae reproduce the same residue as the Q-cut formulae. The difference between the loop integrands we obtain and those given by the Q-cut representation will be hoomogeneous in the loop momenta and vanish upon loop integration. Apart from poles and residues, the other piece of evidence required to prove our result (for which $\alpha=1$) is the UV behaviour of the loop integrand, which we also determine from a factorisation argument. These results on poles, residues and UV behaviour combine to prove that our formulae have the same Q-cut decompositions as one loop integrands and so give the correct amplitude under integration as desired.

\begin{thm}{}\label{thm:fact}
Consider a subset $I\subset\{1,\ldots, n, \ell^+,\ell^-\}$ with one fixed point in $I$ and two in its complement $\bar I$ and suppose  that we have solutions to the scattering equations such that $\sigma_i= \sigma_I + \varepsilon x_i+\cO(\varepsilon^2)$ for $i\in I$ with $x_i=O(1)$, $\sigma_{ij}=O(1)$ for $j\in\bar I$  and $\varepsilon \rightarrow 0$. 
Then we must also have  $\tilde{s}_I =O(\varepsilon)$ where 
 \begin{equation}
  \tilde{s}_I=\begin{cases}
               k_I^2 & \sigma_{\ell^+},\sigma_{\ell^-}\in\bar I\,,\\
               2\ell\cdot k_I + k_I^2 & \sigma_{\ell^+}\in I, \sigma_{\ell^-}\in \bar I\,.
              \end{cases}
 \end{equation}
 
 Our 1-loop formulae $\cM^{(1)}$ on the Riemann sphere for n-gons, bi-adjoint scalar theory, Yang-Mills and gravity have poles at $\tilde{s}_I= 0$ with residue for $\sigma_{\ell^+},\sigma_{\ell^-}\in\bar I$ is given by the separating degeneration of the nodal Riemann sphere (case I of figure \ref{fig:degen}) 
 \begin{equation}
  \cM^{(1)}(\dots,-\ell,\ell)=\cM_I^{(0)}(\dots)\frac{1}{s_I}\cM_{\bar I}^{(1)}(\dots,-\ell,\ell) \,,
 \end{equation}
 and for $\sigma_{\ell^+}\in I, \sigma_{\ell^-}\in \bar I$ by the Q-cut degeneration
 \begin{equation}\label{eq:Q-cut fact}
  \cM^{(1)}(\dots,-\ell,\ell)=\cM_I^{(0)}(\dots,\ell_I,\ell_I+k_I)\frac{1}{\ell^2}\frac{1}{\tilde{s}_I}\cM_{\bar I}^{(0)}(-\ell_I,-\ell_I-k_I,\dots)\,,
 \end{equation}
 where $\ell_I=\ell+\eta$, with $\eta^2=-\ell^2$, $\eta\cdot\ell=\eta\cdot k_i=0$ which is case II of figure \ref{fig:degen}.
\end{thm}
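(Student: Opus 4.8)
The strategy is a direct asymptotic analysis of the moduli integral in \eqref{1-loopgeneral} near the boundary of moduli space where the points $\{\sigma_i : i\in I\}$ collide. First I would introduce the standard blow-up coordinates: write $\sigma_i = \sigma_I + \varepsilon x_i$ for $i\in I$ (with one of the $x_i$ fixed, say to $0$ or $1$, and $\sigma_I$ itself becoming a new integration variable), and keep $\sigma_j$, $j\in\bar I$, as $O(1)$. The measure $d^{|I|+?}\sigma/\mathrm{vol}\,SL(2,\mathbb C)$ factorises into a measure on the bubbled-off sphere (coordinates $x_i$, $i\in I$, plus the node) times a measure on the remaining sphere (coordinates $\sigma_j$, $j\in\bar I$, $\sigma_I$, the node), up to an explicit power of $\varepsilon$ coming from the Jacobian. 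The key structural input — already established in the discussion following \eqref{SE2n} and used in the $q=0$ reduction — is that on this degeneration $P$ near the concentration point is well approximated by its counterpart on the small sphere with total momentum $k_I$ (or $k_I$ shifted by the loop contribution if $\sigma_{\ell^+}\in I$), while away from it $P$ looks like the $P$ on the big sphere with an insertion of momentum $-k_I$ at $\sigma_I$.

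Next I would analyse what the scattering equations force. The residue equations $\mathrm{Res}_{\sigma_i}S=0$ for $i\in I$ split, at leading order in $\varepsilon$, into the tree-level scattering equations on the bubbled sphere (momenta $k_i$, $i\in I$, and $-k_I$ at the node) plus a subleading constraint. Summing the equations over $i\in I$, using the identities following \eqref{SE2n}, produces the relation that the would-be propagator invariant $\tilde s_I$ — equal to $k_I^2$ when $\sigma_{\ell^\pm}\in\bar I$, and to $2\ell\cdot k_I + k_I^2$ when $\sigma_{\ell^+}\in I$, $\sigma_{\ell^-}\in\bar I$, since then the node carries $\ell + k_I$ on one side and the bubbled correlator is a genuine tree with $\ell_I$, $\ell_I + k_I$ — must be $O(\varepsilon)$ for a solution with colliding points to exist. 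This is the first assertion of the theorem; it is essentially the one-loop analogue of the tree-level statement that worldsheet factorisation $\Leftrightarrow$ kinematic factorisation, and it goes through once one tracks the $\ell$-dependent terms in $S$ carefully in the two cases.

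Then I would extract the residue at $\tilde s_I = 0$. The delta functions $\bar\delta(\mathrm{Res}_{\sigma_i}S)$, one of which can be traded (via the sum rule) for a delta function that localises $\tilde s_I$, combine with the $\varepsilon$-Jacobian from the measure so that the $\varepsilon\to 0$ integral produces exactly a factor $1/\tilde s_I$ (the propagator) times the product of the lower-point moduli integrals. For the integrand $\hat{\cI}$ one checks case by case — $n$-gon, biadjoint scalar, Yang--Mills (Parke--Taylor), gravity (Pfaffians, via the $M_{\mathrm{NS}}$ form and the behaviour of $\pf(M_2)$, $\pf(M_3)$) — that $\hat{\cI}$ factorises on the degeneration into $\hat{\cI}_I \cdot \hat{\cI}_{\bar I}$ with the correct $SL(2,\mathbb C)$ weights at the node; for the Parke--Taylor factor this is the familiar split of cyclic words, and for the Pfaffians it follows from the standard CHY factorisation of $\pf'(M)$ under a null partial sum (here applied to the enlarged NS matrix and to $M_2$). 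In case I ($\sigma_{\ell^\pm}\in\bar I$) the bubbled piece is a tree amplitude $\cM^{(0)}_I$ and the remaining piece is again a one-loop object $\cM^{(1)}_{\bar I}$ with the loop punctures intact, giving $\cM^{(0)}_I\,\frac1{s_I}\,\cM^{(1)}_{\bar I}$; in case II the bubbled piece carries the two loop punctures and is itself a tree $\cM^{(0)}_I(\dots,\ell_I,\ell_I+k_I)$, the leftover $1/\ell^2$ is the explicit prefactor in \eqref{1-loopgeneral}, and what remains is the tree $\cM^{(0)}_{\bar I}$, reproducing \eqref{eq:Q-cut fact}; the identification $\alpha=1$ off the pole versus $\alpha = -k_I^2/2\ell\cdot k_I$ on it is exactly as flagged in the text before the theorem.

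The main obstacle I anticipate is the integrand factorisation in case II for gravity: one must show that the reduced Pfaffian of the $(n+2)\times(n+2)$ matrix $M_{\mathrm{NS}}$ (and the $\pf(M_2)$ piece), with the two loop rows/columns sitting on the \emph{small} sphere, degenerates cleanly into a tree-level reduced Pfaffian with the node replacing $k_I$ — including correctly accounting for the $(\sigma_{\ell^+\ell^-})^{-4}$ in $\hat{\cI}$ and the power-counting bounds $\pf(M_{2,3})|_{q^0}=\cO((\sigma_{\ell^+\ell^-})^2)$, $(\pf(M_2)-\pf(M_3))|_{q^0},\,\pf(M_3)|_{\sqrt q}=\cO((\sigma_{\ell^+\ell^-})^3)$ stated in \S\ref{sec:powercounting}, so that no spurious divergence appears at the node and the weights match. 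The combinatorics of which minor of the Pfaffian survives the $\varepsilon\to 0$ limit, together with the bookkeeping of the sub-leading scattering equation that supplies the propagator, is where the real work lies; everything else is a bookkeeping exercise modelled on the tree-level CHY factorisation proof of \cite{Cachazo:2013hca} and the $q=0$ reduction already carried out above.
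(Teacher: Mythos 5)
Your plan follows essentially the same route as the paper's proof: blow-up coordinates $\sigma_i=\sigma_I+\varepsilon x_i$, the sum over the scattering equations for $i\in I$ forcing $\tilde s_I=O(\varepsilon)$, trading one scattering-equation delta function (the paper uses the one at the point fixed to $x_m=1$, i.e.\ $\sigma_m=\varepsilon$) for $\bar\delta(\tilde s_I+\varepsilon\mathcal F)\,d\varepsilon$ so the $\varepsilon$-scaling of the measure and integrand combine into the propagator $1/\tilde s_I$, and then case-by-case integrand factorisation -- Parke--Taylor splitting for Yang--Mills/biadjoint and the block-decomposed $M_{\mathrm{NS}}$ Pfaffian handled by the Dolan--Goddard factorisation lemma (with the odd-dimensional blocks producing the sum over intermediate polarisations), which is exactly the ``standard CHY factorisation'' step you flag as the main technical work. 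The obstacle you anticipate is indeed where the paper's effort goes (\S\ref{sec:fact_int} and the power counting of \S\ref{sec:powercounting}), so the proposal is correct in strategy and matches the paper's argument.
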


\begin{proof}
 We restrict ourselves here to outlining the idea of the proof, all details will be developed in \cref{sec:fact_SE}. The central observation is that poles in \eqref{1-loop} occur only if a subset $I$ of the marked points approach the same marked point $\sigma_I$; so that $\sigma_i \rightarrow \sigma_I+\varepsilon x_i +\cO(\varepsilon^2)$ for $i\in I$.
This is conformally equivalent to a degeneration of Riemann sphere into two components, connected by a double point. All such poles receive contributions from both the measure and scattering equations. In particular, whether a pole occurs for a given integrand reduces to a simple scaling argument in the degeneration parameter $\varepsilon$, and we can straightforwardly identify the residues.

To be more explicit, for some $m\in I$ fix $\sigma_m= \varepsilon$ so that $x_m=1$ is the new fixed point on the $I$ component of the degenerate Riemann surface, then the measure and the scattering equations factorise as
 \begin{equation}
  d \mu\equiv\prod_{i=2}^n \bar{\delta}(k_i\cdot P(\sigma_i))d\sigma_i=\varepsilon^{2(|\cI|-1)}
  \frac{d\varepsilon}{\varepsilon}  \,\bar{\delta}(s_I+\varepsilon\mathcal{F})\,d\mu_I\, d\mu_{\bar I}\,.
\end{equation}
Moreover, \cref{sec:fact_int} provides details on how the integrands for $n$-gons, Yang-Mills theory and gravity factorise as well;
\begin{equation}
 \cI^{(1)}=\varepsilon^{-2(|\cI|-1)}\cI_I^{(0)}\cI_{\bar I}^{(0)}\,,
\end{equation}
where $\cI_{I,\bar I}^{(0)}$ depend only on on-shell momenta $k_i$ and the loop momentum in the on-shell combination $\ell_I=\ell+\eta$, $\eta^2=-\ell^2$. The full amplitude therefore factorises on the expected poles, and the residue gives the Q-cut factorisation described above;
\begin{equation}
 \cM^{(1)}=\int \cI^{(1)} d\mu=\int  \frac{d\varepsilon}{\varepsilon} \,\bar{\delta}(\tilde{s}_I+\varepsilon\mathcal{F})\,\cI_I^{(0)}\cI_{\bar I}\,d\mu_I^{(0)}\, d\mu_{\bar I}=\cM_I^{(0)}\frac{1}{\tilde{s}_I}\cM_{\bar I}^{(0)}\,.
\end{equation}
\end{proof}

\begin{thm}{}\label{lemma:powercounting}
 The amplitudes $\cM^{(1)}$ scale as $\ell^{-N}$ for $\ell\rightarrow\infty$, where
 \begin{center}
\begin{tabular}{l|c}
 theory & scaling $\ell^{-N}$\\ \hline 
 $n$-gon & $N=2n$\\
 supergravity & $N=8$\\
 super Yang-Mills & $N=6$\\
 pure gravity & $N=4$\\
 pure Yang-Mills & $N=4$\\
 bi-adjoint scalar & $N=4$
\end{tabular}
\end{center}
\end{thm}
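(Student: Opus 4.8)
The plan is to read off the fall-off by power-counting the worldsheet formula \eqref{1-loopgeneral} in its degeneration limits, reusing the machinery assembled for \cref{thm:fact}. First I would rescale $\ell\to t\ell$ with $t\to\infty$; the explicit prefactor $1/\ell^2$ then contributes $t^{-2}$ for every theory, so it remains to control the $t$-scaling of the localised integral. The key structural point is that $t\to\infty$ forces a degeneration: in $\mathrm{Res}_{\sigma_i}S = k_i\cdot\ell/\sigma_{i\ell^+}-k_i\cdot\ell/\sigma_{i\ell^-}+\sum_{j\neq i}k_i\cdot k_j/\sigma_{ij}$ the loop-momentum terms grow like $t$ and cannot be balanced against the finite $k_i\cdot k_j$ terms unless some differences $\sigma_{ij}$ shrink like $1/t$, so on any solution the points $\sigma_1,\dots,\sigma_n$ must concentrate into one or more clusters at rate $\varepsilon\sim 1/t$ (collapse onto $\sigma_{\ell^+}$ or $\sigma_{\ell^-}$ alone, or escape of some $\sigma_i$ to infinity, would force the non-generic $k_i\cdot\ell=0$ and is excluded). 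Each such configuration is conformally one of the nodal degenerations classified in \cref{thm:fact}, so the factorised forms of the measure, of the Jacobian of the scattering equations, and of the integrands derived there and in \cref{sec:fact_SE,sec:fact_int} apply verbatim.

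Second, I would carry out the power count on a fixed degeneration. After localising the delta functions, every contributing solution supplies a term $\tfrac1{\ell^2}\,\hat\cI/\det{}'\Phi$, and for a concentration at rate $\varepsilon\sim 1/t$ the Jacobian $\det{}'\Phi$ scales as $t^{2(n-1)}$ while the $SL(2,\C)$-invariant measure (equivalently $\prod d\sigma_i/\sigma_i^2$ in \eqref{1-loop}) is $O(1)$; combined with the $t^{-2}$ this gives the universal baseline $t^{-2n}$. Since $\hat\cI^{n\text{-gon}}$ has zero weight in $\{\sigma_{\ell^+},\sigma_{\ell^-}\}$, this already proves $N=2n$ for the $n$-gon. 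For the other theories the only remaining ingredient is the $t$-scaling of $\hat\cI$ itself, which follows from its Parke--Taylor and Pfaffian constituents: the loop momentum enters $\pf(M_\alpha)$ only through the diagonal entries $C_{ii}=-\epsilon_i\cdot P(\sigma_i)$ and through the $\sqrt q$-pieces of the Szeg\H{o} kernels in \eqref{eq:szegolimitsl2c}, and a count of the resulting powers of $\varepsilon$ --- the large-$\ell$ counterpart of the $\sigma_{\ell^+\ell^-}$-estimates stated after \eqref{ihatdef} and proved in \cref{sec:powercounting}, together with the cancellations enforced by the theta-function identities with $w_{(1)}+\ldots+w_{(m)}=0$ --- fixes the scaling of $\cI^{PT}$, of $\cI^L_0$, and of $\cI^L_{0,\mathrm{vect}}$. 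Multiplying the two chiral factors (with a polarisation sum in the $M_{\mathrm{NS}}$ language of \eqref{eq:defInt_MNS}) and feeding them into \eqref{eq:IL0-sugra}, \eqref{eq:chiral-vector}, \eqref{eq:chiral-int-grav}, \eqref{eq:pure-gravity} and \eqref{eq:pure-YM} produces the tabulated values: $N=4$ for the bi-adjoint scalar and for pure Yang--Mills and pure gravity, $N=6$ for super-Yang--Mills, and $N=8$ for supergravity.

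The main obstacle is exactly this Pfaffian bookkeeping. It is not enough to know the naive $\ell$-degree of $\pf(M_\alpha)$: one must track the cancellations --- the same mechanism that collapses $\cI^L$ to a degree-$(n-4)$ polynomial in the loop momentum at $n$ points --- now combined with the $\varepsilon$-weights carried by the $\sqrt{d\sigma_i}$ factors in the degenerate limit, so that for the gravity case one also has to handle the difference-of-squares in \eqref{eq:pure-gravity} by writing it as a product. One must likewise verify that no competing cluster partition or solution branch --- including the ``singular'' solutions with $\sigma_{\ell^+}=\sigma_{\ell^-}$ --- gives a milder fall-off than the minimum recorded above. Both points are uniform consequences of the scaling estimates in \cref{sec:powercounting}; conceptually they reflect the fact that the $\ell\to\infty$ limit is governed by the high-energy behaviour of the forward-limit tree integrands that appear at the cut, which is $\ell^{-2}$ for scalars and Yang--Mills and improves to $\ell^{-4}$ and $\ell^{-6}$ for the maximally supersymmetric theories.
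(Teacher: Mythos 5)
Your overall strategy coincides with the paper's: large $\ell$ forces the worldsheet to degenerate, the measure/Jacobian gives a universal factor, and the remaining input is the scaling of each theory's integrand in the degeneration parameter. (The paper phrases the degeneration as $\sigma_{\ell^-}\to\sigma_{\ell^+}$ with no external particles on the small component, i.e.\ case III of figure \ref{fig:degen}; your picture of all external punctures forming a single cluster is the M\"obius-equivalent statement, though your ``one or more clusters'' is too loose --- for generic kinematics a cluster not containing a loop insertion forces $k_I^2\to 0$ and a cluster containing one forces $2\ell\cdot k_I+k_I^2\to 0$, both excluded, so there is exactly one cluster carrying \emph{all} external points.) The genuine gap is that the decisive quantitative content is not established: you assert that the $t$-scaling of $\hat\cI$ ``follows from'' the estimates stated after \eqref{ihatdef} and ``proved in \cref{sec:powercounting}'' --- but those estimates \emph{are} the proof of this theorem, so your argument defers precisely the part that distinguishes the entries of the table. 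Concretely, three separate mechanisms must be exhibited: (i) the cyclic Parke--Taylor sum naively scales as $\varepsilon^{-1}$ and is lifted to $\varepsilon^{0}$ by the photon-decoupling (Kleiss--Kuijf) identity, not by theta-function identities as you suggest; (ii) for the supersymmetric integrand $\cI^L_0$ the orders $\varepsilon^0,\varepsilon^2,\varepsilon^3$ cancel between $\pf(M_2)\big|_{q^0}$ and $\pf(M_3)\big|_{\sqrt q}+8\pf(M_3)\big|_{q^0}$ because of the relative factor $1/8$ in the subleading Szeg\H{o} expansions, giving $\varepsilon^{2}$ and hence $N=8$ (SG) and $N=6$ (SYM); (iii) for the non-supersymmetric NS Pfaffian the naive $\varepsilon^{-2}$ must be lifted to $\varepsilon^{0}$ by showing that $ {M_{\text{NS}}}_{(\ell^+\ell^-)}$ retains co-rank two to order $\varepsilon$ (the kernel-vector argument solving $M^{(0)}v_1=-M^{(1)}_{\text{NS}}v_0$). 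Point (iii) is exactly what produces the $N=4$ entries for pure Yang--Mills and pure gravity, and it is absent from your proposal.

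A secondary weak point: the Jacobian scaling $\det{}'\Phi\sim t^{2(n-1)}$ is asserted without justification. It is correct, but only because the leading cluster matrix $\partial_j\!\left(\sum_{l\neq i}k_i\cdot k_l/x_{il}\right)$ is nondegenerate at the relevant solutions --- the would-be scaling null vector $(x_2,\dots,x_n)$ is obstructed because the cluster equations are sourced by the loop momentum, $g_i(x)=-\lim(\varepsilon t)\,k_i\cdot\hat\ell\neq 0$. Without checking this (or, as the paper does, factorising the measure explicitly and reading off the $\bar\delta(\varepsilon-\lambda\mathcal{F}_2/\mathcal{F}_1)$ localisation so that the universal factor is manifestly $\lambda^4$), even the $n$-gon baseline $N=2n$ is not yet proved. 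As written, the proposal is an outline of the paper's own route rather than a self-contained argument.
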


\begin{proof} 
  This  follows from the fact that as $\ell\rightarrow \infty$,  the insertions of $\sigma_{\ell^+}$ and $\sigma_{\ell^-}$ must approach each other. This is conformally equivalent to a degeneration of the worldsheet into a nodal Riemann sphere with no further insertions and another Riemann sphere carrying all the external particles, see case III in figure \ref{fig:degen}. This is also the configuration that corresponds to the singular/degenerate solutions described in the previous sections and so our analysis of fall-off in $\ell$ will also give information about the finiteness of the contributions from these degenerate solutions. 
 We  give the full details in \cref{sec:powercounting}.\end{proof}

The UV behaviour found here is not optimal for maximal super-Yang-Mills theory.  At one-loop, amplitudes in this theory are well known to be a sum of boxes, and these scale manifestly as $\ell^{-8}$ for large $\ell$.  However, once a box has been subject to partial fractions and shifts, the individual terms scale only as $\ell^{-5}$.  Were there no shifts, the partial fraction sum would nevertheless have to fall off as $\ell^{-8}$, because this is then simply a different expression for the box. However, with shifts, the $\ell$ in different terms is shifted by different amounts and the cancellations between the terms are affected, leading to a fall off as $\ell^{-6}$, weaker than the $\ell^{-8}$ exhibited by the unshifted sum. These two types of representations for the integrand -- unshifted and shifted -- differ manifestly by a quantity that does not contribute to the (integrated) amplitude. In view of this, it is perhaps a surprise that, in the maximal supergravity case, our formula does show the optimal fall-off. This is because the full permutation sum in supergravity introduces all the needed cancellations of the error terms caused by the shifts, whereas the cyclic sum in super-Yang-Mills is not enough. It would be interesting to understand more systematically how to recover the optimal fall-off in our approach, including the subtle cancellations in lower supersymmetry studied in \cite{Bern:2007xj}.

\Cref{thm:fact} and \cref{lemma:powercounting} will now allow us to prove that the representation of one-loop amplitudes from the nodal Riemann sphere is equivalent to the Q-cut representation reviewed above.

\begin{thm}{}
 $\cM^{(1)}_{\mathrm{BS}}$, $\cM^{(1)}_{\mathrm{YM}}$ and $\cM^{(1)}_{\mathrm{gravity}}$ with the degenerate solutions ommitted are representations of the one-loop amplitudes for the bi-adjoint scalar theory, Yang-Mills and gravity respectively.
\end{thm}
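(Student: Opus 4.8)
The plan is to prove this by showing that the Q-cut procedure of \cite{Baadsgaard:2015twa}, applied to our formula $\cM^{(1)}$ with the degenerate solutions dropped, reproduces the Q-cut decomposition \eqref{Qcuts}, which is already established to represent the one-loop integrand of each of these theories. Because the Q-cut procedure alters an integrand only by terms that scale homogeneously in the loop momentum, and such terms are scaleless and drop out of $\int d^d\ell$ in dimensional regularisation, this identifies $\int d^d\ell\,\cM^{(1)}$ with the one-loop amplitude. Operationally I would reduce the claim to matching three items between $\cM^{(1)}$ and \eqref{Qcuts}: the locations of the $\ell$-dependent poles, the residues there, and the $\ell\to\infty$ fall-off; a rational function of $\ell$ is pinned down by these up to a polynomial, and a polynomial in $\ell$ — or such a polynomial divided by the explicit $\ell^2$ — is scaleless.

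First I would invoke \cref{thm:fact} to enumerate the $\ell$-dependent poles of $\cM^{(1)}$: the explicit $1/\ell^2$, and the factorisation poles $\tilde s_I=2\ell\cdot k_I+k_I^2=0$ for subsets $I$ containing exactly one of $\sigma_{\ell^\pm}$, all other worldsheet degenerations producing only the $\ell$-independent channel poles $k_I^2=0$. The same theorem gives the residue at $\tilde s_I=0$ as $\cM_I^{(0)}(\dots,\ell_I,\ell_I+k_I)\,\ell^{-2}\,\cM_{\bar I}^{(0)}(-\ell_I,-\ell_I-k_I,\dots)$ with $\ell_I=\ell+\eta$, $\eta^2=-\ell^2$, $\eta\cdot\ell=\eta\cdot k_i=0$. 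I would then note that on this pole the Q-cut parameter $\alpha=-k_I^2/(2\ell\cdot k_I)$ of \eqref{Qcuts} equals $1$, so there $\tilde\ell_I=\alpha(\ell+\eta)=\ell_I$ and the corresponding term of \eqref{Qcuts} carries exactly the same residue; the $\ell$-independent channel poles $k_I^2=0$, where both loop punctures sit on one side, I would check for consistency against the standard unitarity factorisation of the one-loop integrand using the separating-degeneration case of \cref{thm:fact} together with induction on the multiplicity.

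Next I would use \cref{lemma:powercounting}: since both $\cM^{(1)}$ and the Q-cut representation vanish at infinity at least as fast as $\ell^{-4}$, the difference $\ell^2(\cM^{(1)}-\cM^{(1)}|_{\text{Q-cut}})$ has no finite poles in $\ell$ once the residues above are matched, hence is a polynomial in $\ell$, so $\cM^{(1)}-\cM^{(1)}|_{\text{Q-cut}}$ is a polynomial divided by $\ell^2$ and integrates to zero. The same UV bound is what licenses discarding the arc contributions at $z=0,\infty$ and $\alpha=0,\infty$ in the two contour integrals that constitute the Q-cut procedure. Finally I would dispose of two side issues: the spurious poles of \cite{Cachazo:2015aol} sit in terms that scale homogeneously in $\ell$, so they are scaleless and are in any case discarded in the Q-cut procedure; and the degenerate (`singular') solutions with $\sigma_{\ell^+}=\sigma_{\ell^-}$ correspond (as in the proof of \cref{lemma:powercounting}) to a nodal configuration with no external puncture on the bubbled-off sphere, have finite contribution by \S\ref{sec:powercounting} and support no factorisation channel, so they produce no Q-cut and may be omitted without changing the integrated amplitude.

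The main obstacle is this last point: establishing rigorously — rather than merely plausibly — that everything left over (the off-pole mismatch $\alpha=1$ versus $\alpha=-k_I^2/(2\ell\cdot k_I)$, the spurious-pole terms, and the contribution of the degenerate solutions we throw away) is genuinely homogeneous in $\ell$ and hence invisible after loop integration. This is precisely where the statements on poles, residues and UV fall-off must be combined with care, and where the analysis of \cite{Cachazo:2015aol} is needed to close the gap; granted that, the remainder is bookkeeping on top of \cref{thm:fact} and \cref{lemma:powercounting}.
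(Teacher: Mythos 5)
Your proposal is correct and follows essentially the same route as the paper: you compare $\cM^{(1)}$ with the Q-cut representation evaluated at $\alpha=1$, matching the $\ell$-dependent poles and residues via \cref{thm:fact}, controlling the large-$\ell$ fall-off via \cref{lemma:powercounting}, disposing of the spurious poles through the homogeneity argument of \cite{Cachazo:2015aol}, and observing that the degenerate solutions (case III of figure \ref{fig:degen}) support no Q-cut and hence may be omitted. The only cosmetic difference is that you phrase the final step as a Liouville-type argument on the rational function of $\ell$ (and add an unnecessary but harmless check of the $\ell$-independent $k_I^2=0$ channels), whereas the paper works directly with the difference $\Delta=\cM^{(1)}-\cM\big|_{\text{Q-cut}}^{\alpha=1}$ and the contour integral in $\alpha$.
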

\begin{proof}\footnote{Reference \cite{Cachazo:2015aol}, which appeared after the first version of this paper, demonstrates the existence of spurious poles in our formulae that were missed in the first version.   However, they also showed that these spurious poles are homogeneous in the loop momenta, and hence do not contribute to the Q-cut.  Thus the reference serves to complete our proof.}
We use the fact that our formula must be rational in the external data and $\ell$, and that the only poles in our formulae arise when the punctures come together, i.e., factorisation as discussed above.  This theorem  is then a consequence of the correct factorisation on Q-cuts and the scaling behaviour in $\ell$. Consider first the quantity
 \begin{equation}
    \Delta=\cM^{(1)}-\cM\Big|_{\text{Q-cut}}^{\alpha=1}\,,
 \end{equation}
where the last term is given by \eqref{Qcuts} with $\alpha=1$ in the scaling of $\tilde \ell$, which is not the case for the Q-cut representation. The quantity $\Delta$ possesses two classes of poles in $\ell$: the Q-cut poles from \cref{thm:fact} and the spurious poles analysed in \cite{Cachazo:2015aol}. 
The first class of poles are now given by \cref{thm:fact}, and so are cancelled by  the corresponding poles in $\cM\big|_{\text{Q-cut}}^{\alpha=1}$. The second class are homogeneous in the loop momenta and so 
%
 \begin{equation}
    \cM^{(1)}=\cM\Big|_{\text{Q-cut}}^{\alpha=1}\,+ \text{terms vanishing upon integration}.
 \end{equation}
Finally, the difference between $\cM\big|_{\text{Q-cut}}^{\alpha=1}$ and the Q-cut representation also vanishes upon loop integration, as follows from the contour integral in $\alpha$ in the construction of the Q-cuts. Therefore, $\cM^{(1)}$ is a valid representation of the loop integrand. Notice also that the degenerate solutions to the scattering equations do not contribute to the  singularities that give rise to the Q-cuts, since the latter arise from case II of figure \ref{fig:degen}, whereas the degenerate solutions are all case III of figure \ref{fig:degen}. Like the terms with spurious poles, the contributions from degenerate solutions vanish upon loop integration. Hence, we can define $\cM^{(1)}$ by summing over the regular solutions only.
\end{proof}

\subsection{Factorisation I - Scattering equations and measure}\label{sec:fact_SE}
As discussed above, poles of $\cM^{(1)}$ only occur  when a subset of the marked points
(possibly including $\sigma_{\ell^+}$ or $\sigma_{\ell^-}$) approach the same point, giving rise to a degeneration of the Riemann sphere into a pair of Riemann spheres connected at a double point. The scattering equations then imply that this pole is associated with a partial sum of the momenta becoming null.

Let $I$ be a subset of $\{\ell^+, 1,\ldots ,n,\ell^-\}$ that contains just one of the fixed points which we shall, with an abuse of notation, denote $\sigma_I$.   We shall always, however, write $k_I= \sum_{i\in I}k_i$, $k_{\bar I} =\sum_{i\in \bar I} k_i$ with $k_0=\ell_I$, $k_{n+1}=-\ell_I$.   Let the Riemann surface factorise\footnote{  Two of the three original fixed points must be in $\bar I$ and just one in $I$ to obtain a stable degeneration  as we cannot make two of the fixed points  approach each other but $\bar I$ cannot contain three as, after factorisation, it will also have the fixed point $\sigma_I$ which would be too many.} so that for $i\in I$ 
\begin{equation}
\sigma_i \rightarrow \sigma_I+\varepsilon x_i +\cO(\varepsilon^2)
\end{equation} 
for some small $\varepsilon$, with $x_I=0$, $x_m=1$ for some $m\in I$ and $x_i =O(1)$ for all other $i\in I$. Note that this implies that $\sigma_m=\epsilon$ is now also our small parameter.

 We first wish to see that with these assumptions, the scattering equations imply that $\tilde s_I =O(\epsilon)$.  Firstly we have
\begin{equation}
\begin{aligned}
 &P(\sigma)=\frac{1}{\varepsilon}P_I(x)+\tilde{P}_{\bar I}(\sigma_I)+\cO(\varepsilon) && \sigma=\sigma_I+\varepsilon x+\cO(\varepsilon^2)\\
 &P(\sigma)=P_{\bar I}(\sigma)+\cO(\varepsilon) && \sigma\neq\sigma_I+\varepsilon x+\cO(\varepsilon^2)
\end{aligned}
\end{equation}
where 
\begin{equation}
P_I(x)=\sum_{i\in I}\frac{k_i}{x-x_i} 
\, , \qquad \tilde{P}_{\bar I}(\sigma)=\sum_{i\in \bar I}\frac{k_i}{\sigma-\sigma_i}\,,\qquad P_{\bar I}(\sigma)=\sum_{p\in \bar I}\frac{k_p}{\sigma-\sigma_p}-\frac{k_I}{\sigma-\sigma_I}\,.
\end{equation}
Thus the scattering equations give\footnote{Setting $k_0=\ell_I$, $k_{n+1}=-\ell_I$, the scattering equations for the marked points $\sigma_{\ell^\pm}$ are
\begin{equation*}
 0=\mathrm{Res}_{\sigma_{\ell^\pm}}=\pm\sum_i \frac{\ell_I\cdot k_i}{\sigma_{\ell^\pm i}}=\pm\ell_I\cdot P(\sigma_{\ell^\pm})\,,
\end{equation*}
so the conclusions hold for $\sigma_{\ell^\pm}$ also.}
\begin{subequations}\label{eq:SE-degen}
    \begin{align}
      & 0=k_i\cdot P(\sigma_i)=\frac{1}{\varepsilon} k_i\cdot P_I( x_i) + k_i\cdot \tilde{P}_{\bar I}(\sigma_I) + O(\varepsilon)\, , && i\in I\\
      & 0=k_p\cdot P(\sigma_p)=k_p\cdot P_{\bar I}(\sigma_p)\,, &&
      p\in \bar I\,.
    \end{align}
\end{subequations}
In particular, for $i\in I$, this implies
$k_i\cdot P_I(x_i)=O(\varepsilon)$ as $\varepsilon\rightarrow 0$ since
\begin{equation}
 \label{scale}
 k_i\cdot P_I( x_i) =-\varepsilon k_i\cdot \tilde{P}_{\bar I}(\sigma_I) + O(\varepsilon^2).
\end{equation}  
By summing we obtain as an algebraic identity
\begin{equation}
\label{factorize}
\tilde{s}_I:= \frac12 \sum_{i\neq j\in I} k_i\cdot k_j=  \sum_{i,j\in I} \frac{x_ik_i\cdot  k_j}{ x_i-x_j}= \sum_{i\in I} x_i k_i \cdot P_I(x_i)
 \, ,
\end{equation}
so $\tilde{s}_I=O(\varepsilon)$, and any (potential) pole is
associated with the vanishing of $\tilde s_I$ where 
\begin{equation*}
  \tilde{s}_I=\begin{cases}
               k_I^2 & \sigma_{\ell^+},\sigma_{\ell^-}\in\bar I\,,\\
               2\ell\cdot k_I + k_I^2 & \sigma_{\ell^+}\in I, \sigma_{\ell^-}\in \bar I\,.
              \end{cases}
 \end{equation*}.

We now focus on the measure of the amplitude expression with a generic integrand
\begin{equation}
\cM=\int \cI \prod_{i=2}^n \bar{\delta}(k_i\cdot P(\sigma_i))d\sigma_i\, .
\end{equation}
We first determine the weight of the measure in $\varepsilon$ as $\varepsilon\rightarrow 0$ (the integrand $\cI$ will have some weight also which we discuss later).
For each $i\in I$, the scattering equations contribute
\begin{subequations}
  \begin{align}
    &\bar{\delta}(k_i\cdot P(\sigma_i))\,d \sigma_i=\varepsilon^2\,\bar{\delta}(k_i\cdot P_I(x_i))\,d x_i && i\in I\\
    &\bar{\delta}(k_p\cdot P(\sigma_p))\,d \sigma_p=\bar{\delta}(k_p\cdot P_{\bar I}(\sigma_p))\,d \sigma_p && p\in \bar I
  \end{align}
\end{subequations}

Thus we obtain  scattering equations on the factorised Riemann surface, multiplied by a factor of $\varepsilon^2$ for each $i\in I$. Note however that there is a subtlety; we expect three fixed marked points on each Riemann surface. On the Riemann surface $\Sigma_{\bar I}$, this is trivially true since there are two fixed points and the degeneration point $\sigma_I$. On $\Sigma_I$, the fixed points are given by the degeneration point $x_{\bar I}=\infty$ and $x_I=0$, and our choice of parametrisation for the degeneration $x_m=1$.
This gives the required independence of exactly $n_I-3$ scattering equations, but we still have the integration over $\sigma_m=\varepsilon$ and its associated delta function imposing its scattering equation associated to $\sigma_m$.  Using 
\begin{equation}
 s_I=\sum_{i\neq m \in I}x_i k_i\cdot P_I(x_i) + k_m \cdot P_I(x_m)\,,
\end{equation}
and the support of the remaining scattering equations \cref{eq:SE-degen}, we find
\begin{align}
 k_m\cdot P(\sigma_m)&=\frac{1}{\varepsilon}\left(k_m\cdot P_I(x_m)+\varepsilon\sum_{p\in \bar I}k_m\cdot \tilde{P}_{\bar I}(\sigma_I)+\cO(\varepsilon^2)\right)\nonumber\\
 &=\frac{1}{\varepsilon}\left(\tilde{s}_I+\varepsilon\sum_{i\in I,\, p\in \bar I}x_i k_i\cdot \tilde{P}_{\bar I}(\sigma_I)+\cO(\varepsilon^2)\right)\nonumber\\
 &\equiv\frac{1}{\varepsilon}\left(\tilde{s}_I+\varepsilon\mathcal{F}+\cO(\varepsilon^2)\right)\,.
\end{align}
Thus 
\begin{equation}
 \bar{\delta}(k_m\cdot P(\sigma_m))\,d \sigma_m=\varepsilon\,\bar{\delta}(\tilde{s}_I+\varepsilon\mathcal{F})\,d \varepsilon  \,,
\end{equation}
and  the measure factorises as
 \begin{equation}
  d \mu\equiv\prod_{i=2}^n \bar{\delta}(k_i\cdot P(\sigma_i))d\sigma_i=\varepsilon^{2(|\cI|-1)}\,\frac{d\varepsilon}{\varepsilon} \,\bar{\delta}(\tilde{s}_I+\varepsilon\mathcal{F})\,d\mu_I\, d\mu_{\bar I}\,.
\end{equation}

We  now distinguish three cases according to whether $\sigma_{\ell^\pm}$ are  in $I$ as in  \cref{fig:degen}.
\begin{enumerate}[\text{Case} I]
 \item \label{Case1} If $\sigma_{\ell^+}$ and $\sigma_{\ell^-}$ are not in $I$, and $I$ is a strict subset of $1,\ldots ,n$,  this is standard factorisation with  $s_I=k_I^2\rightarrow 0$. This gives a Riemann sphere connected to a nodal sphere, corresponding to a tree-level amplitude factorising from a one-loop amplitude. The measure is
 \begin{equation}
  d \mu^{(1)}=\varepsilon^{2(|\cI|-1)}\,\frac{d\varepsilon}{\varepsilon} \,\bar{\delta}(s_I+\varepsilon\mathcal{F})\,d\mu_I^{(0)}\, d\mu_{\bar I}^{(1)}\,.
\end{equation}
 \item \label{Case2} If without loss of generality $\sigma_{\ell^+} \in I$ but $\sigma_{\ell^-}\notin I$ the condition is 
\begin{equation}
\tilde{s}_I=k_I\cdot\ell + \frac12 k_I^2=\cO(\varepsilon)\, .
\end{equation}
as $\varepsilon\rightarrow 0$. This non-separating degeneration describes two Riemann spheres, connected at {\it two} double points, see \cref{fig:degen}. The corresponding measure is given by
\begin{equation}
  d \mu^{(1)}=\varepsilon^{2(|\cI|-1)}\,\frac{d\varepsilon}{\varepsilon} \,\bar{\delta}(\tilde{s}_I+\varepsilon\mathcal{F})\,d\mu_I^{(0)}\, d\mu_{\bar I}^{(0)}\,,
\end{equation}
leading to the expected poles from the Q-cut factorisation.
 \item \label{Case3} The case $I=\{\sigma_{\ell^+},\sigma_{\ell^-}\}$ is of particular interest since this configuration arises for large $\ell$, $\ell =\cO(\lambda^{-1})$ (see \eqref{SE2n}), and for the singular solutions in non-supersymmetric theories. It is discussed in \cref{lemma:powercounting}, and determines the UV behaviour of our 1-loop amplitudes.
\end{enumerate}

\begin{figure}[ht]
\begin{center} \vspace{-5pt}\begin{tikzpicture}[scale=3]
 \shade[shading=ball, ball color=light-gray] (3.6,2.4) circle [x radius=0.4, y radius=0.2];
  \draw [dashed,red] (3.52,2.49) to [out=110, in=180] (3.6,2.7) to [out=0, in=70] (3.68,2.49);
  \draw [fill] (3.35,2.46) circle [radius=.3pt];
  \draw [fill] (3.45,2.31) circle [radius=.3pt];
  \draw [fill] (3.8,2.33) circle [radius=.3pt];
  \draw [fill] (3.85,2.45) circle [radius=.3pt];
  \draw [fill,red] (3.52,2.49) circle [radius=.3pt];
  \draw [fill,red] (3.68,2.49) circle [radius=.3pt];
  \node at (4.2,2.4) {$\rightarrow$};
  \node at (2.5,2.4) {Case I:};
 \shade[shading=ball, ball color=light-gray] (4.8,2.4) circle [x radius=0.4, y radius=0.2];
 \shade[shading=ball, ball color=light-gray] (5.35,2.4) circle [x radius=0.15, y radius=0.13];
  \draw [dashed,red] (4.72,2.49) to [out=110, in=180] (4.8,2.7) to [out=0, in=70] (4.88,2.49);
  \draw [fill] (4.65,2.31) circle [radius=.3pt];
  \draw [fill] (5,2.33) circle [radius=.3pt];
  \draw [fill] (5.44,2.455) circle [radius=.3pt];
  \draw [fill] (5.4,2.355) circle [radius=.3pt];
  \draw [fill,red] (4.72,2.49) circle [radius=.3pt];
  \draw [fill,red] (4.88,2.49) circle [radius=.3pt];
  \draw [fill] (5.2,2.4) circle [radius=.3pt];
  \shade[shading=ball, ball color=light-gray] (3.6,1) circle [x radius=0.4, y radius=0.2];
  \draw [dashed,red] (3.52,1.09) to [out=110, in=180] (3.6,1.3) to [out=0, in=70] (3.68,1.09);
  \draw [fill] (3.35,1.06) circle [radius=.3pt];
  \draw [fill] (3.45,0.91) circle [radius=.3pt];
  \draw [fill] (3.8,0.93) circle [radius=.3pt];
  \draw [fill] (3.85,1.05) circle [radius=.3pt];
  \draw [fill,red] (3.52,1.09) circle [radius=.3pt];
  \draw [fill,red] (3.68,1.09) circle [radius=.3pt];
  \node at (4.2,1) {$\rightarrow$};
  \node at (2.5,1) {Case III:};
 \shade[shading=ball, ball color=light-gray] (4.8,1) circle [x radius=0.4, y radius=0.2];
 \shade[shading=ball, ball color=light-gray] (5.35,1) circle [x radius=0.15, y radius=0.13];
  \draw [dashed,red] (5.4,1.055) to [out=20, in=90] (5.65,1.005) to [out=270, in=340] (5.4,0.955);
  \draw [fill] (4.55,1.06) circle [radius=.3pt];
  \draw [fill] (4.65,0.91) circle [radius=.3pt];
  \draw [fill] (5,0.93) circle [radius=.3pt];
  \draw [fill] (4.88,1.12) circle [radius=.3pt];
  \draw [fill,red] (5.4,1.055) circle [radius=.3pt];
  \draw [fill,red] (5.4,0.955) circle [radius=.3pt];
  \draw [fill] (5.2,1) circle [radius=.3pt];
 \shade[shading=ball, ball color=light-gray] (3.6,1.7) circle [x radius=0.4, y radius=0.2];
  \draw [dashed,red] (3.52,1.79) to [out=110, in=180] (3.6,2) to [out=0, in=70] (3.68,1.79);
  \draw [fill] (3.35,1.76) circle [radius=.3pt];
  \draw [fill] (3.45,1.61) circle [radius=.3pt];
  \draw [fill] (3.8,1.63) circle [radius=.3pt];
  \draw [fill] (3.85,1.75) circle [radius=.3pt];
  \draw [fill,red] (3.52,1.79) circle [radius=.3pt];
  \draw [fill,red] (3.68,1.79) circle [radius=.3pt];
  \node at (4.2,1.7) {$\rightarrow$};
  \node at (2.5,1.7) {Case II:};
 \shade[shading=ball, ball color=light-gray] (4.75,1.7) circle [x radius=0.3, y radius=0.15];
 \shade[shading=ball, ball color=light-gray] (5.35,1.7) circle [x radius=0.3, y radius=0.15];
  \draw [dashed,red] (4.82,1.79) to [out=60, in=180] (5.07,1.95) to [out=0, in=120] (5.32,1.79);
  \draw [fill] (4.6,1.75) circle [radius=.3pt];
  \draw [fill] (4.87,1.63) circle [radius=.3pt];
  \draw [fill] (5.3,1.62) circle [radius=.3pt];
  \draw [fill] (5.52,1.69) circle [radius=.3pt];
  \draw [fill,red] (4.82,1.79) circle [radius=.3pt];
  \draw [fill,red] (5.32,1.79) circle [radius=.3pt];
  \draw [fill] (5.05,1.7) circle [radius=.3pt];
\end{tikzpicture}\end{center}
\caption{Different possible worldsheet degenerations.}
\label{fig:degen}
\end{figure}

\subsection{Factorisation II - Integrands} \label{sec:fact_int}
Whether we actually have a pole or not in the factorization limit depends on the scaling behaviour of the integrand $\cI$ as $\varepsilon\rightarrow 0$.
In this section, we consider  the integrands  for the $n$-gons, Yang-Mills, gravity and the biadjoint scalar in more detail. In particular, we will find that all these integrands behave as
\begin{equation}
 \cI^{(1)}=\varepsilon^{-2(|\cI|-1)}\cI_I^{(0)}\cI_{\bar I}^{(1)}\,,
\end{equation}
in case I and 
\begin{equation} \label{eq:fact_intcase2}
 \cI^{(1)}=\frac{1}{\ell^2}\varepsilon^{-2(|\cI|-1)}\cI_I^{(0)}\cI_{\bar I}^{(0)}\,,
\end{equation}
in case II, where $\cI_I$ ($\cI_{\bar I}$) depends only on the on-shell momenta $k_I$ ($k_{\bar I}$) and $\ell_I=\ell+\eta$, $\eta^2=-\ell^2$. With the measure contributing a factor of $\varepsilon^{2(|\cI|-1)-1} \,\bar{\delta}(\tilde{s}_I+\varepsilon\mathcal{F})\,d\varepsilon$, the overall amplitude scales as $\varepsilon^{-1}$, and we can perform the integral against the $\delta$-function explicitly, leading to a pole in $\tilde{s}_I$. Therefore, the full amplitude factorises on the expected poles, with residues given by the corresponding subamplitudes. Moreover, as evident from above,  for case II this factorisation channel corresponds to a Q-cut;
\begin{equation}
\begin{split}
 \cM^{(1)}&=\int \cI^{(1)} d\mu=\frac{1}{\ell^2}\int  \varepsilon^{-1}d\varepsilon \,\bar{\delta}(\tilde{s}_I+\varepsilon\mathcal{F})\,\cI_I^{(0)}\cI_{\bar I}^{(0)}\,d\mu_I\, d\mu_{\bar I}\\
 &=\cM_I^{(0)}(\dots,\ell_I,\ell_I+k_I)\frac{1}{\ell^2}\frac{1}{\tilde{s}_I}\cM_{\bar I}^{(0)}(-\ell_I,-\ell_I-k_I,\dots)\,.
 \end{split}
\end{equation}
The analysis of the integrand can be further simplified by focussing on the `left' and `right' contributions $\cI^{L,R}$ to the integrand individually, where $\cI=\cI^L \cI^R$. From the discussion of the preceding sections, we identify $\cI^L=\cI^R=\sum_r \pf(M_{\text{NS}}^r)$ for pure gravity in $d$ dimensions, $\cI^L=\sum_r \pf(M_{\text{NS}}^r)$ and $\cI^R=\cI^{cPT}$ for pure Yang-Mills, and $\cI^L=\cI^R=\cI^{cPT}$ for the bi-adjoint scalar theory. Therefore, it is sufficient to prove that
\begin{equation}
 {\cI^{(1)}}^{L,R}=\varepsilon^{-(|\cI|-1)}{\cI^{(0)}}^{L,R}_{I}\,{\cI^{(0)}}^{L,R}_{\bar I}\,.
\end{equation}

\subsubsection{The \texorpdfstring{$n$}{n}-gon integrand}
Let us first consider the $n$-gon integrand
\begin{equation}
 \hat{\cI}^{n-gon}=\frac{1}{\sigma_{\ell^+\,\ell^-}^4}\prod_{i=1}^n \left(\frac{\sigma_{\ell^+\,\ell^-}}{\sigma_{i\,\ell^+}\sigma_{i\,\ell^-}}\right)^2\,.
\end{equation}
It is straighforward to see that case I cannot contribute since the integrand scales as $\varepsilon^0$, and thus the amplitude behaves as $\varepsilon^{2|I|-3}=\cO(\varepsilon)$. Therefore only case II contributes; and the integrand factorises as
\begin{equation}
 \hat{\cI}^{n-gon}=\varepsilon^{-2(|\cI|-1)}\frac{1}{\sigma_{I\,\ell^-}^4}\prod_{i\in I}\frac{1}{x_i^2}\,\prod_{p\in\bar I}\left(\frac{\sigma_{I\,\ell^-}}{\sigma_{p\,I}\sigma_{p\,\ell^-}}\right)^2=\varepsilon^{-2(|\cI|-1)}\,\cI_{I}^{n-gon}\cI_{\bar I}^{n-gon}\,.
\end{equation}
Note that we have used explicitly the chosen gauge fixing $x_{\bar I}=\infty$ for the second equality. In particular, since $\cI^{n-gon}_{I, \bar I}$ do not depend on $\ell$, this gives the correct residues for the respective Q-cuts of the $n$-gon. 

\subsubsection{The Parke-Taylor factor}
Consider next the Parke-Taylor-like integrands
\begin{align}
 \cI^{cPT}(\sigma_{\ell^+},\,\alpha,\,\sigma_{\ell^-})= \sum_{\alpha\in S_n}\cI^{PT}(\sigma_{\ell^+},\,\alpha,\,\sigma_{\ell^-}) =\sum_{\alpha\in S_n}\frac{\sigma_{\ell^+\,\ell^-}}{\sigma_{\ell^+\,\alpha(1)}\sigma_{\alpha(1)\,\alpha(2)}\dots\sigma_{\alpha(n)\,\ell^-}}\,.
\end{align}
If the set $I$ is not consecutive in any of the orderings of the Parke-Taylor factors in the cyclic sum above, the amplitude scales as $\cO(\varepsilon)$ and thus vanishes. Therefore the only non-vanishing contributions come from terms where all $\sigma_i$, $i\in I$ are consecutive with respect to the ordering defined by the Parke-Taylor factors.\\

In case II, with $\sigma_{\ell^+}\in I$, the only term contributing is $\cI^{PT}(\sigma_{\ell^+},\,I,\,\bar I,\,\sigma_{\ell^-})$,
and we find the correct scaling behaviour to reproduce the pole,
\begin{equation} \cI^{cPT}(\sigma_{\ell^+},\,\alpha,\,\sigma_{\ell^-})=\varepsilon^{-(|I|-1)}\,\cI^{PT}_I(x_{\ell^+},\,I,\,x_{\bar I})\,\cI^{PT}_{\bar I}(\sigma_{I},\,\bar I,\,\sigma_{\ell^-})\,.
\end{equation}
In particular, the integrands are again independent of the loop momentum $\ell$, and are straightforwardly identified as the tree-level Parke-Taylor factors, $\cI^{PT}=\cI^{PT,(0)}$.
Note furthermore that the reduction of the integrands from a sum over cyclic Parke-Taylor factors to single terms can be understood directly in terms of diagrams, as only a single diagram will contribute to a given pole. However, another nice interpretation can be given for the biadjoint scalar theory discussed in \cite{He:2015yua}: Here, the cyclic sum is understood as a tool to remove unwanted tadpole contributions to the amplitude. The factorising Riemann surface however separates the insertions of the loop momenta, and thereby automatically removes these tadpole diagrams.\\

In case I, the same argument as above can be used to deduce that the only terms contributing on the factorised Riemann sphere are those where all $i\in I$ appear in a consecutive ordering, and we find
\begin{equation}
 \cI^{cPT}(\sigma_{\ell^+},\,\alpha,\,\sigma_{\ell^-})=\varepsilon^{-(|I|-1)}\cI^{PT}_I(I)\, \cI^{cPT}_{\bar I}(\sigma_{\ell^+},\,\alpha(\bar I\cup\{\sigma_I\}),\,\sigma_{\ell^-})\,
\end{equation}
which again leads to the expected poles and residues, with $\cI^{cPT}_{\bar I}=\cI^{(1)}_{\bar I}$.

\subsubsection{Non-supersymmetric theories}
\label{sec:non-supersymm-theor-1}
In both the case of the $n$-gon and the Parke-Taylor factors, the integrand was independent of $\ell$, and thus Q-cuts were easily identified. For non-supersymmetric theories, with Pfaffians in the integrands, this identification becomes more involved. We will focus first on the NS sector\footnote{In the case of Yang-Mills, this is identical to the pure case, see \cref{eq:pure-YM}.},
\begin{equation}
 \hat{\cI}^{\text{NS}}= \sum_r\pf'(M_{\text{NS}}^r)\,,
\end{equation}
with $\pf'(M_{\text{NS}})\equiv \frac{1}{\sigma_{\ell^+\,\ell^-}}\,\pf({M_{\text{NS}}}_{(\ell^+,\ell^-)})$, and $M_{\text{NS}}$ defined in \cref{eq:defMNS_details}. As above, we have used the subscript $(ij)$ to denote that both the rows and the columns $i$ and $j$ have been removed from the matrix. Consider first again the case II where the Riemann sphere degenerates as $\sigma_i \rightarrow \sigma_I+\varepsilon x_i +\cO(\varepsilon^2)$ with $\sigma_{\ell^+}\in I$. Then the entries in $M_{\text{NS}}$ behave to leading order in $\varepsilon$ as
\begin{equation}
 \frac{1}{\sigma_{ij}}=\begin{cases} \varepsilon^{-1}\frac{1}{x_{ij}} & i,j\in I\\
                                     \frac{1}{\sigma_{Ij}} & i\in I,\, j\in \bar I\\
                                     \frac{1}{\sigma_{ij}} & i,j\in \bar I\\
                       \end{cases} ,\qquad P^\mu(\sigma_i)=\begin{cases} \varepsilon^{-1}P^\mu_I(x_i)&i\in I\\ P^\mu_{\bar I}(\sigma_i)&i\in\bar I \end{cases}\,.
\end{equation}
Using antisymmetry of the Pfaffian, we can rearrange the rows and columns such that $M_{\text{NS}}^r$ takes the following form:
\begin{equation}
\begingroup \renewcommand*{\arraystretch}{1.4}
 M_{\text{NS}}^r=\begin{pmatrix}\varepsilon^{-1} {M^{(0)}_I}_{(\bar I\bar I')} & N\\ -N^T & {M^{(0)}_{\bar I}}_{(II')}\end{pmatrix}\,,
 \endgroup
\end{equation}
where $M_{I}^{(0)}$ is the tree-level matrix, depending only on higher-dimensional on-shell deformations of the loop momentum $\ell_I=\ell+\eta$ with polarisation $\epsilon^r$, and the momenta $k_i$, $i\in I$. In particular, the diagonal entries $C_{ii}=\epsilon_i\cdot P(\sigma_i)$ respect this decomposition due to the one-form $P_\mu$ factorising appropriately. The matrices $N$ are defined (to leading order in $\varepsilon)$ by $N_{ij}=\mu_i\cdot \nu_i$, with 
\begin{align*}
 \mu=\left(\ell_I,k_i,\epsilon^r,\epsilon_i\right)\,,\qquad \nu=\left(\frac{-\ell_I}{\sigma_{I\ell^-}},\frac{k_j}{\sigma_{Ij}},\frac{\epsilon^r}{\sigma_{I\ell^-}},\frac{\epsilon_j}{\sigma_{Ij}}\right)\,,
\end{align*}
for $i\in I$, $j\in \bar I$, where $\ell_I=\ell+\eta$, with $\eta^2=-\ell^2$. Note in particular that this ensures that $N_{\ell^+\ell^-}=0$. To identify the scaling of the integrand $\cI_{\text{NS}}$, we have to consider the Pfaffian of the reduced matrix ${M_{\text{NS}}^r}_{(\ell^+\ell^-)}$;
\begin{equation}
\begingroup \renewcommand*{\arraystretch}{1.5}
 {M_{\text{NS}}^r}_{(\ell^+\ell^-)}=\begin{pmatrix}\varepsilon^{-1} {M^{(0)}_I}_{(\ell^+\bar I\bar I')} & N_{[\ell^+\ell^-]}\\ -N^T_{[\ell^+\ell^-]} & {M^{(0)}_{\bar I}}_{(\ell^- II')}\end{pmatrix}\,,
 \endgroup
\end{equation}
where in $N$ only the row (column) associated to $\ell^+$ ($\ell^-$) has been removed.
Note in particular that the matrices ${M^{(0)}_I}_{(\ell^+\bar I\bar I')}$ and ${M^{(0)}_{\bar I}}_{(\ell^- II')}$ have odd dimensions, so the scaling in $\varepsilon$ is non-trivial. To identify the leading behaviour of the bosonic integrand in $\varepsilon$, we will use the following lemma:
\begin{lemma}[Factorisation Lemma \cite{Dolan:2013isa}] \label{lemma:fact}
 Let $M_I$ and $M_{\bar I}$ be antisymmetric matrices of dimensions $m_I\times m_I$ and $m_{\bar I}\times m_{\bar I}$ respectively; and $N=\mu_i\nu_j$, with $d$-dimensional vectors $\mu_i=\mu_i^\mu$, $\nu_j=\nu_j^\mu$ for $i\in I$, $j\in \bar I$. Then the leading behaviour of the Pfaffian of
 \begin{equation}
  M=\begin{pmatrix} \varepsilon^{-1}M_I & N \\ -N^T & M_{\bar I} \end{pmatrix}
 \end{equation}
as $\varepsilon\rightarrow 0$ is given by
\begin{itemize}
 \item $m_I +m_{\bar I}$ is odd: $\pf(M)=0$.
 \item $m_I +m_{\bar I}$ is even, $m_I$ and $m_{\bar I}$ are even: $\pf(M)\sim\varepsilon^{-\frac{m_I}{2}}\pf(M_I)\pf(M_{\bar I})$
 \item $m_I +m_{\bar I}$ is even, $m_I$ and $m_{\bar I}$ are odd: 
 \begin{equation}
  \pf(M)\sim \varepsilon^{-\frac{m_I-1}{2}}\sum_s\pf(\widetilde{M}_I^s)\,\pf(\widetilde{M}_{\bar I}^s)\,,
 \end{equation}
where $s$ runs over a basis $\epsilon^s$, and 
\begin{equation}
 \widetilde{M}_I^s=\begin{pmatrix} M_I & (\mu\cdot\epsilon^s)^T\\ -\mu\cdot\epsilon^s & 0 \end{pmatrix}\,, \qquad \widetilde{M}_{\bar I}^s=\begin{pmatrix} 0 & \nu\cdot\epsilon^s\\ -(\nu\cdot\epsilon^s)^T & M_{\bar I} \end{pmatrix}\,.
\end{equation}
\end{itemize}
\end{lemma}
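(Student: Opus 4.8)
The plan is to expand $\pf(M)$ as a signed sum over perfect matchings of the combined index set $I\cup\bar I$ and to track the power of $\varepsilon$ carried by each matching. Using $\pf(M)=\sum_{\mathcal{M}}\sgn(\mathcal{M})\prod_{(a,b)\in\mathcal{M}}M_{ab}$, each matching $\mathcal{M}$ splits into $p$ pairs internal to $I$, $q$ pairs internal to $\bar I$, and $k$ crossing pairs $(i,j)$ with $i\in I$, $j\in\bar I$, so $m_I=2p+k$ and $m_{\bar I}=2q+k$; in particular $k\equiv m_I\equiv m_{\bar I}\pmod 2$. The first case is then immediate: if $m_I+m_{\bar I}$ is odd, then $m_I$ and $m_{\bar I}$ have opposite parity, no admissible $k$ exists, $M$ has odd size, and $\pf(M)=0$. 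Each internal-$I$ pair contributes an entry of $\varepsilon^{-1}M_I$, so a matching with parameter $k$ scales as $\varepsilon^{-(m_I-k)/2}$, and the leading behaviour as $\varepsilon\to0$ comes from the matchings with $k$ minimal.

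For the second case ($m_I$, $m_{\bar I}$ both even) the minimum is $k=0$. I would observe that the $k=0$ matchings are exactly products of a perfect matching of $I$ with a perfect matching of $\bar I$, that $\sgn(\mathcal{M})$ is multiplicative over such a product once the $\bar I$ block is ordered after the $I$ block (there are no crossing edges to entangle the signs), and that summing reproduces $\pf(\varepsilon^{-1}M_I)\,\pf(M_{\bar I})=\varepsilon^{-m_I/2}\pf(M_I)\pf(M_{\bar I})$; all other matchings have $k\ge 2$ and are $O(\varepsilon^{-m_I/2+1})$, hence subleading.

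For the third case ($m_I$, $m_{\bar I}$ both odd) the minimum is $k=1$: exactly one crossing pair $(i,j)$, contributing $N_{ij}=\mu_i\cdot\nu_j$, with the remaining indices of $I$ and of $\bar I$ matched internally. I would resolve the inner product in a basis, $\mu_i\cdot\nu_j=\sum_s(\mu_i\cdot\epsilon^s)(\nu_j\cdot\epsilon^s)$ (using the dual basis where the metric is non-trivial), and reorganise the triple sum over $i$, over $j$, and over the residual matchings as $\varepsilon^{-(m_I-1)/2}\sum_s A_s B_s$ with $A_s=\sum_{i\in I}\pm(\mu_i\cdot\epsilon^s)\,\pf\big((M_I)_{\hat i}\big)$ and $B_s=\sum_{j\in\bar I}\pm(\nu_j\cdot\epsilon^s)\,\pf\big((M_{\bar I})_{\hat j}\big)$, where $(\cdot)_{\hat a}$ denotes deletion of row and column $a$. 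Each of $A_s$, $B_s$ is then recognised as the Laplace (row/column) expansion of a bordered Pfaffian, $A_s=\pf(\widetilde M_I^s)$ and $B_s=\pf(\widetilde M_{\bar I}^s)$ with the bordered matrices exactly as in the statement, giving $\pf(M)\sim\varepsilon^{-(m_I-1)/2}\sum_s\pf(\widetilde M_I^s)\pf(\widetilde M_{\bar I}^s)$.

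The main obstacle will be the sign bookkeeping in this last step: one must verify that $\sgn(\mathcal{M})$ for a matching using the fixed crossing edge $(i,j)$ factorises into the signs of the two sub-matchings times a sign depending only on the position of $i$ within $I$ and of $j$ within $\bar I$, and that these residual position-signs are exactly the ones generated by the Laplace expansions defining $\pf(\widetilde M_I^s)$ and $\pf(\widetilde M_{\bar I}^s)$ once the $\bar I$ block is moved past the $I$ block and the bordering row/column is placed last. This is a finite, mechanical verification — the same one carried out in \cite{Dolan:2013isa} — so I would defer to that reference for the routine sign computation while recording the $\varepsilon$-counting and the basis-resolution steps explicitly here, since those are what make the factorisation channels manifest.
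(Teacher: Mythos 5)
Your proposal is sound, and in fact it supplies more detail than the paper does: the paper only states the lemma and refers the reader to \cite{Dolan:2013isa} for the proof, so there is no in-paper argument to compare against. Your route is the standard one that the cited reference relies on: expand $\pf(M)$ over perfect matchings, use the parity count $m_I=2p+k$, $m_{\bar I}=2q+k$ to identify the minimal number $k$ of crossing edges (none exists when $m_I+m_{\bar I}$ is odd, $k=0$ in the even--even case, $k=1$ in the odd--odd case), read off the $\varepsilon$-weight $\varepsilon^{-(m_I-k)/2}$, resolve $\mu_i\cdot\nu_j=\sum_s(\mu_i\cdot\epsilon^s)(\nu_j\cdot\epsilon^s)$ over a (dual) basis, and recognise the two resulting single sums as the Pfaffian expansions of the bordered matrices $\widetilde M_I^s$ and $\widetilde M_{\bar I}^s$ along the added row and column. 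All of these steps are correct, and the subleading estimate for matchings with larger $k$ is handled properly. The one item you defer, the factorisation of $\sgn(\mathcal{M})$ into the signs of the two sub-matchings times the position signs absorbed by the Laplace expansions, is genuinely routine (it is the same bookkeeping that makes the Pfaffian multiplicative on block-diagonal matrices once the index set is ordered as $I$ followed by $\bar I$ with the bordering row placed last), so deferring it leaves no gap in the logic, though a fully self-contained write-up would record that computation once.
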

The interested reader is referred to \cite{Dolan:2013isa} for the proof of this lemma relying on basic properties of the Pfaffian. Applying \cref{lemma:fact} to the integrand $\cI_{\text{NS}}$, we can identify ${\widetilde{M}^{(0)}_I}$\hspace{-4pt}$_{(\ell^+\bar I\bar I')}$ with ${M^{(0)}_I}_{(\ell^+\bar I)}^{rs}$ by identifying the additional `$s$' row and column with the ones associated to the interchanged particle $\bar I'$ \footnote{and similarly for ${\widetilde{M}^{(0)}_{\bar I}}$\hspace{-4pt}$_{(\ell^- I I')}$ and ${M^{(0)}_{\bar I}}_{(\ell^-I)}^{rs}$}. To leading order in $\varepsilon$, the integrand therefore becomes
\begin{equation}
 \hat{\cI}^{\text{NS}}=
 \varepsilon^{-(|I|-1)} \frac{1}{\sigma_{I\ell^-}}\sum_{r,s} \pf\left({M^{(0)}_I}_{(\ell^+\bar I)}^{rs}\right)\pf\left({M^{(0)}_{\bar I}}_{(\ell^- I)}^{rs}\right)\,.
\end{equation}
Recalling furthermore the gauge fixing choices in degenerating the worldsheet with $x_{\bar I}=\infty$, this can be identified with a product of reduced Pfaffians,
\begin{equation}
 \hat{\cI}^{\text{NS}}=
 \varepsilon^{-(|I|-1)} \sum_{r,s} \pf'\left({M^{(0)}_I}^{rs}\right)\pf'\left({M^{(0)}_{\bar I}}^{rs}\right)\,.
\end{equation}
As seen from the discussion above, this provides both the correct scaling in the degeneration parameter $\varepsilon$ and the correct residues for the Q-cut factorisation.\\

The discussion for case I proceeds along similar lines: For convenience, we choose to remove rows and columns associated to one particle on each side of the degeneration from $M_{\text{NS}}$. Following through the same steps as for case II, the integrand then factorises as 
\begin{equation}
 \hat{\cI}^{\text{NS}}=
 \varepsilon^{-(|I|-1)} \sum_{r,s} \pf'\left({M^{(0)}_I}^{s}\right)\pf'\left({M^{(1)}_{\bar I}}^{rs}\right)\,.
\end{equation}
This correctly reproduces the poles and residues for the bubbling of a Riemann sphere: as a partial sum of the external momenta goes null, the residue is a product of a one-loop amplitude and a tree-level amplitude.\\

{\bf Factorisation for Pure Yang Mills and gravity amplitudes.} At this point it is easy to see how this analysis extends to pure Yang-Mills and gravity. Note first of all that for Yang-Mills, the NS and the pure sector are identical, see \cref{eq:pure-YM},
\begin{equation}
  \mathcal{I}_{\mathrm{pure-YM}}=
  ( \pf(M_{3})\big|_{\sqrt{q}}+ (d-2)\,\pf(M_{3})\big|_{q^{0}})\,\mathcal{I}^{PT}\,.
\end{equation}
For pure gravity, \cref{eq:pure-gravity}, we have
\begin{equation}
  \mathcal{I}_{\mathrm{pure-grav}}=( \pf(M_{3})\big|_{\sqrt{q}}+ (d-2)\,\pf(M_{3})\big|_{q^{0}})^{2}-\alpha\,(\pf(M_{3})\big|_{q^{0}})^{2}\,,
\end{equation}
where $\alpha=\frac{1}{2}(d-2)(d-3)+1$ is given by the degrees of freedom of the B-field and the dilaton. This factorises,
\begin{equation}
\begin{aligned}
  &\mathcal{I}_{\mathrm{pure-grav}}=\cI_L\,\cI_R\,,\qquad\text{with} &&\cI_L=\pf(M_{3})\big|_{\sqrt{q}}+ \alpha_d\,\pf(M_{3})\big|_{q^{0}}\,,\\
  & && \cI_R=\pf(M_{3})\big|_{\sqrt{q}}+ \tilde{\alpha}_d\,\pf(M_{3})\big|_{q^{0}}\,,
\end{aligned}
\end{equation}
where $\alpha_d=d-2+\sqrt{\alpha}$ and $\tilde{\alpha}_d=d-2-\sqrt{\alpha}$. An analogous calculation to \cref{sec:MNS} then straightforwardly leads to
\begin{equation}
 \hat{\cI}_L=\sum_r \pf'(M_{\alpha_d}^r)\,, \qquad \hat{\cI}_R=\sum_r\pf'(M_{\tilde{\alpha}_d}^r)\,,
\end{equation}
where $M_{\alpha_d}^r$ and $M_{\tilde{\alpha}_d}^r$ have been defined as $M_{\text{NS}}^r$, but with the element $B_{\ell^+\ell^-}^{\text{NS}}=\frac{d-2}{\sigma_{\ell^+\ell^-}}$ replaced by $B_{\ell^+\ell^-}^{\alpha_d}=\frac{\alpha_d}{\sigma_{\ell^+\ell^-}}$ and $B_{\ell^+\ell^-}^{\tilde{\alpha}_d}=\frac{\tilde{\alpha}_d}{\sigma_{\ell^+\ell^-}}$ respectively. Then the discussion given above for the NS sector generalises straightforwardly, and the factorisation lemma, in conjunction with the same identification of the matrices, yields again
\begin{equation}
 \hat{\cI}^{\text{pure}}=
 \varepsilon^{-(|I|-1)} \sum_{r,s} \pf'\left({M^{(0)}_I}^{rs}\right)\pf'\left({M^{(0)}_{\bar I}}^{rs}\right)\,.
\end{equation}
Note in particular that since only the matrix $N$ is affected by the change $d-2\rightarrow \alpha_d$, the residues are unchanged, and thus still correspond to the expected tree-level amplitudes for pure Yang-Mills and gravity. Again, case I proceeds in close analogy to the NS sector discussion above.

\subsection{UV behaviour of the one-loop amplitudes} \label{sec:powercounting}
Consider now the UV behaviour of the 1-loop amplitudes; $\ell\rightarrow\lambda^{-1}\ell$, with $\lambda\rightarrow 0$. In this set-up, the scattering equations only yield solutions if the two insertion points of the loop momentum coincide, $\sigma_{\ell^-}\rightarrow\sigma_{\ell^+}$. The factorisation of the scattering equations and the measure will be closely related to \cref{sec:fact_SE}, so we will restrict the discussion to highlight the differences due to the factor of $\lambda^{-1}$. As above, we will blow up this concentration point into a bubbled-off Riemann sphere,
\begin{equation}\label{eq:degen_UV}
 \sigma_{\ell^-}=\sigma_{\ell^+}+\varepsilon +\varepsilon^2 y_{\ell}+\cO(\varepsilon^3)\,,
\end{equation}
where we have used the M\"obius invariance on the sphere to fix $x_{\ell^-}=1$. 
We thus find the scattering equations
\begin{subequations}
  \begin{align}
    &0=\text{Res}_{\ell^+}P^2=\lambda^{-1} \sum_i \frac{\ell\cdot k_i}{\sigma_{\ell^+i}} \\
    &0=\text{Res}_{\ell^-}P^2=-\lambda^{-1} \sum_i \frac{\ell\cdot k_i}{\sigma_{\ell^+i}}+\varepsilon \lambda^{-1}\sum_i \frac{\ell\cdot k_i}{\sigma_{\ell^+i}^2}+\cO(\varepsilon^2)  \label{eq:SE_UV}\\
    &0=k_i\cdot P(\sigma_i)=-\varepsilon \lambda^{-1} \frac{\ell\cdot k_i}{\sigma_{\ell^+i}^2} +\sum_j \frac{k_i\cdot k_j}{\sigma_{ij}} +\cO(\varepsilon^2)\,.
  \end{align}
\end{subequations}
On the support of the  scattering equations at $\sigma_{\ell^{+}}$ and $\sigma_i$ for $i\neq i_1,i_2,i_3$, \cref{eq:SE_UV} simplifies to
\begin{align}
 0=\text{Res}_{\ell^-}P^2&=\varepsilon \lambda^{-1}\sum_{i=i_1,i_2,i_3} \frac{\ell\cdot k_i}{\sigma_{\ell^+i}^2}+\sum_{\substack{i,j\\i\neq i_1,i_2,i_3}}\frac{k_i\cdot k_j}{\sigma_{ij}}+\cO(\varepsilon^2)\\
 &\equiv \varepsilon \lambda^{-1} \mathcal{F}_1-\mathcal{F}_2\,,
\end{align}
where the explicit form of $\mathcal{F}_{1,2}$ will be irrelevant for the following discussion. Including the factor of $\ell^{-2}$, the measure therefore factorises (to leading order) as
\begin{equation}
\begin{split}
 d\mu&\equiv\frac{\lambda^2}{\ell^2}\bar{\delta}(\text{Res}_{\ell^+}P^2)\bar{\delta}(\text{Res}_{\ell^-}P^2)\prod_{i\neq i_1,i_2,i_3}\bar{\delta}(k_i\cdot P(\sigma_i))d\sigma_id\sigma_{\ell^+}d\sigma_{\ell^-}\\
 &=\lambda^4 \,\bar{\delta}\left(\varepsilon -\lambda \frac{\mathcal{F}_2}{\mathcal{F}_1}\right) \,d\varepsilon\, d\tilde{\mu}\,,
 \end{split}
\end{equation}
where $d\tilde{\mu}$ is independent of $\lambda$ and $\varepsilon$. The remaining delta-dunction thus fixes the scaling of the worldsheet degeneration $\varepsilon$, to be proportional to the UV scaling $\lambda$ of the loop momentum $\ell$.\\

Again, this factorisation behaviour of the measure is universal for all theories, and only the specific form of the integrand will dictate the UV scaling of the theory. Denoting the scaling of $\mathcal{I}_{L,R}$ in $\varepsilon$ by $N_{L,R}$, the scattering equation fixing $\varepsilon$ implies that the 1-loop amplitudes scale as
\begin{equation}
 \cM\rightarrow\lambda^{4+N_L+N_R}\cM\,.
\end{equation}
Let us now consider the different supersymmetric and non-supersymmetric theories discussed above.\\

\subsubsection{The \texorpdfstring{$n$}{n}-gon} The integrand of the $n$-gon,
\begin{equation}
 \hat{\cI}^{n-gon}=\frac{1}{\sigma_{\ell^+\ell^-}^4}\prod_{i=1}^n \left(\frac{\sigma_{\ell^+\ell^-}}{\sigma_{i\ell^+}\sigma_{i\ell^-}}\right)^2\,,
\end{equation}
manifestly scales as $\varepsilon^{2n-4}$ under the worldsheet degeneration \ref{eq:degen_UV}. The leading behaviour of the amplitudes is thus given by $\lambda^{4+N}=\lambda^4$ for $\lambda\rightarrow 0$, and therefore the $n$-gons scale as $\ell^{-2n}$ in the UV limit.\\

\subsubsection{The Parke-Taylor factor}
The Parke-Taylor integrand \cref{eq:PTloop-def} contributing in Yang-Mills and the biadjoint scalar theory is given by\footnote{Note that we have chosen to include a factor of $\sigma_{\ell^+\ell^-}$ symmetrically in each integrand $\cI^{L,R}$.}
\begin{equation}
 \hat{\cI}^{PT}=\frac{1}{\sigma_{\ell^+\ell^-}}\sum_{i=1}^n \frac{1}{\sigma_{\ell^+\,i}\sigma_{i+1\, i}\sigma_{i+2\, i+1}\ldots \sigma_{i+n\,\ell^-}}\, .
\end{equation}
While naively this scales as $\varepsilon^{-1}$, the leading order
cancels due to the photon decoupling identity --- a special case of
the KK relations ---
\begin{equation}
 0=\sum_{i=1}^n \frac{1}{\sigma_{\ell\,i}\sigma_{i+1\, i}\sigma_{i+2\, i+1}\ldots \sigma_{i+n\,\ell}}\, ,
\end{equation}
and the integrand thus scales as $\varepsilon^0$. In particular, this allows us to identify immediately the UV behaviour of the bi-adjoint scalar theory as $\ell^{-4}$. This result can be given an intuitive interpretation in terms of Feynman diagrams; the UV behaviour of the theory is determined by the diagrams involving bubbles, which scale as $\ell^{-4}$.

\subsubsection{Supersymmetric theories}
For supersymmetric theories, the UV behaviour is governed by the scaling of the integrand \cref{eq:IL0-sugra}
\begin{equation}
 \hat{\cI}_0^{L,R}=\frac{1}{\sigma_{\ell^+\ell^-}^2}\cI_0^{L,R}\,,\qquad \cI_0^{L,R}=\pf(M_{3}) \big|_{\sqrt{q}}+
  8\left(\pf(M_{3}) \big|_{q^0}-\pf(M_{2} )\big|_{q^0}\right)\,,
\end{equation}
under the worldsheet degeneration described above. Note first that the Szeg\H{o} kernels become (see \cref{eq:szegolimitsl2c})
\begin{align*}
 S_2(\sigma_{ij}) &= \frac{1}{2\sigma_{i\,j}} \left(\sqrt{\frac{\sigma_{i\,\ell^+}\,\sigma_{j\,\ell^-}}{\sigma_{j\,\ell^+}\,\sigma_{i\,\ell^-}}}+ \sqrt{\frac{\sigma_{j\,\ell^+}\,\sigma_{i\,\ell^-}}{\sigma_{i\,\ell^+}\,\sigma_{j\,\ell^-}}} \right)  \sqrt{d\sigma_ i} \sqrt{d\sigma_ j} &\to\sum_{m=0}^\infty  \varepsilon^m S_2^{(m)}\sqrt{d\sigma_ i} \sqrt{d\sigma_ j}\\
S_3(\sigma_{ij}) &=\frac{1}{\sigma_{i\,j}} \left(1 +\sqrt{q} \;\frac{(\sigma_{i\,j}\,\sigma_{\ell^+\,\ell^-})^2}{\sigma_{i\,\ell^+}\,\sigma_{i\,\ell^-}\,\sigma_{j\,\ell^+}\,\sigma_{j\,\ell^-}}\right) \sqrt{d\sigma_ i} \sqrt{d\sigma_ j} &\to \sum_{m=0}^\infty \varepsilon^m S_3^{(m)}\sqrt{d\sigma_ i} \sqrt{d\sigma_ j}\,,
\end{align*}
where
\begin{subequations}
  \begin{align}
    S_2^{(0)}&= \frac{1}{\sigma_{ij}} &  S_3^{(0)}&= \frac{1}{\sigma_{ij}}\\
    S_2^{(1)}&= 0 &  S_3^{(1)}&= 0\\
    S_2^{(2)}&= \frac{1}{8}\,\frac{\sigma_{ij}}{\sigma_{i\ell^+}^2\sigma_{j\ell^+}^2} &  S_3^{(2)}&= \frac{\sigma_{ij}}{\sigma_{i\ell^+}^2\sigma_{j\ell^+}^2}\\
    S_2^{(3)}&= \frac{1}{8}\left(\frac{\sigma_{ij}(\sigma_{i\ell^+}+\sigma_{j\ell^+})}{\sigma_{i\ell^+}^3\sigma_{j\ell^+}^3}+\frac{2\,y_\ell \,\sigma_{ij}}{\sigma_{i\ell^+}^2\sigma_{j\ell^+}^2}\right) &  S_3^{(3)}&=\frac{\sigma_{ij}(\sigma_{i\ell^+}+\sigma_{j\ell^+})}{\sigma_{i\ell^+}^3\sigma_{j\ell^+}^3}+\frac{2\,y_\ell \,\sigma_{ij}}{\sigma_{i\ell^+}^2\sigma_{j\ell^+}^2}\,.
  \end{align}
\end{subequations}
Expanding the integrand $\cI_0^{L,R}$ in powers of $\varepsilon$, $\pf(M_{3}) \big|_{q^0}$ and $\pf(M_{2}) \big|_{q^0}$ potentially contribute at order $\varepsilon^0$, whereas $\pf(M_{3}) \big|_{\sqrt{q}}$ can only contribute to $\varepsilon^2$. However, the leading contribution cancles among $\pf(M_{3}) \big|_{q^0}$ and $\pf(M_{2}) \big|_{q^0}$, as well as all higher order contribution (starting at order $\varepsilon^1$) coming from the diagonal entries of $C$. The scaling in $\varepsilon$ is thus governed by the higher order behaviour of the Szeg\H{o} kernels. Moreover, due to the factor of $1/8$ between $S_2^{(2,3)}$ and $S_3^{(2,3)}$, the terms of order $\cO(\varepsilon^2)$ and $\cO(\varepsilon^3)$ originating from $\pf(M_{2}) \big|_{q^0}$ cancel against the contributions from $\pf(M_{3}) \big|_{\sqrt{q}}$.\footnote{A bit more care is needed at $\cO(\varepsilon^3)$: While there are no contributions to $\cO(\varepsilon^2)$ from products of terms of order $\varepsilon$, these have to be taken into account at order $\cO(\varepsilon^3)$. However, the same reasoning as above guarantees their cancellation: The only possible origin for terms of order $\varepsilon$ are the diagonal entries of $C$, which coincide for $M_2$ and $M_3$. The cancellations to second order thus carry forwards to ensure that there will be no contributions from products of lower order terms up to $\cO(\varepsilon^3)$.} \\

A short investigation confirms that there are no further cancellations, and thus $\hat{\cI}_0^{L,R}$ scales as $\varepsilon^2$. In particular, using $N=4+N_L+N_R$, this implies that our one-loop supergravity amplitudes scale as $\ell^{-8}$ in the UV limit, and super Yang-Mills as $\ell^{-6}$ (using $N_R=0$ for the Parke-Taylor integrand derived above). Naively, this seems to be a weaker UV fall-off for super Yang-Mills than expected from the known expression for the integrand as a sum over box diagrams, which exhibits scaling $\ell^{-8}$. See, however, our previous comment on this issue (following \cref{lemma:powercounting}).

\subsubsection{Non-supersymmetric theories}
In the supersymmetric case discussed above, cancellations between the NS and the R sector ensured the correct scaling of the integrand. However, these cancellations are absent in the purely bosonic case,
\begin{equation}
 \hat{\cI}^{\text{NS}}= \frac{1}{\sigma_{\ell^+\ell^-}}\sum_r\pf({M_{\text{NS}}^r}_{(\ell^+\ell^-)})\,,
\end{equation}
so naively the integrand seems to scale as $\varepsilon^{-2}$. However, the leading contribution is given by the Pfaffian of the full tree-level matrix $M^{(0)}_{\text{NS}}=M^{(0)}$, which vanishes on the support of the scattering equations. More explicitly, let us expand the reduced matrix ${M_{\text{NS}}^r}_{(\ell^+\ell^-)}$ in $\varepsilon$;
\begin{equation}
 \pf({M_{\text{NS}}}_{(\ell^+\ell^-)})=\pf(M^{(0)})+\varepsilon \,\pf(M_{\text{NS}}^{(1)}) +\cO(\varepsilon)
\end{equation}
The vanishing of the leading term can then be seen from the existence of two vecors, 
\begin{equation}\label{eq:v_0}
 v_0 = (\sigma_1,\dots,\sigma_n,0\dots,0) \quad \text{and } \quad \tilde{v}_0 = (1,\dots,1,0\dots,0)
\end{equation}
in the kernel of $M_{\text{NS}}^{(0)}=M^{(0)}$. This argument can in fact be extended to subleading order: expanding both the matrix $M_{\text{NS}}$ and a potential vecor in the kernel to subleading order, 
\begin{equation}
 {M_{\text{NS}}}_{(\ell^+\ell^-)}=M^{(0)} +\varepsilon M^{(1)}_{\text{NS}}\,,\qquad v=v_0 +\varepsilon \,v_1
\end{equation}
we note that the condition for $\cI^{\text{NS}}$ to scale as $\cO(\varepsilon^0)$ is that $ {M_{\text{NS}}}_{(\ell^+\ell^-)}$ has co-rank two to order $\varepsilon$, and thus two vectors spanning its kernel. Finding one of these is enough, as it guarantees the existence of the second. But this on the other hand is equivalent to 
\begin{equation}
 \left(M^{(0)} +\varepsilon M^{(1)}_{\text{NS}}\right)\left(v_0 +\varepsilon\, v_1\right)=\cO(\varepsilon^2)
\end{equation}
vanishing to order $\cO(\varepsilon^2)$. Expanding this out, we obtain the conditions
\begin{align}\label{eq:cond_UV}
 M^{(0)}v_0=0\,,\qquad M^{(0)}\,v_1=-M^{(1)}_{\text{NS}}\, v_0\,.
\end{align}
As commented above, the first condition is satisfied with $v_0$ given in \cref{eq:v_0}. Note furthermore that the second condition cannot be straightforwardly inverted, since det($M^{(0)})=0$. The constraint for a solution to exist is thus the vanishing of both sides of the equation under a contraction with a vector in the kernel of the matrix. Since the only contribution to $M^{(1)}_{\text{NS}}$ to order $\cO(\varepsilon)$ comes from the diagonal entries $C_{ii}$, we get
\begin{equation}
 M^{(1)}_{\text{NS}}\, v_0= (0,\dots,0,\lambda^{-1}\frac{\epsilon_i\cdot\ell}{\sigma_{i\ell^+}},\dots)\,.
\end{equation}
This vanishes trivially when contracted with $v_0$ and $\tilde{v}_0$, and thus there exists a solution $v_1$ to \cref{eq:cond_UV}. The contributions of order $\cO(\varepsilon^{-1})$ to $\cI^{\text{NS}}$ therefore vanish, and the integrand scales as $\varepsilon^0$. In particular, this implies that both pure Yang-Mills and pure gravity one-loop amplitudes scale as $\ell^{-4}$ in the UV limit, which is the expected behaviour from both the Q-cut and the Feynman diagram expansion. Moreover, the analysis above is equally applicable to the NS and the pure theories, similarly to the discussion given in \cref{sec:fact_int}.\\

To conclude the proof of \cref{lemma:powercounting}, let us summarise these results for the UV scaling of our one-loop amplitudes:
\begin{center}
\begin{tabular}{l|c}
 theory & scaling $\lambda^N\sim\ell^{-N}$\\ \hline 
 $n$-gon & $N=2n$\\
 supergravity & $N=8$\\
 super Yang-Mills & $N=6$\\
 pure gravity & $N=4$\\
 pure Yang-Mills & $N=4$\\
 bi-adjoint scalar & $N=4$
\end{tabular}
\end{center}
Note that the scaling of the non-supersymmetric theories (pure gravity and Yang-Mills, as well as the bi-adjoint scalar) corresponds to Feynman diagrams involving bubbles, whereas the higher scaling of the supersymmetric theories ensures that only boxes contribute. As commented above, Yang-Mills exhibits a lower scaling than expected from the Feynman diagram expansion, but which coincides with the expected scaling in the Q-cut representation. \\

Let us comment briefly on the closely related discussions regarding the contribution of the singular solutions $\sigma_{\ell^-}=\sigma_{\ell^+} +\varepsilon +\cO(\varepsilon^2)$. The same arguments as above, without the rescaled loop momenta, ensure that the measure scales to leading order as
\begin{equation}
 d\mu=\bar{\delta}\left(\varepsilon - \frac{\mathcal{F}_2}{\mathcal{F}_1}\right) \,d\varepsilon\, d\tilde{\mu}\,,
\end{equation}
where the measure $d\tilde{\mu}$ is again independent of $\varepsilon$. Then the same powercounting argument in the degeneration parameter as above gives the following scaling for the different theories:
\begin{center}
\begin{tabular}{l|c}
 theory & scaling $\varepsilon^N$\\ \hline 
 $n$-gon & $N=2n-4$\\
 supergravity & $N=4$\\
 super Yang-Mills & $N=2$\\
 pure gravity & $N=0$\\
 pure Yang-Mills & $N=0$\\
 bi-adjoint scalar & $N=0$
\end{tabular}
\end{center}
The contribution from the singular solutions $\sigma_{\ell^-}=\sigma_{\ell^+} +\varepsilon +\cO(\varepsilon^2)$ to the $n$-gon and the supersymmetric theories thus vanishes, whereas they can clearly be seen to contribute for the bi-adjoint scalar theory and Yang-Mills and gravity in the absence of supersymmetry. Moreover, the scaling as $\varepsilon^0$ guarantees that the contributions are clearly finite. This complements the discussion given in \cref{sec:general-form-one}

\section{Discussion}\label{sec:discussion}
In giving the theory that underlies the CHY formulae for tree amplitudes, ambitwistor strings gave a route to conjectures for the extension of those formulae to loop amplitudes.  Being chiral string theories, ambitwistor strings potentially have more anomalies than conventional strings, but nevertheless the version appropriate to type II supergravity led to consistent proposals for amplitude formulae at one and two loops 
\cite{Adamo:2013tsa,Casali:2014hfa,Adamo:2015hoa}.  However, the other main ambitwistor string models would seem to have problems on the torus, either with anomalies, or because the full ambitwistor string theories have unphysical modes associated with their gravity sectors that would propagate in the loops and corrupt for example a  pure Yang-Mills loop amplitude.  Furthermore, once on the torus, it is a moot point as to how much can be done with the formulae, requiring as they do, the full machinery of theta functions.  Issues such as the Schottky problem will make higher loop/genus formulae difficult to write down explicitly.

In  \cite{Geyer:2015bja}, with the further details and extensions given in this paper, we have seen that the conjectures of \cite{Adamo:2013tsa,Casali:2014hfa}, with the adjustment to the scattering equations as described in \S\ref{sec:review}, are equivalent to much simpler conjectures on the Riemann sphere.  These formulae are now of the same complexity as the CHY tree-level scattering formulae on the Riemann sphere with the addition of two marked points, corresponding to loop momenta insertions.  It is therefore possible to apply methods that have been developed at tree-level on the Riemann sphere here also at 1-loop to both extend and prove the conjectures.

As far as extensions are concerned,  we were able in \cite{Geyer:2015bja} to make conjectures for 1-loop formulae for maximal super Yang-Mills for which there is not a good formula on the torus, by replacing one of the Pfaffians with a Parke-Taylor factor, symmetrized so as to run through the loop in all possible ways.  The approach was also suggestive of formulae for the biadjoint scalar theory also, but we were not able to confirm those numerically in \cite{Geyer:2015bja}.  However, these were studied further in \cite{Baadsgaard:2015hia,He:2015yua} where the difficulties that we had were to a certain extent resolved and are associated with degenerate solutions to the scatering equations.

 In our original formulation, we only considered $(n-1)!-2(n-2)!$ solutions to the scattering equations for an $n$ particle amplitude at 1-loop.   This counting was more clearly understood in \cite{He:2015yua}.  The $(n-1)!$ is the number of solutions that one obtains for $n+2$ points on the sphere with arbitrary null momenta at $n$ points, and off-shell momenta at the remaining two points (all summing to zero).  If one takes the forward limit in which  the two off-shell momenta become equal and opposite, one finds that there are two classes of  $(n-2)!$ degenerate solutions, in which the  two loop insertion points  come together (or alternatively all the other points come together); the two classes are distinguished by the rate at which the points come together as the forward limit is taken.   In the forward limit at which we are working, the most degenerate  class no longer applies but, in general, we can consider the other.  For amplitudes in supersymmetric Yang-Mills and gravity, these degenerate solutions give a vanishing contribution to the loop integrand. However, they do contribute in the case of the biadjoint scalar theory, as shown in \cite{He:2015yua}, and they also contribute in the cases of non-supersymmetric Yang-Mills and gravity presented in this paper. 
 
However,  as seen in \S\ref{sec:factorization}, the degenerate solutions do not contribute to the Q-cuts.   So,  to arrive at a loop integrand that computes the correct amplitude under dimensional regularisation, we  simply discard them in our proposed formulae also.  Having discarded these terms, our formulae will not then necessarily give the integrand itself as a sum of Feynman diagrams.  In the biadjoint scalar theory, for example, there will be terms that look like tree amplitudes with bubbles on each external leg that will vanish under dimensional regularisation.  These are correctly computed if the degenerate solutions are included as shown by \cite{He:2015yua}.  It would be interesting to see if this persists for all our formulae as we have seen that they make sense on the degenerate solutions.

It should also be possible to prove our one-loop formulae for supersymmetric theories via factorisation.  The  gap in our argument is that we do not have a good formula for the Ramond sector contributions at tree level, as would be required to prove factorisation.  Our  representation of the Ramond sector in the loop as the Pfaffian of $M_2$ should provide some hint as to how to do this. 

Ideally, there should be no need to solve the scattering equations explicitly. The main result of \cite{Geyer:2015bja}, which relied on the use of a residue theorem to localise the modular parameter, was inspired by \cite{Dolan:2013isa}, where the tree-level CHY integrals were computed by successive application of residue theorems, rather than by solving the scattering equations. The way forward is to use the map between integrals over the moduli space of the Riemann sphere and rational functions of the kinematic invariants, which is implicit in the scattering equations. Recently, there has been intense work on making this map more practical \cite{Kalousios:2015fya,Cachazo:2015nwa,Baadsgaard:2015voa,Baadsgaard:2015ifa,Baadsgaard:2015hia,Huang:2015yka,Sogaard:2015dba,Cardona:2015eba}. We expect that this will make the use of our formulae much more efficient.

It was argued in \cite{Geyer:2015bja} that the scheme explored here at one-loop has a natural extension to all loops.  Similarly, the Q-cut formalism of \cite{Baadsgaard:2015twa} also has a natural extension to all loops.  It will be interesting to see  whether the factorization strategy presented in this paper can be extended to give a correspondence with the Q-cut formalism at higher loop order.  Obtaining better control of higher loop Pfaffians will be crucial for  using these ideas to understand gauge theories and gravity. A formulation as correlators on the Riemann sphere, as suggested by our introduction of $M_{\text{NS}}$, may play a key role.

\section*{Acknowledgements}

We thank Tim Adamo, Eduardo Casali, Freddy Cachazo, Henrik Johansson,
David Skinner and Pierre Vanhove for useful suggestions and
comments, and Tim Adamo in particular for comments on the draft. PT is supported by the ERC Grant 247252 STRING, YG by an
EPSRC DTA award and the Mathematical Prizes Fund, LJM by EPSRC grant
EP/M018911/1.

\appendix

\section{Solutions to the four-point 1-loop scattering equations}\label{4pt-soln}

In this appendix, we briefly discuss the solutions to the one-loop scattering equations for $n=4$. There are two solutions to \eqref{SE2n}, in agreement with the counting $(n-1)!-2(n-2)!$. Fixing $\sigma_1=1$, along with the choices $\sigma_{\ell^+}=0$ and $\sigma_{\ell^-}=\infty$ understood in \eqref{SE2n}, these two solutions are given by
\begin{align}
\sigma_2 &= \frac{\ell\cdot k_2 \,(W+1) (k_2\cdot k_3+ \ell\cdot k_4 W)}{W (\ell\cdot k_4 (W
   (\ell+k_1)\cdot k_2+\ell\cdot k_2)+ k_2\cdot k_3\,\ell\cdot k_2)} \nonumber \\
\sigma_3 &= -\frac{(W+1) (\ell\cdot k_4 W-\ell\cdot k_2) (W k_1\cdot k_2+\ell\cdot k_2) (k_2\cdot k_3+\ell\cdot k_4 W)}{W (W
   (k_1\cdot k_2-\ell\cdot k_4)-k_2\cdot k_3+\ell\cdot k_2) (\ell\cdot k_4 (W (\ell+k_1)\cdot k_2+\ell\cdot k_2)+ k_2\cdot k_3\,\ell\cdot k_2)} \nonumber \\
\sigma_4 &= \frac{\ell\cdot k_4 \,(W+1) (W k_1\cdot k_2+\ell\cdot k_2)}{\ell\cdot k_4 (W
   (\ell+k_1)\cdot k_2+\ell\cdot k_2)+ k_2\cdot k_3\,\ell\cdot k_2}  
\end{align}
where $W$ can take the two values
\begin{equation}
W^{(\pm)} = \frac{k_1\cdot k_2\,\ell\cdot k_2-k_2\cdot k_4\,\ell\cdot k_3+k_2\cdot k_3\,\ell\cdot k_4-2 \ell\cdot k_2\,\ell\cdot k_4 \pm \sqrt{4\prod_{i=1}^4 \ell\cdot k_i + \det U}}{2 \ell\cdot k_4
   (k_1\cdot k_2+\ell\cdot (k_1+k_2))},
\end{equation}
\begin{equation}
U = \left( \begin{matrix}
k_i\cdot k_j \;\;& \ell\cdot k_i\\
\ell\cdot k_j  \;\;&  0
\end{matrix} \right) \quad i,j=1,2,3.
\end{equation}
The expressions for $\sigma_ i$ solve $f_2=f_4=0$ for any $W$. The expression for $W$ is then determined by solving $f_3=0$, which takes a quadratic form.

In the case of more general theories, as discussed in section \ref{sec:general-form-one}, there are two additional solutions contributing at four points, in agreement with $(n-2)!$, so that the total number of solutions is $(n-1)!-(n-2)!$. The `regular' solutions are the ones described above, but we should now express them in a different $SL(2,\mathbb{C})$ gauge, where we don't fix both $\sigma_{\ell^+}$ and $\sigma_{\ell^-}$. Let us use coordinates $\sigma'$ such that $(\sigma'_1,\sigma'_2,\sigma'_3)=(0,1,\infty)$. Then we obtain the two `regular' solutions from the expressions above through the change of coordinates
$$
\sigma' = \frac{\sigma_{23}}{\sigma_{21}} \, \frac{\sigma-\sigma_1}{\sigma-\sigma_3}.
$$
For the `singular' solutions, we have $\sigma'_{\ell^+}=\sigma'_{\ell^-}$. The remaining $\sigma'_i$ must satisfy the tree-level scattering equations, so that in our choice $\sigma'_4=-k_1\cdot k_4 / k_1\cdot k_2$. The two solutions for $\sigma'_\ell$ are then determined by
$$
\ell\cdot k_3\,{\sigma'_\ell}^2 + \left( \ell\cdot (k_1+k_4) + \sigma'_4\, \ell\cdot (k_1+k_2) \right) \,\sigma'_\ell  - \sigma'_4\, \ell\cdot k_1 =0.
$$

\section{Motivation from Ambitwistor heterotic models}
\label{sec:motiv-form-ambitw}

The single trace sector of the heterotic ambitwistor model was used to
derive the CHY formulae for gluon amplitudes in
eq.~\cite{Mason:2013sva}. It was however noted that generically these
amplitudes contained unphysical would-be gravitational degrees of
freedom, leading in particular to multi-trace interactions, absent
from Yang-Mills theories. At one loop, the presence of these would be
gravitational interactions leads to a double pole at the boundary of
the moduli space $dq/q^{2}$, coming from the bosonic sector of the
theory. In string theory, the level matching prevents these tachyonic
modes from propagating, and heterotic models were used to write down a
set of rules to compute gluon amplitudes in the 90's
\cite{Bern:1991aq}. Here this double pole simply renders the theory
ill-defined.

One could hope that a subsector of the theory may be well defined at
one-loop, just like at tree-level, but this is not the case.
Even by restricting to the single-trace sector, one does
not automatically decouples these additional states (just think about a
tree-level single-trace connected by an internal graviton in two points)
so we have to be more careful when attempting to extract a portion of
the heterotic amplitude. Let us start by writing it out;

\begin{equation}
  \label{eq:het-one-loop}
  \mathcal{M}^{\mathrm{n-gluon}\,(!!)}_{\mathrm{1-loop}}= \int
  \frac{dq}q \prod_{i}d z_{i} 
 \frac12 \sum_{\alpha}
  Z^{\mathrm{Het}}_{\bd\alpha}
  \mathrm{Pf}{(M^{\mathrm{col}}_{\bd\beta})}\times 
  \frac12\sum_{\bd\beta}
  Z_{\bd\beta}\mathrm{Pf}({M}_{\bd \beta} )
\end{equation}
where the symbol $(!!)$ is here to emphasize that this is not a well
defined amplitude in a well defined theory, but still we shall try to
extract parts of it below.
The matrix $M$ is the ``kinematical'' one of
eq.~\eqref{eq:Malpha}, while the partition functions $Z_{\bd \alpha}$
were defined in eq.~\eqref{eq:pt-funs}
The new ingredient here is the colour part which contains a partition
function for the $32$ Majorana-Weyl fermions that realise the current
algebra and the colour Pfaffian coming from Wick contractions between
them. The partition functions are given by
\begin{equation}
  \label{eq:het-partfun}
 Z^{\mathrm{Het}}_{\bd\alpha}=\frac{\theta_{\alpha}(0|\tau)^{16}}{\eta^{16}(\tau)}\frac{1}{\eta(\tau)^{8}}\,,
\end{equation}
The ``color'' Pfaffian is built out by application of Wick's theorem
in a standard way on the gauge currents
\begin{equation}
  \label{eq:gauge-currents}
  J^{a}=T^{a}_{ij}\psi^{i}\psi^{j}, \quad i,j=1,\ldots,16\,,
\end{equation}
using the fermion propagator
\begin{equation}
  \langle \psi^{i}(z)\psi^{j}(0)\rangle_{\bd \alpha} = S_{\bd \alpha}(z|\tau)
\end{equation}
in the spin structure $\bd\alpha$. In all we have the Pfaffian of a
matrix whose elements are $S_{\bd\alpha}(z_{i})-z_{j}|\tau)T^{a}_{i,j}$.
  However, contrary to the case of kinematics, were the Pfaffian
  structure is somewhat interesting (although it is very hard to read
  off the action of supersymmetry on them), in the case of colour we
  are more used to the colour ordering decomposition. For this reason
  and the one above on decoupling as many gravity states as possible,
  we shall from now on restrict our attention to one particular term
  in this Pfaffian, a single trace one, say
  $\mathrm{Tr}(T^{1}T^{2}T^{3}T^{4})$, such that
\begin{equation}
  \label{eq:pftosingletrace}
  \mathrm{Pf}_{\alpha}\to \prod_{i=1}^{4}
  S_{\alpha}(z_{i}-z_{i+1}|\tau) \mathrm{Tr}(T^{1}T^{2}T^{3}T^{4})\,,\quad z_{5}\equiv z_{1}\,.
\end{equation}

To apply the IBP procedure and write down the integrands down to the
sphere, we need the following $q$-expansions;
\begin{equation}
  \label{eq:het-partfun-qexp}
  \begin{aligned}
    \frac{\theta_{2}(0|\tau)^{16}}{\eta^{24}(\tau)} &= 2^{16}
    q+O(q^{2}),\\
    \frac{\theta_{3}(0|\tau)^{16}}{\eta^{24}(\tau)} &= \frac 1q +\frac{32}{\sqrt{q}}+504+  O(q),\\
    \frac{\theta_{4}(0|\tau)^{16}}{\eta^{24}(\tau)} &= \frac 1q -\frac{32}{\sqrt{q}}+504+O(q),\\
  \end{aligned}
\end{equation}
Since the Pfaffians, made of Szeg\H{o} kernels, are holomorphic in
$q$, one sees straight away that the spin structure 2 in the color
side does not contribute after the IBP. Using the $q$-expansions of
teh Szeg\H{o} kernels already given in the text,
\begin{align}
  \label{eq:szego-exp}
  S_{3} = \frac{\pi }{\sin (\pi  z)}+4 \pi  \left(-\sqrt{q}+q\right)
  \sin (\pi  z)\\
S_{4}=\frac{\pi }{\sin (\pi  z)}+4 \pi  \left(\sqrt{q}+q\right) \sin (\pi  z)
\end{align}
(in their torus parametrization), we have that
$Z_{3}^{\mathrm{Het}}\mathrm{Pf}(M^{\mathrm{col}})_{3}\big|_{q^{-1/2}}=
-Z_{4}^{\mathrm{Het}}\mathrm{Pf}(M^{\mathrm{col}})_{4}\big|_{q^{-1/2}}$.

Thus the integrand of \eqref{eq:het-one-loop} has a double pole in $q$
with coefficient $2\mathrm{Pf}_{3}\big|_{q^{0}}$ (we used $\mathrm{Pf}_{3}\big|_{q^{0}}=\mathrm{Pf}_{4}\big|_{q^{0}}$)
and a single pole in $q$, given by the leading order piece. 

The terms composing the leading order piece are of several kinds;
\begin{align}
  \frac{32}{\sqrt{q}}&\to 32 \frac{\sin(\pi
                      z_{12})} {\sin(\pi z_{23}) \sin(\pi
                      z_{34}) \sin(\pi z_{41})} + \mathrm{
                      cyclic} \\
  \frac 1q & \to  \frac{\sin(\pi
             z_{12})} {\sin(\pi z_{23}) \sin(\pi
             z_{34}) \sin(\pi z_{41})}  +\frac{\sin^{2}}{\sin^{2}} + \mathrm{
             perms}\\
  504 & \to \frac{504}{\sin(\pi
                      z_{12})\sin(\pi z_{23}) \sin(\pi
                      z_{34}) \sin(\pi z_{41})}
\end{align}
where $\frac{\sin^{2}}{\sin^{2}}$ indicate terms of the form
$\frac{\sin(\pi z_{12}) \sin(\pi z_{34})}{\sin(\pi z_{13}) \sin(\pi
  z_{24}) } $.
After the usual change of variable, the $"\sin/\sin^{3}"$ terms give
rise to the PT part of our beloved YM integrands given in
eq.~\eqref{eq:PTloop-def} (including the reversed ones);
\begin{equation}
  \label{eq:sin-sincube}
  \frac{1}{\sigma_{1}\sigma_{2}\sigma_{3}\sigma_{4}}\frac{\sin(\pi
    z_{12})} {\sin(\pi z_{23}) \sin(\pi
    z_{34}) \sin(\pi z_{41})} =
  \frac{\sigma_{41}}{\sigma_{1}\sigma_{4}\sigma_{12} \sigma_{23}\sigma_{34}}
\end{equation}
where an additional factor of
$(\sigma_{1}\sigma_{2}\sigma_{3}\sigma_{4})^{-1}$ has been taken from
the measure $\prod d\sigma_{i}/\sigma_{i}^{2}$. Note also that the
counting for these terms produces a numerical factor of $(32-1)=31$,
which, after suitable counting of the powers of 2, builds up 
\begin{equation}
  496 = 2^{4}\times 31
\end{equation}
which is the dimension of the adjoint of $SO(32)$. The fact that loops
in gauge theories come with a factor of $N$ at leading order is
well-known (the gluons are in the adjoint representation of the gauge
group in these models).

However, we find additional terms. We haven't been able to determine
their origin with precision, but we suspect that they could originate
from bi-adjoint scalars running in the loop, if they are not simple
artefacts of the inconsistency of the model.

\section{The NS part of the Integrand}\label{sec:MNS}
This appendix provides the poof for the equivalence of the two expressions for the NS sector of the integrand;
\begin{equation}
 \cI^{\text{NS}}\equiv(d-2)\pf(M_3)\,\big|_{q^0}+\pf(M_3)\,\big|_{\sqrt{q}} =\sum_r\pf'(M_{\text{NS}}^r)\,,
\end{equation}
with $\pf'(M_{\text{NS}})= \frac{1}{\sigma_{\ell^+\,\ell^-}}\,\pf({M_{\text{NS}}}_{(\ell^+,\ell^-)})$ and $d$ denoting the dimension of the space-time. Recall the definition of the matrix $M_{\text{NS}}$ given in \cref{sec:analys-indiv-gso};
\begin{equation}
 M_{\text{NS}}^r=\begin{pmatrix}A & -C^T\\ C & B \end{pmatrix}
\end{equation}
where the elements of $M_{\text{NS}}^r$ were defined by
\begin{align*}
 &A_{\ell^+\ell^+}=A_{\ell^-\ell^-}=0 && A_{\ell^+\ell^-}=0 && A_{\ell^+ i}=\frac{\ell\cdot k_i}{\sigma_{\ell^+i}} && A_{\ell^- i}=-\frac{\ell\cdot k_i}{\sigma_{\ell^-i}}\\
 & && && A_{ij}=\frac{k_i\cdot k_j}{\sigma_{ij}} && A_{ii}=0\\
 &B_{\ell^+\ell^+}=B_{\ell^-\ell^-}=0 && B_{\ell^+\ell^-}=\frac{d-2}{\sigma_{\ell^+\ell^-}} && B_{\ell^+ i} =\frac{\epsilon^r\cdot\epsilon_i}{\sigma_{\ell^+i}} && B_{\ell^- i}=\frac{\epsilon^r\cdot \epsilon_i}{\sigma_{\ell^-i}}\\
 & && && B_{ij}=\frac{\epsilon_i\cdot \epsilon_j}{\sigma_{ij}} && B_{ii}=0\\
 &C_{\ell^+\ell^+}=-\epsilon^r\cdot P(\sigma_{\ell^+}) && C_{\ell^+\ell^-}=-\frac{\epsilon^r\cdot\ell}{\sigma_{\ell^+\ell^-}} && C_{\ell^+ i}=\frac{\epsilon^r\cdot k_i}{\sigma_{\ell^+i}} && C_{i\ell^+}=\frac{\epsilon_i\cdot \ell}{\sigma_{i\ell^+}}\\
 &C_{\ell^-\ell^-}=-\epsilon^r\cdot P(\sigma_{\ell^-}) && C_{\ell^-\ell^+}=\frac{\epsilon^r\cdot\ell}{\sigma_{\ell^-\ell^+}} && C_{\ell^- i}=-\frac{\epsilon^r\cdot k_i}{\sigma_{\ell^-i}} && C_{i\ell^-}=-\frac{\epsilon_i\cdot \ell}{\sigma_{i\ell^-}}\\
 & && && C_{ij}=\frac{\epsilon_i\cdot k_j}{\sigma_{ij}} && C_{ii}=-\epsilon_i\cdot P(\sigma_i)\,.
\end{align*}
Using the recursive definition of the Pfaffian,
\begin{equation}
 \pf(M)=\sum_{j\neq i}(-1)^{i+j+1+\theta(i-j)}m_{ij}\,\pf(M_{(ij)})\,,
\end{equation}
we can expand the reduced Pfaffian $\pf'(M_{\text{NS}}^r)= \frac{1}{\sigma_{\ell^+\,\ell^-}}\,\pf({M_{\text{NS}}^r}_{(\ell^+,\ell^-)})$, in the remaining rows associated to $\ell^+$ and $\ell^-$.\footnote{For convenience, we have chosen to remove the rows and columns associated to $\ell^+$ and $\ell^-$ in the reduced matrix.} 
\begin{align*}
 \pf'(M_{\text{NS}}^r)=&\frac{1}{\sigma_{\ell^+\ell^-}} B_{\ell^+\ell^-} \,\pf(M_{(\ell^+\ell^-{\ell^+}'{\ell^-}')}) \\
 & + \frac{1}{\sigma_{\ell^+\ell^-}} \hspace{-7pt}\sum_{i,j\neq {\ell^+}',{\ell^-}'} \hspace{-12pt}(-1)^{1+i+j+\theta(n+1-i)+\theta(n+2-j)+\theta(i-j)} m_{{\ell^+}'i} m_{{\ell^-}'j}\, \pf(M_{(\ell^+\ell^-{\ell^+}'{\ell^-}'ij)})\,.
\end{align*}
To briefly comment on the notation used for $M_{\text{NS}}$, rows and columns in $\{1,\dots,n+2\}$ are denoted by indices $i$, whereas we use the conventional $i'=n+2+i$ for rows in $\{n+3,\dots,2(n+2)\}$. Now note that after summing over $r$, the entries and prefactors simplify,
\begin{align*}
 (-1)^{1+i+j+\theta(n+1-i)+\theta(n+2-j)+\theta(i-j)}&=\begin{cases} (-1)^{1+i+j+\theta(i-j)} & i,j\in I\,,\\
								    (-1)^{1+i+j+\theta(i-j)} & i,j\in I'\,,\\
								    (-1)^{i+j} & i\in I',\, j\in I\,,
                                                      \end{cases}\\
\sum_r m_{{\ell^+}'i} m_{{\ell^-}'j}&=\begin{cases} \frac{k_i\cdot k_j}{\sigma_{\ell^+i}\sigma_{\ell^-j}} & i,j\in I\,,\\
				       \frac{\epsilon_i\cdot \epsilon_j}{\sigma_{\ell^+i}\sigma_{\ell^-j}} & i,j\in I'\,,\\
				       \frac{\epsilon_i\cdot k_j}{\sigma_{\ell^+i}\sigma_{\ell^-j}} & i\in I',\, j\in I\,,
                         \end{cases}
\end{align*}
where $I=\{1,\dots,n\}$, $I'=\{1',\dots,n'\}$. Moreover, using $M_{(\ell^+\ell^-{\ell^+}'{\ell^-}')}=M_3$ and choosing for simplicity $\sigma_{\ell^+}=0$, $\sigma_{\ell^-}=\infty$, the result further simplifies to
\begin{equation}
 \begin{split}
 \sum_r \pf'(M_\text{NS}^r)=(d-2)\,\pf(M_3)& + \sum_{i,j\in I}(-1)^{1+i+j+\theta(i-j)} \frac{k_i\cdot k_j}{\sigma_i} \,\pf({M_3}^{ij}_{ij})\\
  &+ \sum_{i,j\in I'}(-1)^{1+i+j+\theta(i-j)} \frac{\epsilon_i\cdot \epsilon_j}{\sigma_i} \,\pf({M_3}^{i'j'}_{i'j'})\\
  &+\sum_{i\in I',j\in I}(-1)^{i+j} \left(\frac{\epsilon_i\cdot k_j}{\sigma_i}-\frac{\epsilon_i\cdot k_j}{\sigma_j}\right) \,\pf({M_3}^{i'j}_{i'j})\,.
 \end{split}
\end{equation}
To see that, as claimed above,
\begin{equation}
  \sum_r \pf'(M_\text{NS}^r)=(d-2)\, \pf(M_3) \big|_{q^0}+ \pf(M_3) \,\big|_{\sqrt{q}}\,,
\end{equation}
expand the Pfaffian on the RHS to order $\sqrt{q}$, using
\begin{equation}
 \pf(M)=\sum_{\alpha\in\Pi}\text{sgn}(\alpha) m_{i_1j_1}\dots m_{i_nj_n}=\frac{1}{2}\sum_{i_k,j_k}(-1)^{1+i_k+j_k+\theta(i_k-j_k)}m_{i_kj_k}\pf(M_{i_kj_k}^{i_kj_k})\,.
\end{equation}
This leads to
\begin{align*}
 \pf(M_3) \,\big|_{q^{1/2}}&=\frac{1}{2}\sum_{i_k,j_k}(-1)^{1+i_k+j_k+\theta(i_k-j_k)}m_{i_kj_k} \,\big|_{\sqrt{q}}\,\,\pf({M_3}_{i_kj_k}^{i_kj_k}) \,\big|_{q^{0}}\\
 &= \sum_{i,j\in I} (-1)^{1+i+j+\theta(i-j)} \frac{k_i\cdot k_j}{\sigma_i} \,\pf({M_3}^{ij}_{ij})\\
  &\qquad+ \sum_{i,j\in I'}(-1)^{1+i+j+\theta(i-j)} \frac{\epsilon_i\cdot \epsilon_j}{\sigma_i} \,\pf({M_3}^{i'j'}_{i'j'})\\
  &\qquad +\sum_{i\in I',j\in I}(-1)^{i+j} \left(\frac{\epsilon_i\cdot k_j}{\sigma_i}-\frac{\epsilon_i\cdot k_j}{\sigma_j}\right) \,\pf({M_3}^{i'j}_{i'j})\,.
\end{align*}
The diagonal terms $C_{ii}=\epsilon_i\cdot P(\sigma_i)$ do not contribute since $P(\sigma)$ exclusively contains terms of the form $k_i \tilde{S}_1\sim k_i \frac{\theta'_1}{\theta_1}$, which only contribute at higher order in $q$. This concludes the proof of \cref{eq:PfNs-def}.

\section{Dimensional reduction}
\label{sec:dimens-reduct}

In this section we discuss considerations that are general and
somewhat out of the scope of the main article.  The point is to
discuss how can one dimensionally reduce the ambitwistor string to $d$
dimensions. At the core of these considerations is the work of
ACS~\cite{Adamo:2014wea} where the ambitwistor string was be
formulated in generic (on-shell) curved spaces; toroidal
compactifications are just a subcategory of the latter spaces.

In the usual string compactified on a circle of radius $R$, wrapping
modes or worldsheet instantons are solutions that obey the periodicity
conditions $X=X+2\pi m R$. Their classical values are given by
$X_{class}=2\pi R(n \xi_{1}+m \xi_{2})$ where $\xi_{1,2}$ are the
worldsheet coordinates, such that $z=\xi_{1}-i\xi_{2}$. This cannot be
made holomorphic, except if $n=m=0$, therefore none of these can
contribute in the Ambitwistor string, holomorphic by essence.
This may not exclude the possibility of having other type of more
exotic instantons, as mentioned in the final section of
ref.~\cite{Adamo:2014wea}, but we shall proceed here and assume that
none of these are generated.
In total, the Kaluza-Klein reduction of the amplitude
above~\eqref{e:graveven} is simply obtained by replacing the
$10$-dimensional loop momentum integral by a $d$-dimensional one and a
$(10-d)$-dimensional discrete sum
\begin{align}
\label{eq:KKreduction}
  \int d^{10} \ell \to \frac{1}{R^{10-d}} \sum_{\bd{n}\in \mathbb{Z}^{{10-d}}}
  \int d^{d}\ell_{(d)}
\end{align}
where $\bd n$ is an integer valued $(10-d)$-dim vector. A $10-d$ torus
with $10-d$ different radii $R_{1},\ldots,R_{10-d}$ is dealt with at
the cost of minor obvious modifications of the previous expression.

The loop momentum square is then given by
\begin{align}
\label{eq:lsqKK}
\ell^{2}= \ell_{(d)}^{2}+\frac{ {\bd n}^{2}}{R^{2}}
\end{align}

In this way, the transformation rule $\ell\to\tau\ell$ of the loop
momentum after a modular transformation is generalized to the compact
dimensions by demanding $R\to R/\tau$ and the integral is still
modular invariant, in the sense of~\cite{Adamo:2013tsa}.

Ultimately, we take the radius $R$ of the torus to zero in order to
decouple the KK states. In this limit, $\ell$ simply becomes
$\ell_{d}$  wherever it appears, therefore this process
is achieved by, loosely speaking, restricting the loop momentum integral
by hand.

In conclusion, standard compactification techniques of string theory
on tori and orbifolds thereof apply straightforwardly. 

\bibliography{../twistor-bib}  
\bibliographystyle{JHEP}

\end{document}